\keywords{multiplayer non-zero-sum games played on graphs; quantitative reachability objectives; subgame perfect equilibria; constrained existence problem}
\DeclareMathOperator{\Cost}{Cost}
\DeclareMathOperator{\Plays}{Plays}
\DeclareMathOperator{\Hist}{Hist}
\DeclareMathOperator{\Succ}{Succ}
\DeclareMathOperator{\extCost}{Cost}
\DeclareMathOperator{\extCostib}{Cost}
\newcommand\extCosti[1]{\extCostib_{#1}}
\def\extended{X}
\DeclareMathOperator{\extGame}{\mathcal{X}}
\def\extV{\ensuremath{V^\extended}}
\def\extE{\ensuremath{E^\extended}}
\def\extVi#1{\ensuremath{V^\extended_{#1}}}
\def\extFi#1{\ensuremath{F^\extended_{#1}}}
\def\extG{X}
\def\regionGraphI#1{\ensuremath{X^{#1}}}
\def\regionGraphGeqI#1{\ensuremath{X^{\geq #1}}}
\def\rest#1#2{\ensuremath{#1_{\restriction#2}}}
\def\outcome#1#2{\ensuremath{\langle #1 \rangle_{#2}}}
\newcommand{\C}[1]{\ensuremath{\mathbb{C}(\lambda^{#1})}}
\newcommand{\N}[1]{\ensuremath{#1 \in \mathbb{N}}}
\newcommand{\mFR}[2]{\ensuremath{\mathrm{mR}(\lambda^{#1}_{#2})}}
\def\maxFR{\ensuremath{\mathrm{mR}(\lambda)}}
\newcommand{\dom}[2]{\ensuremath{\mathrm{Dom}(\lambda^{#1}_{#2})}}
\begin{document}

\title[The Complexity of SPEs in Quantitative Reachability Games]{The Complexity of Subgame Perfect Equilibria in Quantitative Reachability Games}
\titlecomment{{\lsuper*}Work partially supported by the PDR project \emph{Subgame perfection in graph games} (F.R.S.-FNRS), the ARC project \emph{Non-Zero Sum Game Graphs: Applications to Reactive Synthesis and Beyond} (F\'ed\'eration Wallonie-Bruxelles), the EOS project \emph{Verifying Learning Artificial Intelligence Systems} (F.R.S.-FNRS \& FWO), and the COST Action 16228 \emph{GAMENET} (European Cooperation in Science and Technology).}

\author[T.~Brihaye]{Thomas Brihaye\rsuper{a}}
\author[V.~Bruy\`ere]{V\'eronique Bruy\`ere\rsuper{a}}
\author[A.~Goeminne]{Aline Goeminne\rsuper{{a,b}}}
\author[J.-F.~Raskin]{\texorpdfstring{\\}{}Jean-Fran\c{c}ois Raskin\rsuper{b}}
\author[M.~van den Bogaard]{Marie van den Bogaard\rsuper{b}\texorpdfstring{\vspace{-2em}}{}}

\address{\lsuper{a}Universit\'e de Mons (UMONS), Belgium}
\email{\{thomas.brihaye,veronique.bruyere,aline.goeminne\}@umons.ac.be}
\address{\lsuper{b}Universit\'e libre de Bruxelles (ULB), Belgium}
\email{\{jraskin,marie.van.den.bogaard\}@ulb.ac.be}

%% etc.

%% required for running head on odd and even pages, use suitable
%% abbreviations in case of long titles and many authors:

%%%%%%%%%%%%%%%%%%%%%%%%%%%%%%%%%%%%%%%%%%%%%%%%%%%%%%%%%%%%%%%%%%%%%%%%%%%

%% the abstract has to PRECEDE the command \maketitle:
%% be sure not to issue the \maketitle command twice!

\begin{abstract}
  \noindent We study multiplayer quantitative reachability games played on a finite directed graph, where the objective of each player is to reach his target set of vertices as quickly as possible. Instead of the well-known notion of Nash equilibrium (NE), we focus on the notion of subgame perfect equilibrium (SPE), a refinement of NE well-suited in the framework of games played on graphs. It is known that there always exists an SPE in quantitative reachability games and that the constrained existence problem is decidable. We here prove that this problem is PSPACE-complete. To obtain this result, we propose a new algorithm that iteratively builds a set of constraints characterizing the set of SPE outcomes in quantitative reachability games. This set of constraints is obtained by iterating an operator that reinforces the constraints up to obtaining a fixpoint. With this fixpoint, the set of SPE outcomes can be represented by a finite graph of size at most exponential. A careful inspection of the computation allows us to establish PSPACE membership.
\end{abstract}

\maketitle

%% start the paper here:
\section*{Introduction}\label{S:one}

  While two-player zero-sum games played on graphs are the most studied model to formalize and solve the reactive synthesis problem~\cite{PnueliR89}, recent work has considered non-zero-sum extensions of this mathematical framework, see e.g.~\cite{KHJ06,FismanKL10,BRS14,KupfermanPV14,BRS-concur15,brenguier_et_al:LIPIcs:2016:6877,DBLP:conf/icalp/ConduracheFGR16,brenguier_et_al:LIPIcs:2017:7806,DBLP:conf/csl/BassetJPRB18}, see also the surveys~\cite{GU08,BrenguierCHPRRS16,Bruyere17}. In the zero-sum game approach, the system and the environment are considered as \emph{monolithic} and fully \emph{adversarial} entities. Unfortunately, both assumptions may turn to be too strong. First, the reactive system may be composed of several components that execute concurrently and have their own purpose. So, it is natural to model such systems with \emph{multiplayer} games with each player having his own objective. Second, the environment usually has its own objective too, and this objective is usually not the negation of the objective of the reactive system as postulated in the zero-sum case. Therefore, there are instances of the reactive synthesis problem for which no solution exists in the zero-sum setting, i.e.\ no winning strategy for the system against a completely antagonistic environment, while there exists a strategy for the system which enforces the desired properties against all \emph{rational} behaviors of the environment pursuing its own objective.

While the central solution concept in zero-sum games is the notion of \emph{winning strategy}, it is well known that this solution concept is not sufficient to reason about non-zero-sum games. In non-zero-sum games, notions of equilibria are used to reason about the rational behavior of players. The celebrated notion of \emph{Nash equilibrium} (NE)~\cite{nash50} is one of the most studied. A profile of strategies is an NE if no player has an incentive to deviate, i.e.\ change his strategy and obtain a better reward, when this player knows that the other players will be playing their respective strategies in the profile. A well-known weakness of NEs in sequential games, which include infinite duration games played on graphs, is that they are subject to \emph{non-credible threats}: decisions in subgames that are irrational and used to threaten the other players and oblige them to follow a given behavior. To avoid this problem, the concept of \emph{subgame perfect equilibrium} (SPE) has been proposed, see e.g.~\cite{osbornebook}.
In the framework of systems subject to bugs, this means that the players will keep playing rationally in all situations, i.e.\ when no bug occurs and after any bug's occurrence.
SPEs are NEs with the additional property that they are also NEs in all subgames of the original game. While it is now quite well understood how to handle NEs algorithmically in games played on graphs~\cite{Ummels08,UmmelsW11,BPS13,DBLP:conf/icalp/ConduracheFGR16}, this is not the case for SPEs.

\subparagraph{\bf Contributions} In this paper, we provide an algorithm to decide in \emph{polynomial space} the constrained existence problem for SPEs in \emph{quantitative reachability games}. A quantitative reachability game is played by $n$ players on a finite graph in which each player has his own reachability objective. The objective of each player is to reach his target set of vertices as quickly as possible. In a series of papers, it has been shown that SPEs always exist in quantitative reachability games~\cite{DBLP:journals/corr/abs-1205-6346}, and that the set of outcomes of SPEs in a quantitative reachability game is a regular language which is effectively constructible~\cite{BrihayeBMR15}. As a consequence of the latter result, the constrained existence problem for SPEs is decidable. The previously mentioned results use the property that for quantitative reachability games, SPEs coincide with \emph{weak SPEs} and \emph{very weak SPEs}~\cite{BrihayeBMR15}. Weak (resp.\ very weak) SPE must be resistant to unilateral deviations of one player that differ from the original one on a finite number of histories only (resp.\ on the initial vertex only).

Unfortunately, the proof in~\cite{BrihayeBMR15} that establishes the regularity of the set of possible outcomes of SPEs in quantitative reachability games exploits a \emph{well-quasi order} for proving termination and it cannot be used to obtain a good upper bound on the complexity for the algorithm. Here, we propose a new algorithm and we show that this set of outcomes can be represented using an automaton of size at most exponential. It follows that the constrained existence problem for SPEs can be decided in PSPACE\@. We also provide a matching lower-bound showing that this problem is PSPACE-complete.

Our new algorithm iteratively builds a set of constraints that exactly characterizes the set of SPEs in quantitative reachability games. This set of constraints is obtained by iterating an operator that reinforces the constraints up to obtaining a fixpoint. A careful inspection of the computation allows us to establish PSPACE membership.

\subparagraph{\bf Related work}

Algorithms to reason on NEs in graph games are studied in~\cite{Ummels08} for $\omega$-regular objectives and in~\cite{UmmelsW11,BPS13} for quantitative objectives. Algorithms to reason on SPEs are given in~\cite{Ummels06} for $\omega$-regular objectives. Quantitative reachability objectives are not $\omega$-regular objectives. Reasoning about NEs and SPEs for $\omega$-regular specifications can also be done using strategy logics~\cite{DBLP:journals/iandc/ChatterjeeHP10,mogavero_et_al:LIPIcs:2010:2859}.

Other notions of rationality and their use for reactive synthesis have been studied in the literature: rational synthesis in cooperative~\cite{FismanKL10} and adversarial~\cite{KupfermanPV14} setting, and their algorithmic complexity has been studied in~\cite{DBLP:conf/icalp/ConduracheFGR16}. Extensions with imperfect information have been investigated in~\cite{DBLP:conf/lics/FiliotGR18}. Synthesis rules based on the notion of admissible strategies have been studied in~\cite{berwanger07,BRS14,BRS-concur15,brenguier_et_al:LIPIcs:2016:6877,brenguier_et_al:LIPIcs:2017:7806,DBLP:conf/csl/BassetJPRB18}.

The restricted class of deviating strategies used in very weak SPEs is a well-known notion that for instance appears in~\cite{kuhn53} with the one-step deviation property. Weak SPEs and very weak SPEs are equivalent notions, but there are games for which there exists a weak SPE but no SPE~\cite{BrihayeBMR15,SV03}. Nevertheless, (very) weak SPEs and SPEs are equivalent for quantitative reachability games, an important property used in the proofs of~\cite{BrihayeBMR15} and of this paper. The equivalence between SPEs and very weak SPEs is also implicitly used as a proof technique in a continuous setting in~\cite{Fudenberg83} and in a lower-semicontinuous setting in~\cite{FleschKMSSV10}.

In~\cite{Bruyere0PR17}, general conditions are given that guarantee the existence of a weak SPE\@. It follows that there always exists a weak SPE for games where players use a prefix-independent payoff function. The computational complexity of the constraint existence problem for weak SPEs in games with $\omega$-regular objectives is studied in~\cite{DBLP:journals/corr/abs-1809-03888}. In~\cite{DBLP:conf/rp/BrihayeBGT19}, the authors focus on relevant SPEs in qualitative and quantitative reachability games where a relevant SPE is either an SPE with cost profile as low as possible, or an SPE which maximizes the social welfare (both maximizes the number of players who reach their target set and minimize the sum of the costs of these players), or is Pareto optimal in the set of all SPE cost profiles. Notice that the obtained complexity results are based on the labeling technique proposed in our paper and that they are the same with or without lower bounds on the cost profiles.

Fixpoint techniques are used several papers to establish the existence of (weak) SPEs in some classes of games like~\cite{FleschKMSSV10,BrihayeBMR15,Bruyere0PR17,DBLP:journals/corr/abs-1809-03888}. However they cannot be used in our context to get the PSPACE complexity result.

\subparagraph{\bf Structure of the paper}

In Section~\ref{section:preliminaries}, we recall the notions of $n$-player graph games and (very weak/weak) SPEs, we introduce the notion of extendend games and we state the studied constrained existence problem. In Section~\ref{section:charac}, we provide a way to characterize the set of plays that are outcomes of SPEs and give an algorithm to construct this set. This algorithm relies on the computation of a sequence of labeling functions ${(\lambda_k)}_{k\in \mathbb{N}}$ until reaching a fixpoint $\lambda^*$ such that the plays which are $\lambda^*$-consistent are exactly the plays which are outcomes of SPEs. In Section~\ref{section:counterGraph}, given a labeling function $\lambda$, we introduce the notion of counter graph in which infinite paths correspond to $\lambda$-consistent plays. We also show that such a counter graph has an exponential size. In Section~\ref{section:PSPACEc}, using counter graphs, we prove the PSPACE-easiness of the constrained existence problem. We then prove the PSPACE-hardness of this problem. A conclusion is provided in the last section.

\medskip
This article is an extended version of an article that appeared in the Proceedings of CONCUR 2019~\cite{BrihayeBGRB19}. All results are proved while most of those proofs were omitted or very briefly sketched in the CONCUR Proceedings.

\section{Preliminaries}%
\label{section:preliminaries}

In this section, we recall the notions of quantitative reachability game and subgame perfect equilibrium. We also state the problem studied in this paper and our main result.

\subsection{Quantitative reachability games}

An \emph{arena} is a tuple $G = (\Pi, V, {(V_i)}_{i\in \Pi}, E) $ where $\Pi = \{ 1, 2, \ldots, n \}$ is a finite set of $n$ players, $V$ is a finite set of vertices with $|V| \geq 2$, ${(V_i)}_{i\in\Pi}$ is a partition of $V$ between the players, and $E \subseteq V \times V$ is a set of edges such that for all $v \in V$ there exists $v'\in V$ such that $(v,v')\in E$. Without loss of generality, we suppose that $|\Pi| \leq |V|$.

A \emph{play} in $G$ is an infinite sequence of vertices $\rho = \rho_0 \rho_1 \ldots$ such that for all $k \in \mathbb{N} \footnote{Throughout this document we assume that $\mathbb{N}$ contains $0$, \emph{i.e.,} $\mathbb{N}= \{0,1,2,\ldots \}.$}$, $(\rho_k, \rho_{k+1}) \in E$.  A \emph{history} is a finite sequence $h = h_0h_1 \ldots h_k$ with $k \in\mathbb{N}$ defined similarly. The \emph{length} $|h|$ of $h$ is the number $k$ of its edges. We denote the set of plays by $\Plays$ and the set of histories by $\Hist$ (when it is necessary, we use notation $\Plays_G$ and $\Hist_G$ to recall the underlying arena $G$). Moreover, the set $\Hist_i$ is the set of histories such that their last vertex $v$ is a vertex of player $i$, i.e.\ $v \in V_i$.

Given a play $\rho = \rho_0 \rho_1 \ldots \in \Plays$ and $k \in \mathbb{N}$, the prefix $\rho_0 \rho_1 \ldots \rho_k$ of $\rho$ is denoted by $\rho_{\leq k}$ and its suffix $\rho_k \rho_{k+1} \ldots$ is denoted by $\rho_{\geq k}$.
A play $\rho$ is called a \emph{lasso} if it is of the form $\rho = h\ell^\omega$ with $h\ell \in \Hist$.
%{\color{red} and where $h$ has a minimal length}.
Notice that $\ell$ is not necessary a simple cycle. %The \emph{length} of a lasso $h\ell^\omega$ is the length of $h\ell$.

Given an arena $G$, we denote by $\Succ(v) = \{v' \mid (v, v') \in E \}$ the set of \emph{successors} of $v$, for $v \in V$, and by $\Succ^*$ the transitive closure of $\Succ$.

A \emph{quantitative game} $\mathcal{G} = (G,{(\Cost_i)}_{i\in\Pi})$ is an arena equipped with a cost function profile $\Cost = {(\Cost_i)}_{i\in\Pi}$ such that each function $\Cost_i : \Plays \rightarrow \mathbb{R} \cup \{+\infty\}$ assigns a cost to each play. In a quantitative game $\mathcal G$, an initial vertex $v_0\in V$ is often fixed, and we call $(\mathcal{G}, v_0)$ an \emph{initialized game}. A play (resp.\ a history) of $(\mathcal{G},v_0)$ is then a play (resp.\ a history) of $\mathcal{G}$ starting in $v_0$. The set of such plays (resp.\ histories) is denoted by $\Plays(v_0)$ (resp.\ $\Hist(v_0)$).  We also use notation $\Hist_i(v_0)$ when these histories end in a vertex $v \in V_i$.

In this article we are interested in \emph{quantitative reachability games} such that each player has a target set of vertices that he wants to reach. The cost to pay is equal to the number of edges to reach the target set, and each player aims at minimizing his cost.

\begin{defi}[Quantitative reachability game]%
\label{def:cost}
A \emph{quantitative reachability game} is a tuple $\mathcal{G} = (G, {(F_i)}_{i\in \Pi}, {(\Cost_i)}_{i\in\Pi})$ such that
\begin{itemize}
    \item $G$ is an arena;
    \item for each $i \in \Pi$, $F_i \subseteq V$ is the target set of player $i$;
    \item for each $i \in \Pi$ and each $\rho = \rho_0 \rho_1 \ldots \in \Plays$, $\Cost_i(\rho)$ is equal to the least index $k$ such that $\rho_k \in F_i$, and to $+\infty$ if no such index exists.
\end{itemize}
\end{defi}

\noindent
In the rest of this document, we simply call such a game a \emph{reachability game}. Notice that the cost function used for reachability games can be supposed to be continuous in the following sense~\cite{DBLP:journals/corr/abs-1205-6346}. With $V$ endowed with the discrete topology and $V^\omega$ with the product topology, a sequence of plays ${(\rho_n)}_{n\in\mathbb{N}}$ converges to $\rho$  if every prefix of $\rho$ is a prefix of all $\rho_n$ except, possibly, finitely many of them. A cost function $\Cost_i$ is \emph{continuous} if whenever $\lim_{n\rightarrow +\infty} \rho_n = \rho$, we have that $\lim_{n\rightarrow +\infty} \Cost_i(\rho_n) = \Cost_i(\rho)$. In reachability games, the function $\Cost_i$ can be transformed into a continuous one as follows:
\begin{equation} \label{eq:contFunct}\Cost'_i(\rho) = 1- \frac{1}{\Cost_i(\rho)+1} \text{ if } \Cost_i(\rho) < +\infty, \text{ and } \Cost'_i(\rho) = 1 \text{ otherwise.}
\end{equation}

Notice that for the problem we study both cost functions ${(\Cost_i)}_{i\in\Pi}$ or ${(\Cost'_i)}_{i\in \Pi}$ are equivalent. Therefore throughout this paper we use both functions indifferently.

Given a quantitative game $\mathcal G$, a \emph{strategy} of a player $i\in \Pi$ is a function $\sigma_i: \Hist_i \rightarrow V$. This function assigns to each history $hv$, with $v \in V_i$, a vertex $v'$ such that $(v,v') \in E$. In an initialized game $(\mathcal{G},v_0)$, $\sigma_i$ needs only to be defined for histories starting in $v_0$. A play $\rho=\rho_0\rho_1\ldots$ is \emph{consistent} with  $\sigma_i$ if for all $\rho_k \in V_i$ we have that $\sigma_i(\rho_0 \ldots \rho_k) = \rho_{k+1}$. A strategy $\sigma_i$ is \emph{positional} if it only depends on the last vertex of the history, \emph{i.e.}, $\sigma_i(hv) = \sigma_i(v)$ for all $hv \in \Hist_i$. It is \emph{finite-memory} if it can be encoded by a finite-state machine.

Given a quantitative game $\mathcal G$, a \emph{strategy profile} is a tuple $\sigma = {(\sigma_i)}_{i\in \Pi}$ of strategies, one for each player. It is called positional (resp.\ finite-memory) if for all $i \in \Pi$, $\sigma_i$ is positional (resp.\ finite-memory).  Given an initialized game $(\mathcal{G}, v_0)$ and a strategy profile $\sigma$, there exists an unique play from $v_0$ that is consistent with each strategy $\sigma_i$. We call this play the \emph{outcome} of $\sigma$ and it is denoted by $\outcome{\sigma}{v_0}$. Let $c = {(c_i)}_{i \in \Pi} \in {(\mathbb{N}\cup\{+\infty\})}^{|\Pi|}$, we say that $\sigma$ is a strategy profile \emph{with cost} $c$ or that $\outcome{\sigma}{v_0}$ \emph{has cost} $c$ if $c_i = \Cost_i(\outcome{\sigma}{v_0})$ for all $i \in \Pi$.

\subsection{Solution concepts and constraint problem}

In the multiplayer game setting, the solution concepts usually studied are \emph{equilibria} (see~\cite{GU08}). We here recall the concepts of Nash equilibrium and subgame perfect equilibrium.

Let $\sigma = {(\sigma_i)}_{i\in \Pi}$ be a strategy profile in an initialized game $(\mathcal{G},v_0)$. When we highlight the role of player~$i$, we denote $\sigma$ by $(\sigma_i, \sigma_{-i})$ where $\sigma_{-i}$ is the profile ${(\sigma_j)}_{j\in \Pi \setminus \{i\}}$. A strategy $\sigma'_i \neq \sigma_i$ is a \emph{deviating} strategy of player~$i$, and it is a \emph{profitable deviation} for him if $\Cost_i(\outcome{\sigma}{v_0}) > \Cost_i(\outcome{\sigma'_i, \sigma_{-i}}{v_0})$. Hence a deviating strategy is profitable for player~$i$ if it leads to a smaller cost for him when the other players stick to their own strategy.

The notion of Nash equilibrium is classical: a strategy profile $\sigma$ in an initialized game $(\mathcal{G},v_0)$ is a \emph{Nash equilibrium} (NE) if no player has an incentive to deviate unilaterally from his strategy, i.e.\ no player has a profitable deviation. Formally, $\sigma$ is an NE if for each $i \in \Pi$ and each deviating strategy $\sigma'_i$ of player $i$, we have $\Cost_i(\outcome{\sigma}{v_0}) \leq \Cost_i(\outcome{\sigma'_i, \sigma_{-i}}{v_0})$.

When considering games played on graphs, a useful refinement of NE is the concept of \emph{subgame perfect equilibrium} (SPE) which is a strategy profile being an NE in each subgame. It is well-known that contrarily to NEs, SPEs avoid non-credible threats~\cite{GU08}. Formally, given a quantitative game ${\mathcal G} = (G,\Cost)$, an initial vertex $v_0$, and a history $hv \in \Hist(v_0)$, the initialized game $(\rest{\mathcal{G}}{h},v)$ is called a \emph{subgame} of $(\mathcal{G},v_0)$ such that $\rest{\mathcal{G}}{h} = (G, \rest{\Cost}{h})$ and $\Cost_{i\restriction h}(\rho) = \Cost_i(h\rho)$ for all $i \in \Pi$ and $\rho \in V^{\omega}$. Notice that $(\mathcal{G},v_0)$ is subgame of itself. Moreover if $\sigma_i$ is a strategy for player~$i$ in $(\mathcal{G},v_0)$, then $\sigma_{i\restriction h}$ denotes the strategy in $(\rest{\mathcal{G}}{h},v)$ such that for all histories $h'\in \Hist_i(v)$, $\sigma_{i\restriction h}(h') = \sigma_i(hh')$. Similarly, from a strategy profile $\sigma$ in $(\mathcal{G},v_0)$, we derive the strategy profile $\rest{\sigma}{h}$ in $(\rest{\mathcal{G}}{h},v)$.

\begin{defi}[Subgame perfect equilibrium]
	Let $(\mathcal{G},v_0)$ be an initialized game. A strategy profile $\sigma$ is a \emph{subgame perfect equilibrium} in $(\mathcal{G},v_0)$ if for all $hv \in \Hist(v_0)$, $\rest{\sigma}{h}$ is an NE in $(\rest{\mathcal{G}}{h},v)$.
\end{defi}

It is proved in~\cite{DBLP:journals/corr/abs-1205-6346} that there always exists an SPE in reachability games.

\begin{thmC}[\cite{DBLP:journals/corr/abs-1205-6346}]%
\label{thm:existence_SPE}
There exists an SPE in each initialized reachability game.
\end{thmC}

In this paper, we are interested in solving the following \emph{constraint problem}.

\begin{defi}[Constraint problem]
Given ($\mathcal{G},v_0)$ an initialized reachability game and two threshold vectors $x, y \in {(\mathbb{N}\cup \{+\infty\})}^{|\Pi|}$, the constraint problem is to decide whether there exists an SPE in $(\mathcal{G},v_0)$ with cost $c$ such that $x \leq c \leq y$, that is, $x_i \leq c_i \leq y_i$ for all $i \in \Pi$.
\end{defi}

Our main result is the following one:

\begin{thm}%
\label{thm:main}
The constraint problem for initialized reachability games is PSPACE-complete.
\end{thm}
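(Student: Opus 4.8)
The plan is to prove the two inequalities separately: PSPACE membership (the upper bound) and PSPACE-hardness (the lower bound). For the upper bound I would follow the characterization announced in the introduction. The first step is to exploit the equivalence, valid for reachability games, between SPEs and very weak SPEs, so that being an SPE outcome can be tested through a local, one-step-deviation condition rather than a global one. Concretely, I would attach to each vertex a \emph{label} recording, for every player, the cost constraints under which a play continuing from that vertex can still be completed into an SPE outcome. Starting from the trivial (most permissive) labeling, I would define an operator that tightens each label by requiring that, at every vertex controlled by player~$i$, no successor offers player~$i$ a strictly cheaper continuation than the one prescribed, i.e.\ no profitable one-step deviation survives. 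Iterating this operator yields a monotone decreasing sequence ${(\lambda_k)}_k$ that stabilizes at a fixpoint $\lambda^*$; the key correctness claim is that the $\lambda^*$-consistent plays are exactly the SPE outcomes, with existence (Theorem~\ref{thm:existence_SPE}) guaranteeing the fixpoint is non-trivial.

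Second, I would turn this characterization into a decision procedure. Because the costs that matter are bounded (any finite relevant cost is at most polynomial in the size of the game), each label ranges over a finite domain, so the fixpoint is reached after finitely many iterations and can be encoded by a \emph{counter graph} whose states are pairs consisting of a vertex together with a bounded counter configuration, and whose infinite paths are precisely the $\lambda^*$-consistent plays. The crucial quantitative point is that, although this graph has exponentially many states, each state admits a \emph{polynomial-size} description. The constraint problem then becomes: does the counter graph contain an infinite path (equivalently, a lasso, since finite-memory SPEs suffice) from the initial configuration whose induced cost profile $c$ satisfies $x \leq c \leq y$? This reachability-and-cycle question over an exponential but succinctly presented graph can be answered nondeterministically while storing only a constant number of polynomial-size configurations at a time; hence it lies in NPSPACE, and Savitch's theorem gives PSPACE membership.

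For the lower bound I would reduce from a canonical PSPACE-complete problem whose alternation structure matches the subgame-perfection requirement, most naturally the truth problem for quantified Boolean formulas (TQBF). The idea is to build a reachability game in which a variable-assignment phase is played first: gadgets for existentially quantified variables are controlled by a ``builder'' player who tries to realize a satisfying assignment, while universally quantified variables are resolved against the builder through the resistance-to-deviation condition that an SPE must satisfy in every subgame. A final phase lets players reach (or fail to reach) their target sets so that the cost constraints $x \leq c \leq y$ are met exactly when the evaluated formula is true. Checking that the constructed game is polynomial in the formula size and that its SPE outcomes correspond to winning plays of the evaluation game completes the reduction.

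The main obstacle, I expect, is the correctness-plus-size combination at the heart of the upper bound. The prior regularity proof established termination via a well-quasi-order and therefore yields no usable bound; the real work here is to replace that argument with an \emph{explicit} analysis showing both that the labeling fixpoint is reached within an exponential number of refinements and that each counter configuration stays polynomial in bit-size, while simultaneously proving that $\lambda^*$-consistency coincides exactly with being an SPE outcome. Neither inclusion is immediate: one direction needs that any profitable deviation can be localized to a single step, the other that a $\lambda^*$-consistent play can genuinely be extended to a full SPE strategy profile. Getting these bounds tight enough is precisely what upgrades decidability to PSPACE\@.
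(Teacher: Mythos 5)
Your architecture matches the paper's almost exactly --- equivalence of SPEs with very weak SPEs, a monotone labeling-fixpoint characterization of SPE outcomes, a counter graph whose infinite paths are the consistent plays, a lasso-guessing nondeterministic procedure plus Savitch, and a QBF reduction for hardness --- but two of the steps you treat as routine are precisely where the proof lives, and one of your quantitative claims is wrong. First, the assertion that ``any finite relevant cost is at most polynomial in the size of the game'' is not what can be established: the bound actually proved on the maximal finite value of $\lambda^*$ (Theorem~\ref{thm:bound_on_MR}) is $\mathcal{O}(|V|^{(|V|+3)\cdot(|\Pi|+2)})$, i.e.\ exponential, and obtaining even this bound is the technical heart of the matter (Lemma~\ref{lem:length_of_rho_1_in_C}, Proposition~\ref{prop:sup_cost_bound}, and a double induction over regions and iteration steps). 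Your closing paragraph concedes that establishing such bounds is ``the real work,'' so the proposal identifies the obstacle but does not overcome it; note that exponential values are still compatible with your polynomial bit-size claim, so the conclusion survives, but only once the exponential bound is actually proved. Second, your labeling is attached to the vertices of the given game, with no mention of the extended game of Definition~\ref{def:extGame} whose vertices $(v,I)$ record the set $I$ of players that have already reached their targets. This is not cosmetic: reachability costs are not prefix-independent, so a one-step-deviation condition on raw vertices does not characterize SPE outcomes, and the $I$-monotone region structure of the extended arena is what drives both the bound analysis and the bottom-up computation of the fixpoint.

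Third --- a gap in the algorithm itself --- your procedure guesses a lasso in the counter graph built from $\lambda^*$, but verifying a single edge of that graph requires knowing the values of $\lambda^*$, and nothing in your proposal explains how to compute them in polynomial space. In the paper this is Proposition~\ref{prop:regionalComputation}: the fixpoint is computed region by region, and each update step itself requires guessing lassos in the counter graphs of the earlier labelings $\lambda^k$, with recursion depth bounded by $|\Pi|$ thanks to $I$-monotonicity; without this, ``the counter graph is succinctly presented'' is an assumption, not a fact. Two smaller points: the restriction to lassos is justified by cycle surgery on valid paths (proof of Proposition~\ref{prop:PSPACE-easiness}), not by an appeal to ``finite-memory SPEs suffice,'' which is never established; and your hardness sketch, while aiming at the right reduction, omits the clause players whose one-shot deviations to private sinks are what actually forces an SPE outcome to encode a satisfying assignment (Proposition~\ref{prop:reachPSPACE-hard}).
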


The remaining part of the paper is devoted to the proof of this result. Let us first illustrate the introduced concepts with an example.

\begin{exa}%
\label{ex:fusee}
A reachability game $\mathcal{G} = (G, {(F_i)}_{i\in \Pi}, {(\Cost_i)}_{i\in\Pi})$ with two players is depicted in Figure~\ref{fig:fusee}. The circle vertices are owned by player~$1$ whereas the square vertices are owned by player~$2$. The target sets of both players are respectively equal to $F_1 = \{v_2\}$ (grey vertex), $F_2 = \{v_2,v_5\}$ (double circled vertices).

\begin{figure}[h!]
    \centering
    \scalebox{0.65}{
        \begin{tikzpicture}

        \node[draw,circle, accepting] (v5) at (0,0){$v_5$};
        \node[draw,circle] (v4) at (2,0){$v_4$};
        \node[draw,rectangle] (v0) at (4,0){$v_0$};
        \node[draw,circle] (v1) at (6,0){$v_1$};

        \node[draw,circle] (v6) at (8,0){$v_6$};
        \node[draw,circle] (v7) at (10,0){$v_7$};
        \node[draw,circle, fill=gray, accepting] (v2) at (7,1){$v_2$};
        \node[draw,circle] (v3) at (7,-1){$v_3$};

        \draw[->,double] (v5) to [bend left]  (v4);
        \draw[->] (v4) to [bend left]  (v5);

        \draw[->,double] (v4) to [bend left]  (v0);
        \draw[->] (v0) to [bend left]  (v4);

        \draw[->,double] (v0) to  (v1);
        \draw[->] (v1) to node [below right]{} (v3);
        \draw[->,double] (v1) to (v6);
        \draw[->,double] (v6) to (v7);
        \draw[->,double] (v7) to [bend right] (v2);

        \draw[->,double] (v2) to [bend right](v0);
        \draw[->,double] (v3) to [bend left](v0);

        \draw[->] (v4) to [loop above] (v4);

        \end{tikzpicture}
    }
    \caption{A quantitative reachability game: player 1 (resp.\ player 2) owns  circle (resp.\ square) vertices  and $F_1=\{v_2\}$ and $F_2 = \{v_2,v_5\}$.}%
    \label{fig:fusee}
\end{figure}
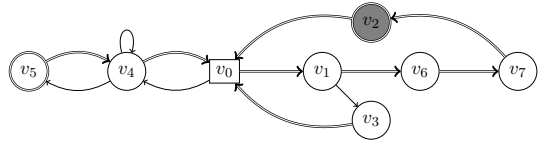

The positional strategy profile $\sigma = (\sigma_1, \sigma_2)$ is depicted by double arrows, its outcome in  $({\mathcal G},v_0)$ is equal to $\outcome{\sigma}{v_0} = {(v_0v_1v_6v_7v_2)}^\omega$ with cost $(4,4)$. Let us explain that $\sigma$ is an NE\@. Player~1 reaches his target set as soon as possible and has thus no incentive to deviate. Player~2 has no profitable deviation that allows him to reach $v_5$. For instance if he uses a deviating positional strategy $\sigma'_2$ such that $\sigma'_2(v_0) = v_4$, then the outcome of $(\sigma_1,\sigma'_2)$ is equal to ${(v_0v_4)}^\omega$ with cost $(+\infty,+\infty)$ which is not profitable for player~$2$.

One can verify that the strategy profile $\sigma$ is also an SPE\@. For instance in the subgame $(\rest{\mathcal{G}}{h},v_5)$ with $h = v_0v_4$, we have $\rho = \outcome{\rest{\sigma}{h}}{v_5} = v_5v_4{(v_0v_1v_6v_7v_2)}^\omega$ such that $\rest{\Cost}{h}(\rho) = \Cost(h\rho) = (8,2)$. In this subgame, with $\rho$, both players reach their target set as soon as possible and have thus no incentive to deviate.

Consider now the positional strategy profile $\sigma' = (\sigma'_1, \sigma'_2)$ such that $\sigma'_1(v_4) = v_0$, $\sigma'_1(v_1) = v_3$, and $\sigma'_2(v_0) = v_4$. Its outcome in $({\mathcal G},v_0)$ is equal to ${(v_0v_4)}^\omega$ with cost $(+\infty,+\infty)$. Let us explain that $\sigma'$ is an NE\@.
On one hand, since player~2 never goes to $v_1$, player~1 has no incentive to deviate, as his target is only accessible from $v_1$.
On the other hand, if player~2 deviates and chooses to go from $v_0$ to $v_1$, his cost is still $+\infty$ since player~1 goes to $v_3$.
However, the strategy profile $\sigma'$ is \emph{not} an SPE\@.
Indeed, it features a \emph{non-credible threat} by player~1: consider the history $v_0v_1$ and the corresponding subgame $(\rest{\mathcal{G}}{v_0},v_1)$.
In that case, player~1 has an incentive to deviate to reach $v_2$ to yield a cost of $4$.
Thus, the strategy profile $\rest{\sigma'}{v_0}$ is not an NE in the subgame $(\rest{\mathcal{G}}{v_0},v_1)$, and thus is not an SPE in $({\mathcal G},v_0)$.
\qed\end{exa}

\subsection{Weak SPE, very weak SPE and extended game}

In this section, we present two important tools that will be repeatedly used in the rest of this paper. First, we explain that in reachability games, the notion of SPE is equivalent to the simpler notion of very weak SPE\@. Second we present an extended version of a reachability game where the vertices are enriched with the set of players that have already visited their target sets along a history. Working with this extended game is essential to prove that the constraint problem for reachability games is in PSPACE\@.

We begin by recalling the concepts of weak and very weak SPE introduced in~\cite{BrihayeBMR15,Bruyere0PR17}. Let $(\mathcal{G},v_0)$ be an initialized game and $\sigma = {(\sigma_i)}_{i\in \Pi}$ be a strategy profile. Given $i \in \Pi$, we say that a strategy $\sigma'_i$ is \emph{finitely deviating} from $\sigma_i$ if  $\sigma'_i$ and $\sigma_i$ only differ on a finite number of histories, and that $\sigma'_i$ is \emph{one-shot deviating} from $\sigma_i$ if $\sigma'_i$ and $\sigma_i$ only differ on the initial vertex $v_0$. A strategy profile $\sigma$ is a \emph{weak NE} (resp.\ \emph{very weak NE}) in $(\mathcal{G},v_0)$ if, for each player $i\in \Pi$, for each finitely deviating (resp.\ one-shot) strategy $\sigma'_i$ of player~$i$ from $\sigma_i$, we have $\Cost_i(\outcome{\sigma}{v_0}) \leq \Cost_i(\outcome{\sigma'_i, \sigma_{-i}}{v_0})$. A strategy profile $\sigma$ is a \emph{weak SPE} (resp.\ \emph{very weak SPE}) in $(\mathcal{G},v_0)$ if, for all $hv \in \Hist(v_0)$, $\rest{\sigma}{h}$ is a weak (resp.\ very weak) NE in $(\rest{\mathcal{G}}{h},v)$.

From the given definitions, every SPE is a weak SPE, and every weak SPE is a very weak SPE\@. It is known that weak SPE and very weak SPE are equivalent notions and that there exist initialized games that have a weak SPE but no SPE;\@ nevertheless, all three concepts are equivalent for initialized reachability games~\cite{BrihayeBMR15,Bruyere0PR17}.

\begin{propC}[\cite{BrihayeBMR15,Bruyere0PR17}]%
\label{prop:SPE-vwSPE}
Let $({\mathcal G},v_0)$ be an initialized reachability game and $\sigma$ be a strategy profile in $({\mathcal G},v_0)$. Then $\sigma$ is an SPE if and only if $\sigma$ is a weak SPE if and only if $\sigma$ is a very weak SPE\@.
\end{propC}

Let us now recall the notion of extended game for a given reachability game $\mathcal{G}$ (see e.g.~\cite{DBLP:journals/corr/abs-1809-03888}).
The vertices $(v,I)$ of the extended game store a vertex $v \in V$ as well as a subset $I \subseteq \Pi$ of players that have already visited their target sets.

\begin{defi}[Extended game]%
\label{def:extGame}
Let $\mathcal{G} = (G, {(F_i)}_{i\in\Pi}, {(\Cost_i)}_{i\in \Pi})$ be a reachability game with an arena $G = (\Pi, V, {(V_i)}_{i\in \Pi}, E)$, and let $v_0$ be an initial vertex. The \emph{extended game} of $\mathcal{G}$ is equal to $\extGame = (\extG,  {(\extFi{i})}_{i\in\Pi}, {(\Cost^X_i)}_{i\in \Pi})$ with the arena $\extG = (\Pi, \extV, {(\extVi{i})}_{i\in\Pi}, \extE)$, such that:
\begin{itemize}
    \item $\extV = V \times 2^\Pi$
    \item $((v,I),(v',I')) \in \extE$ if and only if $(v,v')\in E$ and $I' = I \cup \{i \in \Pi \mid v' \in F_i \}$
    %\I(v')$;
    \item $(v,I) \in \extVi{i}$ if and only if $v\in V_i$
    \item $(v,I) \in \extFi{i}$ if and only if $i \in I$
    \item for each $\rho \in \Plays_X$, $\Cost^X_i(\rho)$ is equal to the least index $k$ such that $\rho_k \in \extFi{i}$, and to $+\infty$ if no such index exists.
\end{itemize}
The initialized extended game $(\extGame,x_0)$ associated with the initialized game $({\mathcal G}, v_0)$ is such that $x_0 = (v_0,I_0)$ with $I_0 = \{i \in \Pi \mid v_0 \in F_i \}$.
\end{defi}

Notice the way each target set $\extFi{i}$ is defined: if $v \in F_i$, then $(v,I) \in \extFi{i}$ but also $(v',I') \in \extFi{i}$ for all $(v',I') \in \Succ^*(v,I)$. In the remaining part, to avoid heavy notations, each cost function $\Cost^X_i$ will be simply written as $\Cost_i$.

The extended game of the reachability game of Figure~\ref{fig:fusee} is depicted in Figure~\ref{fig:fuseeExt}. We will come back to this example at the end of this section.

Let us state some properties of the extended game. First, notice that for each $\rho = (v_0, I_0)(v_1,I_1)\ldots \in \Plays_\extG(x_0)$, we have the next property called \emph{$I$-monotonicity}:
\begin{eqnarray} \label{eq:increasing}
I_{k}\subseteq I_{k+1} \quad\quad \mbox{for all } k \in \mathbb{N}.
\end{eqnarray}

Second, given an initialized game $({\mathcal G},v_0)$ and its extended game $({\extGame},x_0)$, there is a one-to-one correspondence between plays in $\Plays_G(v_0)$ and plays in $\Plays_\extG(x_0)$:
\begin{itemize}
\item from $\rho = \rho_0\rho_1 \ldots \in \Plays_G(v_0)$, we derive $\rho^\extended = (\rho_0,I_0)(\rho_1,I_1) \ldots \in \Plays_\extG(x_0)$ such that $I_k$ is the set of players~$i$ that have visited their target set $F_i$ along $\rho_{\leq k}$;
\item from $\rho = (v_0, I_0)(v_1,I_1)\ldots \in \Plays_\extG(x_0)$, we derive $\rho^G = v_0 v_1 \ldots \in \Plays_G(v_0)$ such that the second components $I_k$, $k \in \mathbb{N}$, are omitted.
\end{itemize}

\noindent
Third, given $\rho \in \Plays_G(v_0)$, we have that $\extCost(\rho^\extended) = \Cost(\rho)$, and conversely given $\rho  \in \Plays_\extG(x_0)$, we have that $\Cost(\rho^G) = \extCost(\rho)$.
It follows that
outcomes of SPE can be equivalently studied in $({\mathcal G},v_0)$ and in $({\extGame},x_0)$, as stated in the next lemma.

\begin{lem}%
\label{lem:equivSPE}
If $\rho$ is the outcome of an SPE in $({\mathcal G},v_0)$, then $\rho^\extended$ is the outcome of an SPE in $({\extGame},x_0)$ with the same cost. Conversely, if $\rho$ is the outcome of an SPE in $({\extGame},x_0)$, then $\rho^G$ is the outcome of an SPE in $({\mathcal G},v_0)$ with the same cost.
\end{lem}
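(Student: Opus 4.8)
The plan is to exploit the rigid structural correspondence between the two games and to reduce everything, via Proposition~\ref{prop:SPE-vwSPE}, to the level of very weak SPEs. First I would observe that the play bijection $\rho \mapsto \rho^\extended$ already noted extends to histories: since the component $I$ of an extended vertex is determined by the sequence of first components, every $h \in \Hist_G(v_0)$ lifts to a unique history $h^\extended \in \Hist_\extG(x_0)$, and forgetting the second components is the inverse map. Using this, I would transport strategy profiles. Given $\sigma$ in $(\mathcal{G},v_0)$, define $\Phi(\sigma)$ in $(\extGame,x_0)$ by $\Phi(\sigma)_i(h^\extended) = (\sigma_i(h), I')$, where $h \in \Hist_G(v_0)$ is the projection of $h^\extended$ and $I'$ is the forced second component of the chosen successor; since strategies need only be defined on histories reachable from the initial vertex, and the successors of $(v,I)$ in $\extGame$ are in bijection with the successors of $v$ in $G$, the map $\Phi$ and its obvious inverse $\Psi$ (forget second components) are well-defined mutually inverse bijections between strategy profiles. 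By construction $\outcome{\Phi(\sigma)}{x_0} = (\outcome{\sigma}{v_0})^\extended$, so outcomes correspond under $\rho \mapsto \rho^\extended$ and, by the cost-preservation property, costs are preserved.

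Since $\extGame$ is itself a reachability game, Proposition~\ref{prop:SPE-vwSPE} applies to both $(\mathcal{G},v_0)$ and $(\extGame,x_0)$, so it suffices to prove that $\sigma$ is a \emph{very weak} SPE in $(\mathcal{G},v_0)$ if and only if $\Phi(\sigma)$ is a very weak SPE in $(\extGame,x_0)$; this is what makes the argument tractable, as a very weak SPE only has to resist one-shot deviations. To establish this equivalence I would fix a history $hv \in \Hist_G(v_0)$ together with its lift $h^\extended(v,I)$ and compare the very weak NE conditions in the subgames $(\rest{\mathcal{G}}{h},v)$ and $(\rest{\extGame}{h^\extended},(v,I))$. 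For a player $i$ owning $v$, the one-shot deviations at $v$ are in bijection with those at $(v,I)$, again because the successors correspond; and the outcome in $\mathcal{G}$ induced by such a deviation maps under $\rho \mapsto \rho^\extended$ to the outcome induced by the matching deviation in $\extGame$.

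The only delicate point---and the step I expect to be the main obstacle---is that the two inequalities to be compared use the subgame cost functions $\Cost_{i\restriction h}$ and $\Cost_{i\restriction h^\extended}$, and the latter is defined through the sticky target sets $\extFi{i}$ rather than directly through $F_i$. I would resolve this by pushing the comparison back to plays from $x_0$: for any play $\rho$ of the extended subgame, $h^\extended\rho$ is a play of $(\extGame,x_0)$, so the cost-preservation property gives $\Cost_i(h^\extended\rho) = \Cost_i((h^\extended\rho)^G) = \Cost_i(h\rho^G)$, that is, $\Cost_{i\restriction h^\extended}(\rho) = \Cost_{i\restriction h}(\rho^G)$. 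Hence the very weak NE inequality for a given one-shot deviation holds in $(\rest{\mathcal{G}}{h},v)$ exactly when it holds for the corresponding deviation in $(\rest{\extGame}{h^\extended},(v,I))$, which proves the equivalence of very weak SPEs. Combining this with the outcome-and-cost correspondence of $\Phi$ and $\Psi$ yields both directions of the lemma at once: $\Phi$ sends an SPE of $(\mathcal{G},v_0)$ with outcome $\rho$ to an SPE of $(\extGame,x_0)$ with outcome $\rho^\extended$ and the same cost, and $\Psi$ does the reverse. Everything beyond this cost bookkeeping is a routine transport of structure along the bijections.
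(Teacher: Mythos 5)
Your proposal is correct and matches the paper's (implicit) argument: the paper derives Lemma~\ref{lem:equivSPE} directly from the one-to-one correspondence between plays (and, as it remarks, between strategies) of $({\mathcal G},v_0)$ and $({\extGame},x_0)$ together with cost preservation, which is exactly the transport-of-structure you carry out via $\Phi$ and $\Psi$. Your extra reduction to very weak SPEs via Proposition~\ref{prop:SPE-vwSPE} is harmless but not needed, since arbitrary (not just one-shot) deviations also transport along the same history bijection with the same cost bookkeeping.
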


Notice that in the same way there is a one-to-one correspondence between strategies in $({\mathcal G},v_0)$ and in its extended game $({\extGame},x_0)$. Thus Lemma~\ref{lem:equivSPE} could be rephrased in term of SPEs instead of SPE outcomes. In Theorem~\ref{thm:folkThm}, we see that outcome characterization is sufficient to deal with SPEs.

By construction, the arena $\extG$ of the initialized extended game is divided into different regions according to the players who have already visited their target set. Let us provide some useful notions with respect to this decomposition. (An illustrative example is given hereafter.)
We will often use them in the following sections. Let $\mathcal{I} = \{ I \subseteq \Pi \mid \text{there exists } v \in V \text{ such that } (v,I) \in \Succ^*(x_0) \}$ be the set of sets $I$ accessible from the initial state $x_0$, and let $N = |\mathcal{I}|$ be its size. For $I, I' \in \mathcal{I}$, if there exists $((v,I),(v',I')) \in \extE$, we say that $I'$ is a \emph{successor} of $I$ and we write $I'\in \Succ(I)$. Given $I \in \mathcal{I}$, $\regionGraphI{I}=(V^I,E^I)$ refers to the sub-arena of $\extG$ restricted to the vertices $\{(v,I) \in \extV \mid v \in V \}$. Hence $\regionGraphI{I}$ has all its vertices with the same second component $I$. We say that $\regionGraphI{I}$ is the \emph{region}\footnote{In the rest of this paper, we indifferently call region either $\regionGraphI{I}$, or $V^I$, or $I$.} associated with $I$. Such a region $\regionGraphI{I}$ is called a \emph{bottom region} whenever $\Succ(I) = I$.

There exists a partial order on $\mathcal{I}$ such that $I < I'$ if and only if $I' \in \Succ^*(I) \setminus \{I\}$.
We fix an arbitrary \emph{total order} on $\mathcal{I}$ that extends this partial order $<$ as follows:
\begin{eqnarray} \label{eq:order}
J_1 < J_2 < \cdots < J_N.
\end{eqnarray}
(with $\regionGraphI{J_N}$ a bottom region).\footnote{We use notation $J_n$, $n \in \{1, \ldots, N\}$, to avoid any confusion with the sets $I_k$ appearing in a play $\rho = (v_0, I_0)(v_1,I_1)\ldots$.} With respect to this total order, given $n \in \{1,\ldots,N\}$, we denote by $\regionGraphGeqI{J_n}= (V^{\geq J_n}, E^{\geq J_n})$ the sub-arena of $\extG$ restricted to the vertices $\{(v,I) \in \extV \mid I \geq J_n \}$.

The total order given in (\ref{eq:order}) together with the $I$-monotonicity (see (\ref{eq:increasing})) leads to the following lemma.

\begin{lem}[Region decomposition and section]\label{lem:region_decomposition}
Let $\pi$ be a path in the arena $X$ of the extended game $\extGame$.
Then there exists a \emph{region decomposition} of $\pi$ as
\[ \pi[\ell]\pi[\ell+1]\dots\pi[m]\]
with $1 \leq \ell \leq m \leq N$, such that for each $n$, $\ell \leq n \leq m$:
\begin{itemize}
    \item $\pi[n]$ is a (possibly empty) path in $X$,
    \item every vertex of $\pi[n]$ is of the form $(v, J_n)$ for some $v \in V$.
\end{itemize}
Each path $\pi[n]$ is called a \emph{section}. The last section $\pi[m]$ is infinite if and only if $\pi$ is infinite.
\end{lem}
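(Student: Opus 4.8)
The plan is to track the sequence of second components encountered along $\pi$ and to show that it is non-decreasing for the total order \eqref{eq:order}, and strictly increasing precisely at the steps where it changes; the sections are then the maximal blocks on which this second component is constant. First I would record that $I$-monotonicity \eqref{eq:increasing} is in fact a property of each individual edge and therefore holds for an \emph{arbitrary} path, not only for those starting in $x_0$: indeed, by Definition~\ref{def:extGame}, every edge $((v,I),(v',I'))\in \extE$ forces $I' = I \cup \{i \mid v'\in F_i\} \supseteq I$. Writing $\pi = (v_0,I_0)(v_1,I_1)\cdots$, this yields $I_0 \subseteq I_1 \subseteq \cdots$.

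Next I would upgrade this inclusion chain to a strict increase in the order \eqref{eq:order} at each change. Whenever $I_k \neq I_{k+1}$, the inclusion $I_k \subseteq I_{k+1}$ is strict, and since the edge $((v_k,I_k),(v_{k+1},I_{k+1}))$ witnesses $I_{k+1}\in\Succ(I_k)$, we get $I_{k+1} \in \Succ^*(I_k)\setminus\{I_k\}$, i.e.\ $I_k < I_{k+1}$ for the partial order, hence $I_k < I_{k+1}$ for the total order \eqref{eq:order}. Consequently the \emph{distinct} values taken by the second component along $\pi$ form a strictly increasing chain $J_{n_1} < J_{n_2} < \cdots$ in \eqref{eq:order}, with $n_1$ the index of the region of the first vertex. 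Since $|\mathcal{I}| = N$, this chain has length at most $N$, so the second component can change at most $N-1$ times.

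I would then assemble the decomposition. Put $\ell = n_1$. For each index $n$ with $\ell \le n \le m$, define $\pi[n]$ to be the maximal contiguous subpath of $\pi$ all of whose vertices have second component $J_n$ if $J_n$ occurs as one of the visited regions, and the empty path otherwise; the empty sections simply fill the gaps left by the regions that the chain $n_1 < n_2 < \cdots$ skips, which is exactly what is needed to make the indices $\ell,\ell+1,\dots,m$ consecutive. Concatenating $\pi[\ell]\pi[\ell+1]\cdots\pi[m]$ recovers $\pi$ verbatim, since consecutive vertices with equal second component lie in the same section while a change of second component is the passage from one nonempty section to the next (empty intermediate sections contributing nothing). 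By construction every vertex of a nonempty $\pi[n]$ has the form $(v,J_n)$, and this requirement is vacuous for the empty sections.

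Finally I would fix $m$ and verify the last-section claim. If $\pi$ is finite, take $J_m$ to be the region of its last vertex; all sections are then finite, in particular $\pi[m]$. If $\pi$ is infinite, then because the second component changes only finitely often there is a least index $K$ beyond which $I_k$ equals some fixed $J_m$; the section $\pi[m] = (v_K,J_m)(v_{K+1},J_m)\cdots$ is then infinite while all earlier sections are finite. This gives "the last section $\pi[m]$ is infinite if and only if $\pi$ is infinite." I expect the only real bookkeeping obstacle to be the handling of empty sections needed to keep the indices consecutive between $\ell$ and $m$, together with the justification—used in the infinite case—that the path must stabilize in a single final region; the latter is precisely where the finiteness of $\mathcal{I}$ and the fact that $<$ induces a strictly increasing chain are essential, though neither point is deep.
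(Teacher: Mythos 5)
Your proof is correct and follows essentially the same route as the paper: the paper gives no explicit proof of this lemma, presenting it as an immediate consequence of $I$-monotonicity together with the total order extending the successor relation on regions, and your argument is precisely a careful writing-out of that observation (edge-wise monotonicity, strict increase in the order at each change of second component, maximal constant blocks as sections with empty sections filling skipped indices, and stabilization in a final region by finiteness of $\mathcal{I}$).
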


\begin{exa}%
\label{example:extendedGame}
Let us come back to the initialized game $(\mathcal G,v_0)$ of Figure~\ref{fig:fusee}. Its extended game $(\extGame,x_0)$ is depicted in Figure~\ref{fig:fuseeExt} (only the part reachable from the initial vertex $x_0 = (v_0,\emptyset)$ is depicted; for the moment the reader should not consider the labeling indicated under the vertices).
As we can see, the extended game is divided into three different regions: one region associated with $I = \emptyset$ that contains the initial vertex $x_0$, a second region associated with $I = \{2\}$, and a third bottom region associated with $I = \{1,2\}=\Pi$. Hence the set ${\mathcal I} = \{\emptyset,\{2\},\Pi\}$ is totally ordered as $J_1 = \emptyset < J_2 = \{2\} < J_3 = \Pi$.

For all vertices $(v,I)$ of the region associated with $I = \{2\}$, we have $(v,I) \not\in \extFi{1}$ and $(v,I) \in \extFi{2}$, and for those of the region associated with $I = \Pi$, we have $(v,I) \in \extFi{1} \cap \extFi{2}$.
The sub-arena $\regionGraphGeqI{J_2}$ of $\extG$ is composed of all vertices $(v,I)$ such that $I=\lbrace 2 \rbrace$ or $I= \Pi$.

From the SPE $\sigma$ given in Example~\ref{ex:fusee} with outcome $\rho = {(v_0v_1v_6v_7v_2)}^\omega \in \Plays_G(v_0)$ and cost $(4,4)$, we derive the SPE outcome $\rho^X \in \Plays_X(x_0)$ equal to  \[ \rho^\extG = (v_0,\emptyset)(v_1,\emptyset)(v_6,\emptyset)(v_7,\emptyset){( (v_2,\Pi)(v_0,\Pi)(v_1,\Pi)(v_6,\Pi)(v_7,\Pi))}^\omega\]
with the same cost $(4,4)$.
The region decomposition of $\rho^X$ is equal to $\rho^X[1]\rho^X[2]\rho^X[3]$ such that its second section $\rho^X[2]$ is empty,
and its two other sections $\rho^X[1]$ and $\rho^X[3]$ are respectively equal to
 $(v_0,\emptyset)(v_1,\emptyset)(v_6,\emptyset)(v_7, \emptyset)$,  and  ${((v_2,\Pi)(v_0,\Pi)(v_1,\Pi)(v_6, \Pi)(v_7,\Pi))}^\omega$.
\qed\end{exa}

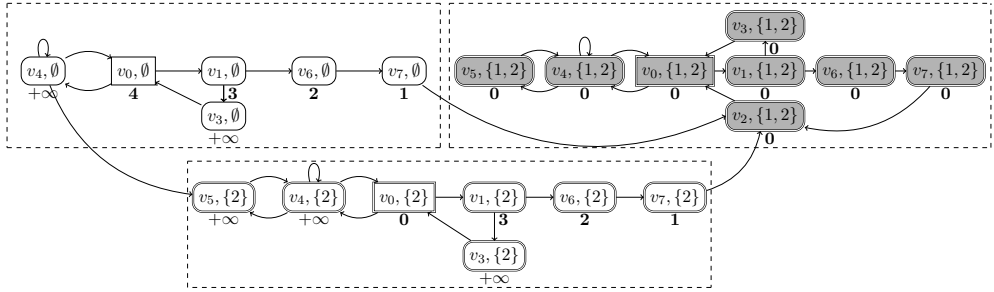
\begin{figure}[h!]
\centering
\scalebox{0.6}{
    \begin{tikzpicture}

    \node[draw] (v0e) at (-12,2.5){$v_0, \emptyset$};
    \node at (-12,2) {$\mathbf{4}$};
    \node[draw,rounded corners=6pt,minimum width=25pt, minimum height=15pt] (v1e) at (-10,2.5){$v_1,\emptyset$};
    \node at (-9.8,2) {$\mathbf{3}$};
    \node[draw,rounded corners=6pt,minimum width=25pt, minimum height=15pt] (v6e) at (-8,2.5){$v_6,\emptyset$};
    \node at (-8,2) {$\mathbf{2}$};
    \node[draw,rounded corners=6pt,minimum width=25pt, minimum height=15pt] (v7e) at (-6,2.5){$v_7,\emptyset$};
    \node at (-6,2) {$\mathbf{1}$};
    \node[draw,rounded corners=6pt,minimum width=25pt, minimum height=15pt] (v3e) at (-10,1.5){$v_3,\emptyset$};
    \node at (-10,1) {$\mathbf{+\infty}$};
    \node[draw,rounded corners=6pt,minimum width=25pt, minimum height=15pt] (v4e) at (-14,2.5){$v_4,\emptyset$};
    \node at (-14,2) {$\mathbf{+\infty}$};

    \node[draw,rounded corners=6pt,minimum width=25pt, minimum height=15pt, accepting] (v5o) at (-10,-0.3){$v_5, \{2\}$};
    \node at (-10,-0.8) {$\mathbf{+\infty}$};
    \node[draw,rounded corners=6pt,minimum width=25pt, minimum height=15pt, accepting] (v4o) at (-8,-0.3){$v_4, \{2\}$};
    \node at (-8,-0.8) {$\mathbf{+\infty}$};
    \node[draw,accepting] (v0o) at (-6,-0.3){$v_0, \{2\}$};
    \node at (-6,-0.8) {$\mathbf{0}$};
    \node[draw,rounded corners=6pt,minimum width=25pt, minimum height=15pt,accepting] (v1o) at (-4,-0.3){$v_1,\{2\}$};
    \node at (-3.8,-0.8) {$\mathbf{3}$};
    \node[draw,rounded corners=6pt,minimum width=25pt, minimum height=15pt,accepting] (v6o) at (-2,-0.3){$v_6,\{2\}$};
    \node at (-2,-0.8) {$\mathbf{2}$};
    \node[draw,rounded corners=6pt,minimum width=25pt, minimum height=15pt,accepting] (v7o) at (0,-0.3){$v_7,\{2\}$};
    \node at (0,-0.8) {$\mathbf{1}$};
    \node[draw,rounded corners=6pt,minimum width=25pt, minimum height=15pt,accepting] (v3o) at (-4,-1.6){$v_3,\{2\}$};
    \node at (-4,-2.1) {$\mathbf{+\infty}$};

    \node[draw,rounded corners=6pt,minimum width=25pt, minimum height=15pt,accepting,fill=gray!60] (v2f) at (2,1.5){$v_2,\{1,2\}$};
    \node at (2.1,1) {$\mathbf{0}$};
    \node[draw,accepting,fill=gray!60] (v0f) at (0,2.5){$v_0, \{1,2\}$};
    \node at (0,2) {$\mathbf{0}$};
    \node[draw,rounded corners=6pt,minimum width=25pt, minimum height=15pt,accepting,fill=gray!60] (v1f) at (2,2.5){$v_1,\{1,2\}$};
    \node at (2,2) {$\mathbf{0}$};
    \node[draw,rounded corners=6pt,minimum width=25pt, minimum height=15pt,accepting,fill=gray!60] (v6f) at (4,2.5){$v_6,\{1,2\}$};
    \node at (4,2) {$\mathbf{0}$};
    \node[draw,rounded corners=6pt,minimum width=25pt, minimum height=15pt,accepting,fill=gray!60] (v7f) at (6,2.5){$v_7,\{1,2\}$};
    \node at (6,2) {$\mathbf{0}$};
    \node[draw,rounded corners=6pt,minimum width=25pt, minimum height=15pt,accepting,fill=gray!60] (v3f) at (2,3.5){$v_3,\{1,2\}$};
    \node at (2.2,3) {$\mathbf{0}$};
    \node[draw,rounded corners=6pt,minimum width=25pt, minimum height=15pt,accepting,fill=gray!60] (v4f) at (-2,2.5){$v_4,\{1,2\}$};
    \node at (-2,2) {$\mathbf{0}$};
    \node[draw,rounded corners=6pt,minimum width=25pt, minimum height=15pt,accepting,fill=gray!60] (v5f) at (-4,2.5){$v_5,\{1,2\}$};
    \node at (-4,2) {$\mathbf{0}$};

    \draw[->] (v0e) -- (v1e);
    \draw[->] (v1e) -- (v3e);
    \draw[->] (v3e) -- (v0e);
    \draw[->] (v1e) -- (v6e);
    \draw[->] (v6e) -- (v7e);

    \draw[->] (v7e) to [bend right] node[right] {} (v2f.west);
    \draw[->] (v1e) -- (v3e);
    \draw[->] (v0e) to[bend left] (v4e);
    \draw[->] (v4e) to[bend left] (v0e);
    \draw[->] (v4e) to[bend right] (v5o);
    \draw[->] (v4e) to [loop above] (v4e);

    \draw[->] (v2f) -- (v0f);
    \draw[->] (v1f) to node [right] {}(v6f);
    \draw[->] (v6f) -- (v7f);
    \draw[->] (v3f) -- (v0f);
    \draw[->] (v1f) -- (v3f);
    \draw[->] (v0f) -- (v1f);
    \draw[->] (v0f) to[bend left] (v4f);
    \draw[->] (v4f) to[bend left] (v0f);
    \draw[->] (v4f) to[bend left] (v5f);
    \draw[->] (v5f) to[bend left] (v4f);
    \draw[->] (v4f) to [loop above] (v4f);
    \draw[->] (v7f) to [bend left] (v2f);

    \draw[->] (v1o) -- (v6o);
    \draw[->] (v6o) -- (v7o);
    \draw[->] (v3o) -- (v0o);
    \draw[->] (v1o) -- (v3o);
    \draw[->] (v0o) -- (v1o);
    \draw[->] (v0o) to[bend left] (v4o);
    \draw[->] (v4o) to[bend left] (v0o);
    \draw[->] (v4o) to[bend left] (v5o);
    \draw[->] (v5o) to[bend left] (v4o);
    \draw[->] (v4o) to [loop above] (v4o);
    \draw[->] (v7o) to [bend right] (v2f);

    \draw[-, dashed] (-10.8,0.5) -- (0.8,0.5);
    \draw[-, dashed] (0.8,0.5) -- (0.8,-2.3);
    \draw[-, dashed] (0.8,-2.3) -- (-10.8,-2.3);
    \draw[-, dashed] (-10.8,-2.3) -- (-10.8,0.5);

    \draw[-,dashed] (-5, 0.8) -- (-5, 4);
    \draw[-,dashed] (-5, 0.8) -- (7, 0.8);
    \draw[-,dashed] (7, 0.8) -- (7, 4);
    \draw[-,dashed] (-5, 4) -- (7, 4);

    \draw[-, dashed] (-5.2,0.8) -- (-5.2,4);
    \draw[-, dashed] (-5.2,4) -- (-14.8,4);
    \draw[-, dashed] (-14.8,4) -- (-14.8,0.8);
    \draw[-, dashed] (-14.8,0.8) -- (-5.2,0.8);
    \end{tikzpicture}
}
\caption{The extended game $(\extGame,x_0)$ for the initialized game $(G,v_0)$ of Figure~\ref{fig:fusee}. The values of a labeling function $\lambda$ are indicated under each vertex.}%
\label{fig:fuseeExt}
\end{figure}

\section{Characterization}%
\label{section:charac}

In this section, given an initialized reachability game $({\mathcal G}, v_0)$, we characterize the set of plays that are outcomes of SPEs, and we provide an algorithm to construct this set. For this characterization, by Lemma~\ref{lem:equivSPE}, we can work on the extended game $({\extGame}, x_0)$ instead of $({\mathcal G}, v_0)$. Moreover by Proposition~\ref{prop:SPE-vwSPE}, we can focus on very weak SPEs only since they are equivalent to SPEs. Such a characterization already appears in~\cite{BrihayeBMR15} however with a different algorithm that cannot be used to obtain good complexity upper bounds for the constraint problem.

All along this section, to avoid heavy notation when we refer to a vertex of $\extV$, we use notation $v$ (instead of $(u,I)$) and notation $I(v)$ means the second component $I$ of this vertex.

Our algorithm iteratively builds a set of \emph{constraints} imposed by a \emph{labeling function} $\lambda : \extV \rightarrow \mathbb{N} \cup \{+\infty \}$ such that the plays of the extended game satisfying those constraints are exactly the SPE outcomes. Let us provide a formal definition of such a function $\lambda$ with the constraints that it imposes on plays.

\begin{defi}[$\lambda$-consistent play]%
\label{def:l-consistant}
Let $\extGame$ be the extended game of a reachability game $\mathcal G$, and $\lambda : \extV \rightarrow \mathbb{N} \cup \{+\infty \}$ be a labeling function. Given $v \in \extV$, for all plays $\rho \in  \Plays_\extG(v)$, we say that $\rho = \rho_0\rho_1 \ldots$ is \emph{$\lambda$-consistent} if for all $n \in \mathbb{N}$ and $i \in \Pi$ such that $\rho_n \in V^\extG_i$: \begin{eqnarray} \label{eq:consistent}
\extCosti{i}(\rho_{\geq n}) \leq \lambda(\rho_n).
\end{eqnarray}
We denote by $\Lambda(v)$ the set of plays $\rho \in  \Plays_\extG(v)$ that are $\lambda$-consistent.
\end{defi}

Thus, a play $\rho$ is $\lambda$-consistent if for all its suffixes $\rho_{\geq n}$, if player~$i$ owns $\rho_n$ then the number of edges to reach his target set $\extFi{i}$ along $\rho_{\geq n}$ is upper bounded by $\lambda(\rho_n)$. Before going into the details of our algorithm, let us intuitively explain on an example how a well-chosen labeling function characterizes the set of SPE outcomes.

\begin{exa}%
\label{ex:lambda}
We consider the initialized extended game $(\extGame,x_0)$ of Figure~\ref{fig:fuseeExt}, and a labeling function $\lambda$ whose values are indicated under each vertex.

If $v \in \extV_i$ is labeled by $\lambda(v) = c$, then if $c \in \mathbb{N}_0$, this means that player~$i$ will only accept outcomes in $(\extGame,v)$ that reach his target set within $c$ steps, otherwise he would have a profitable deviation.
If $\lambda(v) = 0$, this means that player~$i$ has already reached his target set (that is, $i \in I(v)$), and if $\lambda(v) = +\infty$, player~$i$ has no profitable deviation whatever outcome is proposed to him.

In Example~\ref{example:extendedGame} was given the SPE outcome equal to \[ \rho = (v_0,\emptyset)(v_1,\emptyset)(v_6,\emptyset)(v_7,\emptyset){((v_2,\Pi)(v_0,\Pi)(v_1,\Pi)(v_6,\Pi)(v_7,\Pi))}^\omega\] and with cost $(4,4)$. We have $\lambda(v_0,\emptyset) = 4$ and player~$2$ reaches his target set from $(v_0,\emptyset)$ within exactly $4$ steps. For all vertices $v$ of the ending cycle of $\rho$, we have $\lambda(v)=0$ since both players belong to $I(v)= \Pi$.
The constraints imposed by $\lambda$ on the  other vertices of $\rho$ are respected too.

Recall now the strategy profile $\sigma'$ with outcome $\rho' = {((v_0, \emptyset)(v_4,\emptyset))}^\omega$ described in Example~\ref{ex:fusee}.
The outcome $\rho'$ is not $\lambda$-consistent since player~$2$ does not reach his target set, and so in particular does not reach it within $4$ steps.
We already know that $\sigma'$ is not an SPE from Example~\ref{ex:fusee}.
In fact all the profiles that yield $\rho'$ as an outcome are not SPEs.\qed\end{exa}

Our algorithm roughly works as follows: the labeling function $\lambda$ that characterizes the set of SPE outcomes is obtained from an initial labeling function that imposes no constraints, by iterating an operator that reinforces the constraints step after step, up to obtaining a fixpoint which is the required function $\lambda$. Thus, if $\lambda^k$ is the labeling function computed at step $k$ and $\Lambda^k(v)$, $v \in \extV$, the related sets of $\lambda^k$-consistent plays beginning in $v$, initially we have $\Lambda^0(v) = \Plays_X(v)$, and  step by step, the constraints imposed by $\lambda^k$ become stronger and the sets $\Lambda^k(v)$ become smaller, until a fixpoint is reached.

Initially, we want a labeling function $\lambda^0$ that imposes no constraint on the plays. We could define $\lambda^0$ as the constant function $+ \infty$. We proceed a little bit differently. Indeed recall the definition of the target sets ${(\extFi{i})}_{i \in \Pi}$ in an extended game (see Definition~\ref{def:extGame}): $v \in \extFi{i}$ if and only if $i \in I(v)$. Hence, given $\rho = \rho_0\rho_1 \ldots$, once $\rho_k \in \extFi{i}$ for some $k\in \mathbb{N}$ then $\rho_n \in \extFi{i}$ for all $n\geq k$. It follows that for all $n\geq k$, $\extCosti{i}(\rho_{\geq n}) = 0$ and Inequality (\ref{eq:consistent}) is trivially true. (See also Example~\ref{ex:lambda}.)
Therefore we define the labeling function $\lambda^0$ as follows.

\begin{defi}[Initial labeling]%
\label{def:init}
For all $v \in \extV$, let $i \in \Pi$ be such that $v \in \extV_i$,
\[\lambda^0(v) = \begin{cases}
0 & \text{if } i \in I(v) \\
+\infty & \text{otherwise}. \end{cases}\]
\end{defi}

\begin{lem}%
\label{lem:init}
$\rho \in \Lambda_0(v)$ if and only if $\rho \in \Plays_X(v)$.
\end{lem}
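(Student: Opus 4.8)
The plan is to establish the two inclusions separately, the reverse one being immediate and the forward one requiring only an unfolding of definitions. The inclusion $\Lambda_0(v) \subseteq \Plays_X(v)$ holds trivially, since by Definition~\ref{def:l-consistant} the set $\Lambda_0(v)$ of $\lambda^0$-consistent plays is by construction a subset of $\Plays_X(v)$. Thus the only content is the forward inclusion $\Plays_X(v) \subseteq \Lambda_0(v)$, i.e.\ that the initial labeling $\lambda^0$ imposes no genuine constraint.

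For the forward inclusion I would fix an arbitrary play $\rho = \rho_0\rho_1 \ldots \in \Plays_X(v)$ and verify the $\lambda^0$-consistency condition (\ref{eq:consistent}) pointwise: for every $n \in \mathbb{N}$ and every $i \in \Pi$ with $\rho_n \in \extVi{i}$, I must show $\extCosti{i}(\rho_{\geq n}) \leq \lambda^0(\rho_n)$. I then split according to the two cases in Definition~\ref{def:init}. If $i \notin I(\rho_n)$, then $\lambda^0(\rho_n) = +\infty$ and the inequality holds trivially, since every cost lies in $\mathbb{N} \cup \{+\infty\}$ and is therefore at most $+\infty$. If instead $i \in I(\rho_n)$, then $\lambda^0(\rho_n) = 0$, so it suffices to show $\extCosti{i}(\rho_{\geq n}) = 0$. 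Here I invoke the defining property of the extended target sets in Definition~\ref{def:extGame}, namely that a vertex $(u,I)$ belongs to $\extFi{i}$ precisely when $i \in I$. Hence $i \in I(\rho_n)$ yields $\rho_n \in \extFi{i}$, so the very first vertex of the suffix $\rho_{\geq n}$ already lies in $\extFi{i}$; by the cost definition the least index at which $\rho_{\geq n}$ meets $\extFi{i}$ is then $0$, whence $\extCosti{i}(\rho_{\geq n}) = 0 \leq \lambda^0(\rho_n)$. As the inequality holds in both cases for all $n$ and $i$, the play $\rho$ is $\lambda^0$-consistent, so $\rho \in \Lambda_0(v)$, completing the forward inclusion.

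There is essentially no obstacle in this argument: it is a direct consequence of the definitions, and in fact simply formalizes the motivating remark made just before Definition~\ref{def:init} (once $\rho_k \in \extFi{i}$ one has $\extCosti{i}(\rho_{\geq n}) = 0$ for all $n \geq k$). The only point deserving care is to match the definition of $\extFi{i}$ -- whose membership is governed \emph{solely} by the second component $I$ of a vertex -- against the suffix cost computation, so that $i \in I(\rho_n)$ indeed forces $\extCosti{i}(\rho_{\geq n}) = 0$. The $I$-monotonicity property (\ref{eq:increasing}) guarantees this membership persists along $\rho_{\geq n}$, but it is not even needed for the argument, since consistency is checked at each position $n$ individually.
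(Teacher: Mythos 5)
Your proof is correct and follows exactly the reasoning the paper uses: the paper gives no separate proof of Lemma~\ref{lem:init}, relying instead on the remark preceding Definition~\ref{def:init} (that $\lambda^0(v)=+\infty$ is vacuous and that $i \in I(v)$ forces $\rho_n \in \extFi{i}$, hence cost $0$), which is precisely the case analysis you carry out. Your additional observation that $I$-monotonicity is not needed is accurate, since consistency is verified position by position.
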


Let us now explain how our algorithm computes the labeling functions $\lambda^k$, $k \geq 1$, and the related sets $\Lambda^{k}(v)$, $v \in \extV$. It works in a \emph{bottom-up} manner, according to the total order $J_1 < J_2 < \cdots < J_N$ of $\mathcal I$ given in (\ref{eq:order}): it first iteratively updates the labeling function for all vertices $v$ of the arena $\regionGraphI{J_N}$ until reaching a fixpoint in this arena, it then repeats this procedure in $\regionGraphGeqI{J_{N-1}}$, $\regionGraphGeqI{J_{N-2}}$, $\ldots$, $\regionGraphGeqI{J_1} = \extG$. Hence, suppose that we currently treat the arena $\regionGraphGeqI{J_n}$ and that we want to compute $\lambda^{k+1}$ from $\lambda^{k}$. We define the updated function $\lambda^{k+1}$ as follows (we use the convention that $1 + (+\infty) = +\infty$).

\begin{defi}[Labeling update]%
\label{def:update}
Let $k \geq 0$ and suppose that we treat the arena $\regionGraphGeqI{J_n}$, with $n \in \{1,\ldots,N\}$. For all $v \in \extV$,
\begin{itemize}
    \item if $v \in V^{\geq J_n}$, let $i \in \Pi$ be such that $v \in \extV_i$, then
\[\lambda^{k+1}(v) = \begin{cases}
0 & \text{if } i \in I(v)  \\

1 + \displaystyle\min_{\substack{(v,v')\in \extE}}  \sup \{ \extCosti{i}(\rho) \mid \rho \in \Lambda^{k}(v') \} & \text{otherwise} \end{cases}\]
    \item if $v \not\in V^{\geq J_n}$, then
    \[\lambda^{k+1}(v) = \lambda^{k}(v).\]
    \end{itemize}
\end{defi}

\noindent
Let us provide some explanations. As this update concerns the arena $\regionGraphGeqI{J_n}$, we keep $\lambda^{k+1} = \lambda^{k}$ outside of this arena. Suppose now that $v$ belongs to $\regionGraphGeqI{J_n}$ and let $i$ be such that $v \in \extV_i$. We define $\lambda^{k+1}(v) = 0$ whenever $i \in I(v)$ (as already explained for the definition of $\lambda^0$). When it is updated, the value $\lambda^{k+1}(v)$ represents what is the best cost that player $i$ can ensure for himself from $v$ with a ``one-shot'' choice by only taking into account plays of $\Lambda^{k}(v')$ with $v' \in \Succ(v)$.

Notice that it makes sense to run the algorithm in a bottom-up fashion according to the total ordering $J_1 < \cdots < J_N$ since given a play $\rho = \rho_0\rho_1 \ldots$, if $\rho_0$ is a vertex of $V^{\geq J_n}$, then for all $k \in \mathbb{N}$, $\rho_k$ is a vertex of $V^{\geq J_n}$ (by $I$-monotonicity). Moreover running the algorithm in this way is essential to prove that the constraint problem for reachability games is in PSPACE\@.

\begin{exa}%
\label{ex:update}
We consider again the extended game $(\extGame,x_0)$ of Figure~\ref{fig:fuseeExt} with the total order $J_1  = \emptyset < J_2 = \{2\} < J_3 = J_N = \Pi$ of its set $\mathcal I$.

Let us illustrate Definition~\ref{def:update} on the arena $\regionGraphGeqI{J_2}$. Let $k \geq 0$ and suppose that the labeling function $\lambda^k$ has been computed such that $\lambda^k(v_0,J_2)= 0 $, $\lambda^k(v,J_2)= +\infty$ for every other vertex in region $X^{J_2}$ (notice that $\lambda^k$ is not the labeling function indicated in Figure~\ref{fig:fuseeExt}). Let us show how to compute $\lambda^{k+1}(v_1,J_2)$. We need to compute $\sup \{ \extCosti{1}(\rho) \mid \rho \in \Lambda^{k}(v') \}$ for the two successors $v'$ of $(v_1,J_2)$, that is, respectively $v' = (v_3,J_2)$ and $v' = (v_6,J_2)$.
Recall that $\Lambda^k(v')$ is the set of all plays $\lambda^k$-consistent from $v'$.
All the plays in $\Lambda^k(v_6,J_2)$ have cost $2$ for player~$1$ and player~$2$: indeed, they all first follow the history $(v_6,J_2),(v_7, J_2),(v_2,J_N)$. Thus, $\sup \{ \extCosti{1}(\rho) \mid \rho \in \Lambda^{k}(v_6,J_2) \}=2$.
On the other hand, as $\lambda^k(v_3,J_2)=+\infty$, the play $(v_3,J_2)(v_0,J_2){(v_4,J_2)}^\omega$ is $\lambda^k$-consistent and belongs to $\Lambda^k(v_3,J_2)$. Thus, $\sup \{ \extCosti{1}(\rho) \mid \rho \in \Lambda^{k}(v_3,J_2) \}=+\infty$.
Hence, the minimum is attained with successor $(v_6,J_2)$, and $\lambda^{k+1}(v_1,J_2)=3$.
\qed\end{exa}

We can now provide our algorithm (see Algorithm~\ref{algo:lambda}) that computes the sequence ${(\lambda^k(v))}_{k \in \mathbb{N}}$ until a fixpoint is reached (see Proposition~\ref{prop:fixpoint} below). Initially, the labeling function is $\lambda^0$ (see Definition~\ref{def:init}). For the next steps $k > 0$, we begin with the bottom region $\regionGraphI{J_N}$ of $\extGame$ and update $\lambda^{k-1}$ to $\lambda^k$ as described in Definition~\ref{def:update}. At some point, the values of $\lambda^k$ do not change anymore in $X^{J_N}$ and ${(\lambda^k)}_{k\in\mathbb{N}}$ reaches locally (on $\regionGraphI{J_N}$) a fixpoint (see again Proposition~\ref{prop:fixpoint}). Now, we consider the arena $\regionGraphGeqI{J_{N-1}}$ and in the same way, we continue to update locally the values of $\lambda^k$ in $\regionGraphGeqI{J_{N-1}}$. We repeat this procedure with arenas $\regionGraphGeqI{J_{N-2}}$, $\regionGraphGeqI{J_{N-3}}$, \ldots\ until the arena $\regionGraphGeqI{J_1} = \extG$ is completely processed. From the last computed $\lambda^k$, we derive the sets $\Lambda^k(v)$, $v \in \extV$, that we need for the characterization of outcomes of SPEs (see Theorem~\ref{thm:folkThm} below). An example illustrating the execution of this algorithm is given at the end of this section.

\RestyleAlgo{boxruled}

\begin{algorithm}[!ht]
\caption{Fixpoint}%
\label{algo:lambda}
\DontPrintSemicolon%
compute $\lambda^0$ \;
$k \leftarrow 0$  \;
$n \leftarrow N $  \;
\While{$n \neq 0$}{

\Repeat{$\lambda^k = \lambda^{k-1}$}{
     $k \leftarrow k + 1$ \;
     compute $\lambda^k$ from $\lambda^{k-1}$ with respect to $\regionGraphGeqI{J_{n}}$ \; }
    $n \leftarrow n - 1$ \;}
     return $\lambda^k$
    \end{algorithm}

We now state that the sequence ${(\lambda^k)}_{k\in \mathbb{N}}$ computed by this algorithm reaches a fixpoint --- locally on each arena $\regionGraphGeqI{J_{n}}$ and globally on $X$ --- in the following meaning:
\begin{prop}%
\label{prop:fixpoint}
There exists a sequence of integers $0 = k_N^* < k_{N-1}^* < \cdots < k_1^* = k^*$ such that
\begin{itemize}
    \item \textbf{Local fixpoint:} for all $J_n \in \mathcal{I}$, all $m\in \mathbb{N}$ and all $v \in V^{\geq J_n}$,
\begin{eqnarray} \label{eq:star}
\lambda^{k^*_n+m}(v) = \lambda^{k^*_n}(v),
\end{eqnarray}
\item \textbf{Global fixpoint}: in particular, with $k^* = k_1^*$, for all $m \in \mathbb{N}$ and all $v \in \extV$,
\begin{eqnarray*}
\lambda^{k^*+m}(v) = \lambda^{k^*}(v).
\end{eqnarray*}
The global fixpoint $\lambda^{k^*}$ is also simply denoted by $\lambda^*$, and each set $\Lambda^{k^*}(v), v\in V^\extG$, is denoted by $\Lambda^*(v)$.
\end{itemize}
\end{prop}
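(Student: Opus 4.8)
The plan is to read Algorithm~\ref{algo:lambda} as the iteration of a monotone operator and to exploit the bottom-up ordering $J_1 < \cdots < J_N$ of $\mathcal I$ so that each region, once stabilized, never changes again. Concretely I would prove three facts in turn: (i) the sequence ${(\lambda^k)}_k$ is pointwise non-increasing and the induced sets ${(\Lambda^k(v))}_k$ are nested; (ii) each \texttt{repeat} loop terminates, producing a local fixpoint on $V^{\geq J_n}$; and (iii) a local fixpoint on $V^{\geq J_n}$, once reached, is preserved by every later update. The indices $k_n^*$ and the global fixpoint then follow by bookkeeping on the phases of the algorithm (where by \emph{phase $n$} I mean the iteration of the \texttt{while} loop with counter value $n$).

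The engine is a simultaneous induction on $k$ establishing, for all $v \in \extV$, both $\lambda^{k+1}(v) \leq \lambda^{k}(v)$ and $\Lambda^{k+1}(v) \subseteq \Lambda^{k}(v)$. The two halves feed each other: from $\lambda^{k+1} \leq \lambda^{k}$ (pointwise) any $\lambda^{k+1}$-consistent play automatically satisfies the weaker $\lambda^{k}$-constraints of Inequality~(\ref{eq:consistent}), whence $\Lambda^{k+1}(v) \subseteq \Lambda^{k}(v)$; conversely, from $\Lambda^{k}(v') \subseteq \Lambda^{k-1}(v')$ and the update rule of Definition~\ref{def:update}, the quantity $\sup\{\extCosti{i}(\rho) \mid \rho \in \Lambda^{k}(v')\}$ can only decrease, and since $x \mapsto 1+x$ and the $\min$ over successors preserve order, $\lambda^{k+1}(v) \leq \lambda^{k}(v)$. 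The base case $\lambda^1 \leq \lambda^0$ holds because $\lambda^0$ already carries the maximal admissible labels $0$ and $+\infty$ (Definition~\ref{def:init}, Lemma~\ref{lem:init}). I expect the only delicate point of this step to be checking that monotonicity survives each \emph{phase transition} $n+1 \to n$: when region $J_n$ first enters the processed arena its vertices still carry their initial labels, but for such a vertex $v$ owned by a player $i \notin I(v)$ we have $\lambda^{k}(v) = \lambda^{0}(v) = +\infty$, so the freshly computed value is trivially no larger.

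Granted monotonicity, termination of each \texttt{repeat} loop is routine: on $V^{\geq J_n}$ the labels form a non-increasing sequence in ${(\mathbb{N}\cup\{+\infty\})}^{|V^{\geq J_n}|}$ that is bounded below by $0$ in each coordinate, and such a sequence must stabilize, since a finite coordinate is a non-negative integer that can strictly decrease only finitely often and the jump $+\infty \to$ finite happens at most once per coordinate; as only $V^{\geq J_n}$ is modified during phase $n$, the test $\lambda^k = \lambda^{k-1}$ eventually succeeds. For preservation I would invoke $I$-monotonicity~(\ref{eq:increasing}): if $v \in V^{\geq J_n}$ then every successor of $v$, hence every play issued from $v$, stays in $V^{\geq J_n}$; therefore both the update of $\lambda(v)$ and the set $\Lambda^{k}(v)$ depend only on the restriction of $\lambda^{k}$ to $V^{\geq J_n}$. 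Consequently, once $\lambda^{k}$ is constant on $V^{\geq J_n}$ it remains constant there under every subsequent update, whether performed in phase $n$ or in a later phase $m < n$ where $V^{\geq J_n} \subseteq V^{\geq J_m}$ is re-processed. In particular the bottom region $\regionGraphI{J_N}$ is already a fixpoint at $\lambda^0$ (a player who has not reached his target there never will, so his label stays $+\infty$), which accounts for $k_N^* = 0$.

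Finally I would assemble the indices. Let $\kappa_n$ be the value of the counter $k$ when phase $n$ ends; since each phase runs its loop body at least once, the $\kappa_n$ are finite and $\kappa_N < \kappa_{N-1} < \cdots < \kappa_1$. At step $\kappa_n$ the region $V^{\geq J_n}$ is at a fixpoint, which by preservation persists for all $m \geq 0$, giving Equation~(\ref{eq:star}). Setting $k_N^* = 0$ (justified above) and $k_n^* = \kappa_n$ for $n < N$ yields a strictly increasing sequence $0 = k_N^* < k_{N-1}^* < \cdots < k_1^* = k^*$ satisfying the local-fixpoint property, and the global fixpoint is exactly the case $n = 1$, where $V^{\geq J_1} = \extV$, so $\lambda^{k^*} = \lambda^*$ is stationary on the whole arena. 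The main obstacle is thus concentrated in the monotonicity induction together with the observation that the bottom-up order makes each $V^{\geq J_n}$ a self-contained subproblem closed under successors; termination and the final bookkeeping are then straightforward.
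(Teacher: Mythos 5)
Your proposal is correct and follows essentially the same route as the paper: the monotonicity of ${(\lambda^k)}_k$ and ${(\Lambda^k(v))}_k$ proved by induction on $k$ (the paper's Lemma~\ref{lemma:sequenceNonIncreasing}), stabilization of the finitely many non-increasing coordinate sequences over $\mathbb{N}\cup\{+\infty\}$ (which the paper phrases via well-quasi-ordering, but your elementary per-coordinate argument is the same in substance), and a bottom-up induction on regions using $I$-monotonicity to show that each $V^{\geq J_n}$ is closed under successors, so its fixpoint, once reached, persists under all later updates. Your explicit preservation argument and index bookkeeping match what the paper states more tersely via Lemma~\ref{lem:fixpoint-region}.
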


\noindent
Before proving Proposition~\ref{prop:fixpoint}, let us make some comments. This proposition indicates that Algorithm~\ref{algo:lambda} terminates. Indeed for each $J_n \in \mathcal{I}$, taking the \emph{least index} $k^*_n$ which makes Equality~(\ref{eq:star}) true shows that the repeat loop is broken and the variable $n$ decremented by $1$. The value $n=0$ is eventually reached and the algorithm stops with the global fixpoint $\lambda^*$. Notice that the first local fixpoint is reached with $k_N^* = 0$ because $X^{J_N}$ is a bottom region.
To prove that the algorithm stops in a finite number of steps, we show that each region requires a finite number of steps to be treated. To do so, we rely on the fact that constraint values on vertices cannot increase from one step to another and that the set of constraint values can be seen as a well-quasi ordering. Thus a decreasing sequence in this set is stationary, and this means our algorithm reaches a fixpoint and terminates.

Proposition~\ref{prop:fixpoint} also shows that when a local fixpoint is reached in the arena $\regionGraphGeqI{J_{n+1}}$ and the algorithm updates the labeling function $\lambda^k$ in the arena $\regionGraphGeqI{J_n}$, the values of $\lambda^k(v)$ do not change anymore for any $v \in V^{\geq J_{n+1}}$ but can still be modified for some $v \in V^{J_{n}}$. Recall also that outside of $\regionGraphGeqI{J_n}$, the values of $\lambda^k(v)$ are still equal to the initial values $\lambda^0(v)$. These properties will be useful when we will prove that the constraint problem for reachability games is in PSPACE\@. They are summarized in the next lemma.

\begin{lem}%
\label{lem:fixpoint-region}
Let $k \in \mathbb{N}$ be a step of the algorithm and let $J_n$ with $n \in \{1,\ldots, N\}$. For all $v \in V^{J_n}$:
\begin{itemize}
    \item if $k \leq k^*_{n+1}$, then $\lambda^{k+1}(v) = \lambda^k(v)=\lambda^0(v)$,
    \item if $k^*_n \leq k$, then $\lambda^{k+1}(v) = \lambda^k(v)=\lambda^{k^*_n}(v)$,
\end{itemize}
Hence the values of $\lambda^k(v)$ and $\lambda^{k+1}(v)$ may be different only when $k^*_{n+1} < k < k^*_n$. \qed%
\end{lem}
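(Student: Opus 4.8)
The plan is to derive Lemma~\ref{lem:fixpoint-region} directly from Proposition~\ref{prop:fixpoint} together with the bottom-up control flow of Algorithm~\ref{algo:lambda} and the locality of the update rule. Two elementary structural facts do most of the work. First, for a vertex $v \in V^{J_n}$ and any $m > n$ we have $v \notin V^{\geq J_m}$: indeed $I(v) = J_n$ and $J_n < J_m$ in the total order~(\ref{eq:order}), while $V^{\geq J_m}$ contains only vertices whose second component is $\geq J_m$. Second, by Definition~\ref{def:update}, a step that updates $\lambda$ with respect to $\regionGraphGeqI{J_m}$ leaves $\lambda$ unchanged on every vertex outside $V^{\geq J_m}$. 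Combining the two, a vertex $v \in V^{J_n}$ can only be modified during a step performed with respect to some $\regionGraphGeqI{J_m}$ with $m \leq n$.

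I would prove the second bullet first, since it is an immediate corollary of the local fixpoint property. As $V^{J_n} \subseteq V^{\geq J_n}$, for any $v \in V^{J_n}$ and any $k \geq k_n^*$ I write $k = k_n^* + m$ and apply~(\ref{eq:star}) twice to obtain
\[ \lambda^{k}(v) = \lambda^{k_n^*+m}(v) = \lambda^{k_n^*}(v) \quad\text{and}\quad \lambda^{k+1}(v) = \lambda^{k_n^*+(m+1)}(v) = \lambda^{k_n^*}(v), \]
whence $\lambda^{k+1}(v) = \lambda^{k}(v) = \lambda^{k_n^*}(v)$.

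For the first bullet I would argue that region $J_n$ is left untouched until its own processing starts. By the control flow of Algorithm~\ref{algo:lambda}, the variable $n$ is decremented from $N$ down to $1$, so regions are treated in the order $J_N, J_{N-1}, \dots$; consequently every step performed before the algorithm begins treating $\regionGraphGeqI{J_n}$ uses some $\regionGraphGeqI{J_m}$ with $m \geq n+1$, and by the structural observation above such a step does not modify $v \in V^{J_n}$. By the least-index characterization of $k_{n+1}^*$ recalled after Proposition~\ref{prop:fixpoint}, the repeat loop for $J_{n+1}$ terminates precisely at the fixpoint-detection step computing $\lambda^{k_{n+1}^*+1}$ with respect to $\regionGraphGeqI{J_{n+1}}$, and the first update with respect to $\regionGraphGeqI{J_n}$ only produces $\lambda^{k_{n+1}^*+2}$. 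Hence $\lambda^{k}(v) = \lambda^0(v)$ for every $k \leq k_{n+1}^*+1$; in particular, for $k \leq k_{n+1}^*$ both $\lambda^{k}(v)$ and $\lambda^{k+1}(v)$ equal $\lambda^0(v)$. For the bottom region $n=N$ this bullet is vacuous (taking $k_{N+1}^* = -1$), and the second bullet, with $k_N^* = 0$, already covers all steps. The concluding sentence of the lemma is then just the conjunction of the two bullets: $\lambda^{k}(v)$ and $\lambda^{k+1}(v)$ coincide whenever $k \leq k_{n+1}^*$ or $k \geq k_n^*$, so they may differ only for $k_{n+1}^* < k < k_n^*$.

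The only genuinely delicate point is the off-by-one bookkeeping between the running counter $k$ of Algorithm~\ref{algo:lambda} and the fixpoint indices $k_n^*$: the algorithm needs one extra iteration to \emph{detect} a local fixpoint (it computes $\lambda^{k_{n+1}^*+1} = \lambda^{k_{n+1}^*}$ before decrementing $n$), so I must check that this detection step still operates on $\regionGraphGeqI{J_{n+1}}$ and therefore leaves $V^{J_n}$ at its initial value $\lambda^0$. Everything else is a direct reading of Definition~\ref{def:update} and Proposition~\ref{prop:fixpoint}.
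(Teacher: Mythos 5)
Your proof is correct and follows exactly the route the paper intends: the paper states Lemma~\ref{lem:fixpoint-region} with no separate argument, treating it as an immediate summary of Proposition~\ref{prop:fixpoint} (the local-fixpoint property for the second bullet) together with the locality of Definition~\ref{def:update} and the bottom-up control flow of Algorithm~\ref{algo:lambda} (for the first bullet). Your write-up simply makes explicit the off-by-one bookkeeping around the fixpoint-detection step, which is the same reasoning the paper leaves implicit.
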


To prove Proposition~\ref{prop:fixpoint}, we have to prove that the sequences ${(\lambda^k(v))}_{k\in \mathbb{N}}$, with $v \in \extV$, are non increasing.

\begin{lem}%
\label{lemma:sequenceNonIncreasing}
For all $v\in \extV$, the sequences ${(\lambda^k(v))}_{k\in \mathbb{N}}$ and ${(\Lambda^k(v))}_{k\in \mathbb{N}}$ are non increasing.
\end{lem}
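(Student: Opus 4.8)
The plan is to prove the two monotonicity statements together, the key being the elementary observation that a pointwise order on labeling functions induces an inclusion of the associated consistency sets. Concretely, I would first record the following \emph{monotonicity lemma}: if $\lambda, \lambda'$ are two labeling functions with $\lambda(v) \le \lambda'(v)$ for every $v \in \extV$, then the set of $\lambda$-consistent plays from $v$ is contained in the set of $\lambda'$-consistent plays from $v$, for every $v$. This is immediate from Definition~\ref{def:l-consistant}: if $\rho$ is $\lambda$-consistent, then for every $n$ and $i$ with $\rho_n \in \extVi{i}$ we have $\extCosti{i}(\rho_{\geq n}) \le \lambda(\rho_n) \le \lambda'(\rho_n)$, so $\rho$ is $\lambda'$-consistent as well.

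Given this, it suffices to prove that the value sequences are non increasing, i.e.\ $\lambda^{k+1}(v) \le \lambda^k(v)$ for all $v$ and all $k$; the inclusion $\Lambda^{k+1}(v) \subseteq \Lambda^k(v)$ then follows by applying the monotonicity lemma with smaller function $\lambda^{k+1}$ and larger function $\lambda^k$. I would establish the value inequality by induction on $k$. For the base case $k=0$, note that $\lambda^0(v) \in \{0, +\infty\}$ by Definition~\ref{def:init}: if $\lambda^0(v) = +\infty$ there is nothing to prove, and if $\lambda^0(v)=0$ then $i \in I(v)$ for the owner $i$ of $v$, so the update of Definition~\ref{def:update} gives $\lambda^1(v)=0$ as well. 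Hence $\lambda^1(v) \le \lambda^0(v)$ for all $v$, irrespective of which region is treated at the first step.

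For the inductive step, assume $\lambda^k(v) \le \lambda^{k-1}(v)$ for all $v$. By the monotonicity lemma this yields $\Lambda^k(v') \subseteq \Lambda^{k-1}(v')$ for every $v'$. Now fix $v$ and let $i$ be its owner, distinguishing cases according to Definition~\ref{def:update}. If $v$ lies outside the arena treated at step $k+1$, or if $i \in I(v)$, then $\lambda^{k+1}(v)$ equals either $\lambda^k(v)$ or $0$, and the inequality is clear (when $i \in I(v)$ one checks that $\lambda^k(v)=0$ too, since such a vertex either was updated to $0$ or still carries its initial value $\lambda^0(v)=0$). Otherwise $v \in V^{\geq J_n}$ with $i \notin I(v)$, and
\[
\lambda^{k+1}(v) = 1 + \min_{(v,v') \in \extE} \sup\{\extCosti{i}(\rho) \mid \rho \in \Lambda^k(v')\}.
\]
Because $\Lambda^k(v') \subseteq \Lambda^{k-1}(v')$, each supremum can only decrease, and since taking the minimum over successors and adding $1$ preserve the order, we obtain that $\lambda^{k+1}(v)$ is at most $1 + \min_{(v,v')\in\extE}\sup\{\extCosti{i}(\rho) \mid \rho \in \Lambda^{k-1}(v')\}$. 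It then remains to identify this quantity with $\lambda^k(v)$: this holds when $v$ was already in the arena treated at step $k$ (same formula with $\Lambda^{k-1}$), while if the algorithm has just moved down to a new region and $v \in V^{J_n}$ had not yet been updated, then $\lambda^k(v)=\lambda^0(v)=+\infty$ and the inequality is trivial.

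The main obstacle I anticipate is the bookkeeping around region changes: the update formula is applied only on the currently treated arena $V^{\geq J_n}$, so when the algorithm decrements $n$ one must argue that the vertices of the freshly entered layer $V^{J_n}$ still hold their extremal initial labels $\lambda^0$ (either $0$ or $+\infty$), which makes the inequality hold for free there. I would justify this by a direct secondary induction on the steps, observing that as long as the region of $v$ has not been reached, $v$ falls in the second clause of Definition~\ref{def:update} and hence $\lambda^k(v)=\lambda^{k-1}(v)$; crucially this must be argued without invoking Lemma~\ref{lem:fixpoint-region} or Proposition~\ref{prop:fixpoint}, which both depend on the present lemma. A minor point is to fix the convention $\sup \emptyset = -\infty$ so that the supremum comparison remains valid even if some set $\Lambda^k(v')$ were empty; this does not affect the argument.
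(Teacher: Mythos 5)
Your proof is correct and follows essentially the same route as the paper's: an induction on $k$ establishing $\lambda^{k+1}(v) \le \lambda^k(v)$ for all $v$, with the inclusion $\Lambda^{k+1}(v) \subseteq \Lambda^k(v)$ extracted from Definition~\ref{def:l-consistant} exactly as in your monotonicity lemma, and the inductive step comparing the update formula over $\Lambda^k$ against the one over $\Lambda^{k-1}$. The only difference is bookkeeping: where you track arena membership explicitly (including the secondary induction for freshly entered regions, and rightly avoiding any circular appeal to Lemma~\ref{lem:fixpoint-region}), the paper dispatches those cases implicitly by splitting on whether $\lambda^{k+2}(v) = \lambda^{k+1}(v)$ or $\lambda^{k+1}(v) = +\infty$, which forces the update formula to apply in the remaining case.
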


\begin{proof}
Let us prove by induction on $k$ that for all $v\in \extV$
\begin{eqnarray} \label{eq:assert}
\lambda^{k+1}(v) \leq \lambda^k(v).
\end{eqnarray}
We will get that $\Lambda^{k+1}(v) \subseteq \Lambda^k(v)$ by Definition~\ref{def:l-consistant}.

First, recall that $\lambda^k(v) = 0 $ if and only if $i \in I(v)$, where $i$ is the player owning $v$. So, in this case, $\lambda^{k+1}(v) = \lambda^k(v) = 0$ and assertion (\ref{eq:assert}) is proved. It remains to prove this assertion when $i \not\in I(v)$.

For $k=0$, let $v\in \extV_i$ such that $i \not\in I(v)$, then $\lambda^0(v) = + \infty$ and obviously $\lambda^1(v) \leq \lambda^0(v)$.

Suppose that assertion (\ref{eq:assert}) is true for $k$ and let us prove it for $k+1$. We know by induction hypothesis that for all $v'\in \extV$, $\lambda^{k+1}(v') \leq \lambda^k(v')$, and thus also
\begin{equation} \Lambda^{k+1}(v') \subseteq \Lambda^{k}(v'). \label{eq:recNoninc}\end{equation}

Let us prove that for all $v\in \extV_i$ such that $i \not\in I(v)$, $\lambda^{k+2}(v) \leq \lambda^{k+1}(v)$. If $\lambda^{k+2}(v) = \lambda^{k+1}(v)$ or $\lambda^{k+1}(v) = +\infty$, then the assertion is proved. Otherwise, \[\lambda^{k+1}(v) =  1 + \displaystyle\min_{\substack{(v,v')\in \extE}}  \sup \{ \extCosti{i}(\rho) \mid \rho \in \Lambda^{k}(v') \}.\]
By~\eqref{eq:recNoninc}, it follows that
\[\lambda^{k+1}(v) \geq  1 + \displaystyle\min_{\substack{(v,v')\in \extE}}  \sup \{ \extCosti{i}(\rho) \mid \rho \in \Lambda^{k+1}(v') \} = \lambda^{k+2}(v).\]
And so, the assertion again holds.
\end{proof}

We can now prove Proposition~\ref{prop:fixpoint}.

\begin{proof}[Proof of Proposition~\ref{prop:fixpoint}]
The base case is easily proved. Indeed, as $V^{J_N}$ is a bottom region, we have $\lambda_1(v) = \lambda_0(v)$ for all $v \in V^{J_N}$, and thus the local fixpoint is immediately reached on $V^{J_N}$. Hence with $k_N^* = 0$, for all $m \in \mathbb{N}$ and all $v \in V^{J_N}$, $\lambda^{k_N^*+m}(v) = \lambda^{k_N^*}(v)$.

Let $J_n$ be an element of $\mathcal I$, with $n \in \{1,\ldots,N-1\}$. Suppose that a fixpoint has been reached in the arena $\regionGraphGeqI{J_{n+1}}$ and that the labeling function $\lambda^k$ is udpated on the arena $\regionGraphGeqI{J_{n}}$ as described in Definition~\ref{def:update}. Recall (as already summarized in Lemma~\ref{lem:fixpoint-region}) that the previously computed values of $\lambda^k$ do no longer change on $\regionGraphGeqI{J_{n+1}}$ (a local fixpoint is reached) and they do not change outside of $V^{\geq J_n}$ (by construction). However they can be modified on $\regionGraphI{J_{n}}$. In this region $\regionGraphI{J_{n}}$, there are $|V|$ sequences ${(\lambda^k(v))}_{k\in \mathbb{N}}$, $v \in V^{J_n}$, since the vertices $v$ are of the form $v = (u,J_n)$ where $J_n$ is fixed. These sequences are non increasing by Lemma~\ref{lemma:sequenceNonIncreasing}. As the component-wise ordering over ${(\mathbb{N}\cup\{+\infty\})}^{|V|}$ is a well-quasi-ordering, there exists a natural number $k^*_n$ that we choose as small as possible such that for all $v \in V^{J_n}$, $\lambda^{k_n^*+1}(v) = \lambda^{k_n^*}(v)$. This equality also holds for all $v \in V^{\geq J_n}$ (and not only for $V^{J_n}$), and it follows that for all $m \in \mathbb{N}$ and all $v \in V^{\geq J_n}$, $\lambda^{k_n^*+m}(v) = \lambda^{k_n^*}(v)$.

Notice that when $n$ is decremented in Algorithm~\ref{algo:lambda}, $k$ is incremented at least once showing that the sequence $0 < k_N^* < k_{N-1}^* < \cdots < k_1^* = k^*$ is strictly increasing.

Finally, the last arena processed by the algorithm is $\regionGraphGeqI{J_1} = \extG$. So with $k^* = k^*_1$, we have that for all $v\in V$ and all $m \in \mathbb{N}$, $\lambda^{k^*+m}(v) = \lambda^{k^*}(v)$.
\end{proof}

We are ready to state how we characterize plays that are outcomes of SPEs. This is possible with the global fixpoint computed by  Algorithm~\ref{algo:lambda}.

\begin{thm}[Characterization]%
\label{thm:folkThm}
Let $(\mathcal G, v_0)$ be an initialized quantitative game and $(\extGame,x_0)$ be its initialized extended game. Let $\rho^0$ be a play in $\Plays_X(x_0)$. Then $\rho^0$ is the outcome of an SPE in $(\extGame,x_0)$ if and only if $\rho^0 \in \Lambda^*(x_0)$.
\end{thm}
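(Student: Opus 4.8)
The plan is to prove the two directions of the equivalence separately, exploiting Proposition~\ref{prop:SPE-vwSPE} so that it suffices to reason about very weak SPEs rather than full SPEs. The key observation tying the fixpoint $\lambda^*$ to equilibria is the following interpretation of the label $\lambda^*(v)$ for a vertex $v \in \extVi{i}$ with $i \notin I(v)$: it is the best cost player~$i$ can guarantee for himself by a single one-shot choice of successor, assuming that after this choice the remaining play is required only to be $\lambda^*$-consistent. Since $\lambda^*$ is a fixpoint of the update operator of Definition~\ref{def:update}, we have for every such $v$ the identity $\lambda^*(v) = 1 + \min_{(v,v')\in \extE}\sup\{\extCosti{i}(\rho)\mid \rho \in \Lambda^*(v')\}$, which I would establish first as a standalone consequence of Proposition~\ref{prop:fixpoint}.

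For the forward direction, suppose $\rho^0$ is the outcome of a very weak SPE $\sigma$. I would first show $\rho^0 \in \Lambda^*(x_0)$ by verifying Inequality~(\ref{eq:consistent}) at every position. The point is that $\lambda^*(v)$ upper-bounds the cost that player~$i$ (owner of $v$) can secure by a single profitable deviation in the subgame rooted at any history ending in $v$; since $\sigma$ restricted to that subgame is a (very weak) NE, the owning player has no profitable one-shot deviation, so the cost $\extCosti{i}(\rho_{\geq n})$ realized along $\rho^0$ cannot exceed this guaranteed value $\lambda^*(v)$. This requires an induction on the region order $J_N < \cdots < J_1$: in the bottom region the claim is immediate from $\lambda^0$, and the fixpoint equation propagates the bound upward, since a deviation to a successor $v'$ keeps the continuation inside $\Lambda^*(v')$ by the NE property applied at $v'$.

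For the converse, given $\rho^0 \in \Lambda^*(x_0)$, I would construct a very weak SPE whose outcome is $\rho^0$. The natural construction is a strategy profile $\sigma$ that follows $\rho^0$ as long as players comply, and upon any deviation switches to a $\lambda^*$-consistent continuation realizing the guaranteed bound $\lambda^*(v')$ at the reached vertex $v'$; such a continuation exists precisely because $\Lambda^*(v')$ is nonempty and, by the fixpoint identity, contains a play attaining the supremum (or arbitrarily close, but in this finite-cost setting the sup is attained by a lasso). I would then check the very weak SPE condition: in every subgame $(\rest{\extGame}{h},v)$ the induced profile $\rest{\sigma}{h}$ is a very weak NE, i.e.\ the owner of $v$ has no profitable one-shot deviation. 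This follows because the prescribed continuation already gives player~$i$ a cost equal to the minimal one-shot-guaranteed value $\lambda^*(v)$, so no alternative successor can do better.

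The main obstacle will be the converse direction, specifically making the punishment/continuation strategies simultaneously coherent across all subgames so that the single profile $\sigma$ is genuinely a very weak SPE and not merely a very weak NE at the root. The delicate point is that after a deviation the continuation must itself be $\lambda^*$-consistent so that the same no-profitable-deviation argument applies recursively in the new subgame; here I expect to use the fixpoint property of $\lambda^*$ essentially, namely that the witnessing continuations can be chosen from $\Lambda^*$ uniformly, together with the fact that $\Lambda^*(v)$ is closed under taking suffixes in the appropriate sense. Care must also be taken with infinite-cost ($+\infty$) labels, where ``guaranteeing'' a cost means the player has no finite-cost deviation at all, and with the boundary between regions, which is handled by the bottom-up structure guaranteed by Proposition~\ref{prop:fixpoint}.
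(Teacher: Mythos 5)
The forward direction of your proposal has a genuine gap. You prove $\lambda^*$-consistency of the SPE outcome by induction on the region order, justifying the inductive step by saying that ``a deviation to a successor $v'$ keeps the continuation inside $\Lambda^*(v')$ by the NE property applied at $v'$.'' This is circular: that the subgame outcome of $\sigma$ after the deviation belongs to $\Lambda^*(v')$ is precisely the forward direction of the theorem applied at $v'$, and the NE property alone does not yield it. Your region induction would supply it only when $v'$ lies in a strictly later region, but the deviations you must rule out typically stay inside the \emph{same} region (moving to $v'$ need not make any new player visit his target set), so the inductive hypothesis is unavailable exactly where it is needed. The paper's proof (Proposition~\ref{prop:folkThm1}) instead inducts on the fixpoint iteration index $k$, proving that for every $k$ and every history $hv$ the subgame outcome $\outcome{\rest{\sigma}{h}}{v}$ lies in $\Lambda^k(v)$. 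Because this hypothesis holds at \emph{all} vertices simultaneously, a one-shot deviation to the minimizing successor $v'$ has its continuation in $\Lambda^k(v')$, hence cost at most $\lambda^{k+1}(v)-1$; a violation of $\lambda^{k+1}$-consistency would then be a profitable one-shot deviation, contradicting Proposition~\ref{prop:SPE-vwSPE}. Your outer induction on regions could only be repaired by inserting exactly this inner induction on $k$ (the regions merely organize which labels change at which stage, cf.\ Lemma~\ref{lem:fixpoint-region}); as written, the step fails.

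Your converse direction is essentially the paper's Proposition~\ref{prop:folkThm2}: follow $\rho^0$, and after a deviation of player~$i$ to $v'$ switch to a play of $\Lambda^*(v')$ whose cost for player~$i$ is maximal; the fixpoint identity $\lambda^*(v) = 1+\min_{(v,v')\in\extE}\sup\{\extCosti{i}(\rho)\mid\rho\in\Lambda^*(v')\}$ together with suffix-consistency of the punishment plays then shows that no one-shot deviation is profitable in any subgame. But two facts you take for granted require proof. First, $\Lambda^*(v')\neq\emptyset$ for every reachable $v'$: the paper does not get this from the fixpoint construction itself, but from the known existence of an SPE (Theorem~\ref{thm:existence_SPE}) pushed through the forward direction, so your converse silently depends on the forward direction being established first. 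Second, the supremum $\sup\{\extCosti{i}(\rho)\mid\rho\in\Lambda^*(v')\}$ must be attained, including the case where it is attained as $+\infty$ by an actual play; the paper establishes this by a compactness/K\"onig's lemma argument (Proposition~\ref{prop:sup_max}, Corollary~\ref{cor:sup_max}), whereas your appeal to lassos sketches a different (counter-graph) argument that the paper only develops later and that would itself need to be carried out.
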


Notice that this theorem also provides a characterization of the outcomes of SPEs in $(\mathcal G, v_0)$ by Lemma~\ref{lem:equivSPE}.

The two implications of Theorem~\ref{thm:folkThm} respectively follow from Proposition~\ref{prop:folkThm1} and Proposition~\ref{prop:folkThm2} given below. We first need to establish an important property satisfied by the sets $\Lambda^k(v)$: when for some player~$i$, the costs $\extCosti{i}(\rho)$ associated with the plays $\rho$ in $\Lambda^k(v)$ are unbounded, there actually exists some play in this set that has an infinite cost. In other terms, either $\Lambda^k(v)$ contains at least one play $\rho$ with an infinite cost $\extCosti{i}(\rho)$ or there exists a constant $\N{c}$ such that $\extCosti{i}(\rho) \leq c$ for all $\rho \in \Lambda^k(v)$.

\begin{prop}\label{prop:sup_max}
For every $\N{k}$, for every $v \in \extV$ and for every $i \in \Pi$, the following implication holds: if  $\sup \{\extCosti{i}(\rho) \mid \rho \in \Lambda^k(v)\} = +\infty$, then there exists a play $\rho \in \Lambda^k(v)$ such that $\extCosti{i}(\rho) = + \infty$.
\end{prop}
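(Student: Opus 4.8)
The plan is to exploit the topological structure of the set $\Lambda^k(v)$ inside the space $V^\omega$. Since $V$ is finite, $V^\omega$ equipped with the product topology is compact, so I would first establish that $\Lambda^k(v)$ is a \emph{closed} (hence compact) subset of $V^\omega$. Closedness rests on the observation that each individual constraint of Definition~\ref{def:l-consistant} depends on only finitely many coordinates of the play. Fix a position $n$ and let $i$ be the player owning $\rho_n$, i.e.\ $\rho_n \in \extVi{i}$. If $\lambda^k(\rho_n) = +\infty$, the constraint $\extCosti{i}(\rho_{\geq n}) \leq \lambda^k(\rho_n)$ is vacuous; if $\lambda^k(\rho_n) = c$ is finite, the constraint asserts that some vertex among $\rho_n, \ldots, \rho_{n+c}$ lies in $\extFi{i}$, which is a condition on the coordinates $n, \ldots, n+c$ alone. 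Since $\lambda^k(\rho_n)$ is itself determined by the single coordinate $\rho_n$ (and the $(V_i)_{i\in\Pi}$ partition $V$, so the owner $i$ is unique), each per-position constraint defines a clopen set. Thus $\Lambda^k(v)$ is the intersection of $\Plays_X(v)$ (closed, as the edge relation constrains consecutive coordinates) with all these clopen sets, hence closed.

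With compactness in hand, I would finish by a sequential-compactness argument. Assume $\sup\{\extCosti{i}(\rho) \mid \rho \in \Lambda^k(v)\} = +\infty$; in particular $\Lambda^k(v) \neq \emptyset$. Then for every $N$ there is a play $\rho^{(N)} \in \Lambda^k(v)$ with $\extCosti{i}(\rho^{(N)}) \geq N$, which means that none of the coordinates $\rho^{(N)}_0, \ldots, \rho^{(N)}_{N-1}$ lies in $\extFi{i}$. By compactness the sequence ${(\rho^{(N)})}_{N}$ admits a convergent subsequence whose limit $\rho^*$ again lies in $\Lambda^k(v)$. For each fixed position $p$, every play of the subsequence with index exceeding $p$ avoids $\extFi{i}$ at position $p$; since convergence means the subsequence eventually agrees with $\rho^*$ on coordinate $p$, we obtain $\rho^*_p \notin \extFi{i}$. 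As $p$ is arbitrary, $\rho^*$ never reaches $\extFi{i}$, so $\extCosti{i}(\rho^*) = +\infty$, which is the required play.

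Alternatively, one may phrase the same idea analytically via the continuous cost function $\Cost'_i$ of~(\ref{eq:contFunct}): by the Weierstrass theorem the supremum of $\Cost'_i$ over the compact set $\Lambda^k(v)$ is attained, and since $\Cost'_i$ is a strictly increasing function of $\extCosti{i}$ taking the maximal value $1$ exactly when the cost is $+\infty$, attainment of that maximum yields the sought play. I expect the only delicate point to be the verification that $\Lambda^k(v)$ is closed, i.e.\ that a finite label $\lambda^k(\rho_n)$ turns the corresponding constraint into a condition on finitely many coordinates; once this is established, the rest is routine compactness. Note that the argument is uniform in $k$, since it uses nothing about $\lambda^k$ beyond its being a function into $\mathbb{N} \cup \{+\infty\}$ together with the fact that $\extCosti{i}$ is a first-hitting time.
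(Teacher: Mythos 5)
Your proof is correct and is essentially the paper's own argument in topological dress: the paper likewise extracts a convergent limit from plays of unbounded cost (via K\"onig's lemma, i.e.\ compactness of the play space) and shows that the limit stays in $\Lambda^k(v)$ because any violated constraint must have a finite label $\lambda^k(\rho_m)$ and is therefore witnessed on a finite prefix --- exactly the observation underlying your closedness claim. The only cosmetic difference is that the paper establishes membership of the limit play by an induction on $t \leq k$, which your direct clopen-per-position-constraint argument shows to be unnecessary.
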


\begin{proof}
We assume that $\sup \{\extCosti{i}(\rho) \mid \rho \in \Lambda^k(v)\} = +\infty$, that is, for all $n\in \mathbb{N}$, there exists $\rho^n \in \Lambda^k(v)$ such that $\extCosti{i}(\rho^n) > n$. By K\"onig's lemma, there exist ${(\rho^{n_\ell})}_{\ell \in \mathbb{N}}$ a subsequence of ${(\rho^n)}_{n\in\mathbb{N}}$ and $\rho\in \Plays_X(v)$ such that $\rho = \lim_{\ell \rightarrow +\infty} \rho^{n_\ell}$. Moreover, $\extCosti{i}(\rho) = +\infty$ as $\extCosti{i}$ can be transformed into a continuous function (\cite{DBLP:journals/corr/abs-1205-6346}, see also~\eqref{eq:contFunct}, page~\pageref{eq:contFunct}). Let us prove that \[\rho \in \Lambda^k(v).\] This will establish Proposition~\ref{prop:sup_max}.

We prove by induction on $t$ with $0 \leq t \leq k$ that $\rho \in \Lambda^t(v)$.
If $t=0$, then $\Lambda^0(v) = \Plays_X(v)$ by Lemma~\ref{lem:init}. As $\rho = \lim_{\ell \rightarrow +\infty} \rho^{n_\ell}\in \Plays_X(v)$, it follows that $\rho \in \Lambda^0(v)$.

Let $t > 0$ and assume that the assertion is true for $t-1 < k$. Suppose by contradiction that $\rho \not \in \Lambda^t(v)$.  For all $\ell \in \mathbb{N}$,  as $\rho^{n_\ell} \in \Lambda^k(v)$, we have $\rho^{n_\ell} \in \Lambda^{t}(v)$ by Lemma~\ref{lemma:sequenceNonIncreasing}. Moreover by induction hypothesis we have $\rho \in \Lambda^{t-1}(v)$. From $\rho \in \Lambda^{t-1}(v) \setminus \Lambda^t(v)$, it follows that there exists $m \in \mathbb{N}$ and $j\in \Pi$ with $\rho_m \in V_j$ such that
\begin{equation*}
\begin{array}{ccc}
   \extCosti{j}(\rho_{\geq m}) > \lambda^{t}(\rho_m)  & \text{and} &  \extCosti{j}(\rho_{\geq m}) \leq \lambda^{t-1}(\rho_m).
\end{array}
\end{equation*}
In particular, $\lambda^{t}(\rho_m) < \lambda^{t-1}(\rho_m)$ and thus $\lambda^t(\rho_m) < +\infty$, and player~$j$ does not reach his target set along $\pi = \rho_0 \ldots \rho_m \ldots \rho_{m+\lambda^t(\rho_m)}$. We choose $n_\ell$ large enough such that $\rho$ and $\rho^{n_\ell}$ share a common prefix of length at least $|\pi|$. As $\extCosti{j}(\rho_{\geq m}) > \lambda^{t}(\rho_m)$, it follows that $\extCosti{j}(\rho^{\ell_n}_{\geq m}) > \lambda^{t}(\rho_m)$. We can conclude that $\rho^{n_\ell} \not \in \Lambda^t(v)$ which leads to a contradiction.
\end{proof}

The next corollary is a direct consequence of Proposition~\ref{prop:sup_max} with the convention that the max belongs to $\mathbb{N} \cup \{+ \infty\}$.

\begin{cor}%
\label{cor:sup_max}
$\sup \{\extCosti{i}(\rho) \mid \rho \in \Lambda^k(v)\} = \max \{\extCosti{i}(\rho) \mid \rho \in \Lambda^k(v)\}$.
\end{cor}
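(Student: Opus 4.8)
The plan is to split the argument into two cases according to whether the supremum is finite, and to observe that the only nontrivial case is handled entirely by Proposition~\ref{prop:sup_max}. Write
\[
S = \{\extCosti{i}(\rho) \mid \rho \in \Lambda^k(v)\} \subseteq \mathbb{N} \cup \{+\infty\},
\]
which is nonempty since $\Plays_X(v) \neq \emptyset$ and $\Lambda^k(v)$ contains at least one $\lambda^k$-consistent play. What must be shown is exactly that the value $\sup S$ is attained by some element of $S$, for then $\sup S = \max S$.

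For the first case, suppose $\sup S = +\infty$. This is precisely the hypothesis of Proposition~\ref{prop:sup_max}, which directly provides a play $\rho \in \Lambda^k(v)$ with $\extCosti{i}(\rho) = +\infty$. Hence $+\infty \in S$, and with the stated convention that the maximum is taken in $\mathbb{N} \cup \{+\infty\}$ we obtain $\max S = +\infty = \sup S$.

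For the second case, suppose $\sup S < +\infty$. Then no element of $S$ equals $+\infty$, so $S \subseteq \mathbb{N}$ is a nonempty set of natural numbers bounded above by $\sup S \in \mathbb{N}$. Every such set attains its supremum, i.e.\ $\sup S \in S$, and therefore $\sup S = \max S$ again.

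I do not expect any genuine obstacle here: all of the real difficulty has already been isolated in Proposition~\ref{prop:sup_max}, whose proof uses K\"onig's lemma together with the continuity of the (reachability) cost function to extract a limit play of infinite cost. The corollary then reduces to combining that proposition with the elementary fact that a bounded nonempty subset of $\mathbb{N}$ possesses a maximum, so it is indeed, as claimed, a direct consequence of the preceding result.
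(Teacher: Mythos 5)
Your proof is correct and follows exactly the route the paper intends: the paper states the corollary as a direct consequence of Proposition~\ref{prop:sup_max}, and your two-case split (infinite supremum handled by that proposition, finite supremum handled by the fact that a nonempty bounded subset of $\mathbb{N}$ attains its maximum) is precisely that argument made explicit.
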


The two implications of Theorem~\ref{thm:folkThm} are proved in the following Propositions~\ref{prop:folkThm1} and~\ref{prop:folkThm2}.
Notice that in Proposition~\ref{prop:folkThm1}, we derive the additional property that $\Lambda^*(v)\neq \emptyset$, for all $v\in \Succ^*(x_0)$.
This is necessary to prove Proposition~\ref{prop:folkThm2}.

\begin{prop}%
\label{prop:folkThm1}
If $\sigma$ is an SPE in $(\extGame,x_0)$ then for all $v \in \Succ^*(x_0)$, $\Lambda^*(v) \neq \emptyset$ and $\outcome{\sigma}{x_0}\in \Lambda^*(x_0)$.
\end{prop}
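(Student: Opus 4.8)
The plan is to prove, by induction on the step counter $k$, the following statement that is stronger than what is needed: \emph{for every vertex $v \in \extV$ and every SPE $\tau$ in $(\extGame, v)$, the outcome $\outcome{\tau}{v}$ belongs to $\Lambda^k(v)$.} Proposition~\ref{prop:folkThm1} then follows by instantiating $k = k^*$: applying the statement to $\sigma$ at $x_0$ gives $\outcome{\sigma}{x_0} \in \Lambda^*(x_0)$, and for any $v \in \Succ^*(x_0)$ one obtains $\Lambda^*(v) \neq \emptyset$, either by applying the statement to the restriction $\rest{\sigma}{h}$ along a history $hv$ from $x_0$, or directly from Theorem~\ref{thm:existence_SPE}, since the extended game is itself a reachability game and hence $(\extGame, v)$ admits an SPE whose outcome lands in $\Lambda^*(v)$.

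Before the induction I would record one structural observation that makes the induction hypothesis applicable to restrictions of $\sigma$: for any history $g u$ from $x_0$, a profile is an SPE in the subgame $(\rest{\extGame}{g}, u)$ if and only if it is an SPE in the fresh initialized extended game $(\extGame, u)$. Indeed the two games share the arena reachable from $u$, and for each player $i$ the two cost functions either differ by the additive constant $|g|$ (when $i \notin I(u)$) or are both constant (when $i \in I(u)$); hence every player's preference order over plays, and therefore the Nash condition in every subgame, is preserved. In particular every restriction $\rest{\sigma}{g}$ is again an SPE in $(\extGame, u)$. Throughout, I would use Proposition~\ref{prop:SPE-vwSPE} to treat these SPEs as very weak SPEs, so that in each subgame no player has a profitable \emph{one-shot} deviation.

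The base case $k = 0$ is immediate from Lemma~\ref{lem:init}, since $\Lambda^0(v) = \Plays_X(v)$ contains every outcome. For the inductive step, fix an SPE $\tau$ in $(\extGame, v)$, set $\rho = \outcome{\tau}{v}$, and verify the defining inequality of $\lambda^{k+1}$-consistency at each position $n$. Let $w = \rho_n$ be owned by player~$i$. If $i \in I(w)$, then $\lambda^{k+1}(w) = 0 = \extCosti{i}(\rho_{\geq n})$ and there is nothing to prove. If $i \notin I(w)$ and $w$ lies outside the region $\regionGraphGeqI{J_r}$ currently treated at step $k+1$, then $\lambda^{k+1}(w) = \lambda^{k}(w)$, and the induction hypothesis applied to the SPE of $(\extGame, w)$ obtained by restricting $\tau$ at $w$ gives $\rho_{\geq n} \in \Lambda^k(w)$, whence $\extCosti{i}(\rho_{\geq n}) \leq \lambda^{k}(w) = \lambda^{k+1}(w)$.

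The remaining and central case is $i \notin I(w)$ with $w \in \regionGraphGeqI{J_r}$, where $\lambda^{k+1}(w) = 1 + \min_{(w,w') \in \extE} \sup\{\extCosti{i}(\xi) \mid \xi \in \Lambda^k(w')\}$. For each successor $w'$ of $w$, player~$i$ may perform a one-shot deviation at $w$ steering to $w'$; the ensuing continuation from $w'$ is the outcome of a restriction of $\tau$, hence an SPE outcome in $(\extGame, w')$, so by the induction hypothesis it is some $\xi_{w'} \in \Lambda^k(w')$. Since $i \notin I(w)$, this deviation yields player~$i$ a cost of $1 + \extCosti{i}(\xi_{w'})$, and as $\tau$ is a very weak SPE the deviation is not profitable, giving $\extCosti{i}(\rho_{\geq n}) \leq 1 + \extCosti{i}(\xi_{w'})$. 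Bounding $\extCosti{i}(\xi_{w'}) \leq \sup\{\extCosti{i}(\xi) \mid \xi \in \Lambda^k(w')\}$ and minimizing over $w'$ yields exactly $\extCosti{i}(\rho_{\geq n}) \leq \lambda^{k+1}(w)$, the conventions on $+\infty$ making the estimate go through when costs are infinite. I expect the main obstacle to be bookkeeping rather than depth: getting the direction of the $\sup/\min$ estimate right (the deviation's actual continuation cost sits \emph{below} the $\sup$, which is precisely what the bound requires), cleanly separating the in-region and out-of-region cases so that the induction on the global counter $k$ matches the region-by-region update of Definition~\ref{def:update}, and justifying carefully the structural observation that subgames of the extended game behave like fresh initialized extended games, since this is what legitimizes every appeal to the induction hypothesis.
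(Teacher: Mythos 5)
Your proof is correct, and its engine is the same as the paper's: induction on the step counter $k$, Proposition~\ref{prop:SPE-vwSPE} to reduce to one-shot deviations, and the bound $\extCosti{i}(\rho_{\geq n}) \leq 1 + \extCosti{i}(\xi_{w'}) \leq 1+\sup\{\extCosti{i}(\xi) \mid \xi \in \Lambda^{k}(w')\}$ extracted from the non-profitability of the one-shot deviation to a successor $w'$, whose continuation is an SPE outcome covered by the induction hypothesis. Where you differ is in the bookkeeping, and the comparison is instructive. The paper's induction statement quantifies over all restrictions $\rest{\sigma}{h}$ of the single fixed SPE $\sigma$, i.e., over all $hv \in \Hist_X(x_0)$; since a continuation of a restriction of $\sigma$ is again a restriction of $\sigma$, the induction hypothesis applies to deviation continuations with no further justification, so the paper never needs your structural observation. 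You instead quantify over all SPEs of all fresh initialized games $(\extGame,v)$, which forces you to prove that the subgame $(\rest{\extGame}{g},u)$ and the fresh game $(\extGame,u)$ have the same SPEs; your cost-shift argument for this is sound (for $i \notin I(u)$ the two costs differ by the additive constant $|g|$ thanks to $I$-monotonicity, and for $i \in I(u)$ both are constant), and it makes explicit a property of extended games that the paper exploits only implicitly. In exchange, your formulation yields $\Lambda^*(v)\neq\emptyset$ for free from Theorem~\ref{thm:existence_SPE}, and your argument is direct --- a case split on $i \in I(w)$, on $w$ outside the currently treated sub-arena, and on $w$ inside it --- whereas the paper argues by contradiction, using the inferred strict decrease $\lambda^{k+1}(\rho_n) < \lambda^k(\rho_n)$ to conclude that the min-sup formula of Definition~\ref{def:update} applies at the violating vertex. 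The content of the two case analyses is the same.
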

\begin{proof}

Suppose that $\sigma$ is an SPE in $(\extGame,x_0)$ and let us prove by induction on $k$ that for all $k \in \mathbb{N}$ and all $hv \in \Hist_X(x_0)$,
\[\outcome{\rest{\sigma}{h}}{v}\in \Lambda^k(v).\]

For case $k=0$, this is true by definition of $\Lambda^0(v)$ and Lemma~\ref{lem:init}.

Suppose that this assertion is satisfied for $k\geq 0$ and by contradiction, assume that there exists $hv\in \Hist_X(x_0)$ such that $\outcome{\rest{\sigma}{h}}{v}\not\in \Lambda^{k+1}(v)$. Let $\rho = \outcome{\rest{\sigma}{h}}{v} $, by Definition~\ref{def:l-consistant}, it means that there exist $n \in \mathbb{N}$ and $i \in \Pi$ such that $\rho_n \in V_i$ and
\begin{equation}
    \extCosti{i}(\rho_{\geq n}) > \lambda^{k+1}(\rho_n) \label{eq:1}.
\end{equation}
But by induction hypothesis, we know that
\begin{equation}
    \extCosti{i}(\rho_{\geq n}) \leq \lambda^k(\rho_n). \label{eq:2}
\end{equation}
It follows by~\eqref{eq:1} and~\eqref{eq:2} that $\lambda^{k+1}(\rho_n) < \lambda^k(\rho_n)$ and that
\begin{equation*}
    \lambda^{k+1}(\rho_n) < +\infty \text{, } \lambda^k(\rho_n) \neq 0 \text{ and } \lambda^{k+1}(\rho_n) \neq 0.
\end{equation*}
In regards of Definition~\ref{def:update}, these three relations allow us to conclude that
\[\lambda^{k+1}(\rho_n) = 1 +  \min_{\substack{(\rho_n,v')\in \extE}} \sup\{\extCosti{i}(\rho') \mid \rho' \in \Lambda^{k}(v')\}. \]
Let $v'\in V$ be such that
\begin{equation} \label{eq:3}
\lambda^{k+1}(\rho_n) - 1 = \sup\{\extCosti{i}(\rho') \mid \rho' \in \Lambda^{k}(v')\}.
\end{equation}
Let $h' = h\rho_0\ldots \rho_{n-1} \in \Hist_X(x_0)$, and let us define the one-shot deviating strategy $\tau_i$ from $\sigma_{i\restriction h'}$ such that $\tau_i(\rho_n) = v'$. Let us prove that $\tau_i$ is a one-shot profitable deviation for player $i$ in $(\rest{\extGame}{h'},\rho_n)$.
\begin{align*}
    &\extCosti{i}(h'\outcome{\tau_i,\sigma_{-i|h'}}{\rho_n}) \\
    &= \extCosti{i}(h'\rho_n \outcome{\rest{\sigma}{h'\rho_n}}{v'}) \tag{as $\tau_i$ is a one-shot deviating strategy}\\
    &= |h'\rho_{n}v'| + \extCosti{i}(\outcome{\rest{\sigma}{h'\rho_n}}{v'}) \tag{as $\lambda^{k+1}(\rho_n) \neq 0$}\\ % chktex 1 chktex 36
    &\leq |h'\rho_{n}v'| +\lambda^{k+1}(\rho_n)-1 \tag{by~\eqref{eq:3} and as $\outcome{\rest{\sigma}{h'\rho_n}}{v'} \in \Lambda^k(v')$ by ind.\ hyp.} \\ % chktex 1 chktex 36
    &< |h'\rho_n|+\extCosti{i}(\rho_{\geq n}) & \tag{by~\eqref{eq:1}}\\
    &= \extCosti{i}(h'\outcome{\rest{\sigma}{h'}}{\rho_n})
\end{align*}
This proves that $\sigma$ is not a very weak SPE and so not an SPE by Proposition~\ref{prop:SPE-vwSPE}, which is a contradiction. This concludes the proof.
\end{proof}

\begin{prop}%
\label{prop:folkThm2}
Let $\rho^0 \in \Lambda^*(x_0)$, then $\rho^0$ is the outcome of an SPE in $(\extGame,x_0)$.
\end{prop}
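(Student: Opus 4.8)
The plan is to build an explicit strategy profile $\sigma$ whose outcome is $\rho^0$ and which is a \emph{very weak} SPE; by Proposition~\ref{prop:SPE-vwSPE} this is enough to conclude that $\sigma$ is an SPE. The guiding idea is the classical reference-play-with-punishment construction: along the prescribed path the players follow $\rho^0$, and as soon as a player deviates, the continuation switches to a $\lambda^*$-consistent play from the new vertex that \emph{maximizes} the cost of the deviator. The crucial invariant maintained throughout is that the continuation prescribed after \emph{every} history is itself $\lambda^*$-consistent, so that the fixpoint equation defining $\lambda^*$ can be invoked at each one-shot deviation.

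Concretely, I would define by induction on the length of histories a map $h \mapsto P_h$ assigning to each $h \in \Hist_X(x_0)$ a play $P_h \in \Lambda^*(\Last(h))$. Set $P_{x_0} = \rho^0$, which lies in $\Lambda^*(x_0)$ by hypothesis. Given $P_h$ with $\Last(h) \in V_i$ and a one-vertex extension $hv'$: if $v'$ is the second vertex of $P_h$, put $P_{hv'} = (P_h)_{\geq 1}$, the suffix of $P_h$, which is still $\lambda^*$-consistent since by Definition~\ref{def:l-consistant} every suffix of a $\lambda^*$-consistent play is $\lambda^*$-consistent; otherwise player~$i$ has deviated, and I let $P_{hv'}$ be a play of $\Lambda^*(v')$ realizing $\max\{\extCosti{i}(\rho) \mid \rho \in \Lambda^*(v')\}$. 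Such a play exists because $\Lambda^*(v') \neq \emptyset$ (Proposition~\ref{prop:folkThm1}, using that $v' \in \Succ^*(x_0)$) and because the supremum is attained (Corollary~\ref{cor:sup_max}). The strategy profile is then given by $\sigma_i(h) = $ second vertex of $P_h$ for every $h$ ending in $V_i$.

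The verification splits into two parts. First, since no deviation occurs along the prescribed path, an immediate induction gives $\outcome{\sigma}{x_0} = P_{x_0} = \rho^0$, and more generally $\outcome{\rest{\sigma}{h}}{v} = P_{hv}$ for every $hv \in \Hist_X(x_0)$. Second, to show that $\rest{\sigma}{h}$ is a very weak NE in $(\rest{\extGame}{h}, v)$, I only need to rule out one-shot profitable deviations, which can be made solely by the owner~$i$ of $v$. If $i \in I(v)$, then $\extCosti{i}(P_{hv}) = 0$ and no deviation can help. If $i \notin I(v)$, then $\lambda^*$-consistency of $P_{hv}$ bounds the prescribed cost by $\extCosti{i}(P_{hv}) \leq \lambda^*(v)$, whereas deviating to any $v'$ produces the play $v\cdot P_{hv'}$, whose cost for player~$i$ equals $1 + \extCosti{i}(P_{hv'}) = 1 + \max\{\extCosti{i}(\rho) \mid \rho \in \Lambda^*(v')\} \geq 1 + \min_{(v,v'')\in \extE}\sup\{\extCosti{i}(\rho)\mid\rho\in\Lambda^*(v'')\} = \lambda^*(v)$, the last equality being the fixpoint equation of Definition~\ref{def:update} applied to $\lambda^* = \lambda^{k^*}$. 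Hence every deviation costs at least $\lambda^*(v)$, i.e.\ at least the prescribed cost, and no one-shot deviation is profitable.

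The main obstacle is not the final algebraic comparison but arranging the punishment so that the one-step-deviation inequality is legitimately available at every history. This rests on three ingredients that must be combined carefully: the invariant that each $P_h$ is $\lambda^*$-consistent (so the fixpoint equation genuinely bounds the prescribed cost), the nonemptiness of $\Lambda^*(v')$ for every reachable successor $v'$ (Proposition~\ref{prop:folkThm1}), and the attainment of the supremum as a maximum (Corollary~\ref{cor:sup_max}), which is what makes the worst-case punishing play a genuine, playable continuation. One must also handle the cost bookkeeping in the shifted subgame $(\rest{\extGame}{h},v)$ with some care, checking that the common prefix $h$ cancels in the comparison and that the extra edge $v \to v'$ contributes exactly the ``$+1$'' matching the fixpoint equation, including the degenerate cases $i \in I(v)$ and $i \in I(v')$.
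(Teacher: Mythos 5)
Your proposal is correct and follows essentially the same route as the paper's proof: both reduce to very weak SPEs via Proposition~\ref{prop:SPE-vwSPE}, build the strategy profile by following $\rho^0$ and punishing a one-shot deviation of player~$i$ to $v'$ with a play of $\Lambda^*(v')$ attaining $\max\{\extCostib_i(\rho) \mid \rho \in \Lambda^*(v')\}$ (existence via Proposition~\ref{prop:folkThm1} together with Corollary~\ref{cor:sup_max}), and conclude with the fixpoint equation of Definition~\ref{def:update}. Your invariant that every prescribed continuation lies in $\Lambda^*$ (using suffix-closedness of $\lambda^*$-consistency) is just a slight repackaging of the paper's step where the prescribed play is a suffix of some $\lambda^{k^*+1}$-consistent play $\rho_{j,u}$.
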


\begin{proof}

By Theorem~\ref{thm:existence_SPE}, there exists an SPE in $(\extGame,x_0)$ and thus, by Proposition~\ref{prop:folkThm1}:
\begin{align}
    \Lambda^*(v) \neq \emptyset \text{ for all } v\in \Succ^*(x_0). \label{eq:Lambda_non_Vide}
\end{align}

Let $\rho^0 \in \Lambda^*(x_0)$ and let us show how to construct a very weak SPE $\sigma$ (and so an SPE by Proposition~\ref{prop:SPE-vwSPE}) with outcome $\rho^0$ in $(\extGame,x_0)$. We define $\sigma$ step by step by induction on the subgames of $(\extGame,x_0)$. We first partially build $\sigma$ such as it produces $\rho^0$, i.e., $\outcome{\sigma}{x_0} = \rho^0$.
Now, we define a set of plays which is useful to define $\sigma$ in the subgames.
For all $(i,v')$ such that $(v,v') \in \extE$ with $v,v' \in \Succ^*(x_0)$ and $v \in \extV_i$, we take $\rho_{i,v'} \in \Lambda^{*}(v')$ such that:
\begin{equation}
\extCosti{i}(\rho_{i,v'}) = \max \{\extCosti{i}(\rho') \mid \rho' \in \Lambda^{*}(v')\} \label{eq:argmax}
\end{equation}
Notice that such a play exists by~(\ref{eq:Lambda_non_Vide}) and Corollary~\ref{cor:sup_max}.
The construction of $\sigma$ is done by induction as follows. Consider $hvv' \in \Hist_X(x_0)$ with $v \in \extV_i$ such that $\outcome{\rest{\sigma}{h}}{v}$ is already defined but not yet $\outcome{\rest{\sigma}{hv}}{v'}$. We extend the definition of $\sigma$ as follows:
\begin{equation*}
\outcome{\rest{\sigma}{hv}}{v'} = \rho_{i,v'}
\end{equation*}

Let us prove that $\sigma$ is a very weak SPE\@. Consider the subgame $(\rest{\extGame}{h},v)$ for a given $hv \in \Hist_X(x_0)$ with $v \in \extV_i$ and the one-shot deviating strategy $\sigma'_i$ from $\sigma_{i|h}$ such that $\sigma'_i(v) = v'$. By construction, there exists $\rho_{j,u}$ and $\rho_{i,v'}$ as defined previously and $g \in \Hist_X(x_0)$ such that  $h\outcome{\rest{\sigma}{h}}{v} = g\rho_{j,u}$ and $\outcome{\rest{\sigma}{hv}}{v'} = \rho_{i,v'}$.

We prove that $\sigma'_i$ is not a one-shot profitable deviation by proving that $\extCosti{i}(h\outcome{\rest{\sigma}{h}}{v}) \leq \extCosti{i}(hv \outcome{\rest{\sigma}{hv}}{v'})$. If $i \in I(v)$, then obviously $\extCosti{i}(h\outcome{\rest{\sigma}{h}}{v}) = \extCosti{i}(hv \outcome{\rest{\sigma}{hv}}{v'})$. Otherwise $i \not\in I(v)$.
We have that $\rho = \outcome{\rest{\sigma}{h}}{v}$ is suffix of $\rho_{j,u}$, that is, $\rho = \rho_{j,u,\geq n}$ for some $n \in \mathbb{N}$ and that $\rho_{j,u}\in \Lambda^*(u) = \Lambda^{k^*}(u) = \Lambda^{k^*+1}(u)$ (by~\eqref{eq:argmax} and the fixpoint).
It follows that:
\begin{align}
\extCosti{i}(\rho)   &= \extCosti{i}(\rho_{j,u,\geq n}) \nonumber \\
                &\leq \lambda^{k^*+1}(v) \nonumber \\
               &= 1 + \min_{\substack{(v,w)\in \extE}} \sup\{\extCosti{i}(\rho') \mid \rho'\in \Lambda^{k^*}(w)\}  \nonumber\\
               &\leq 1 + \sup\{\extCosti{i}(\rho') \mid \rho'\in \Lambda^{k^*}(v')\}  \nonumber\\
               &= 1 + \extCosti{i}(\rho_{i,v'}). \label{eq:good}
\end{align}
Thus we have that
\begin{align*}
    \extCosti{i}(h\outcome{\rest{\sigma}{h}}{v})
    &= |hv|+\extCosti{i}(\rho) \\
    &\leq |hv|+1+ \extCosti{i}(\rho_{i,v'}) \tag{by~\eqref{eq:good}} \\
    &= \extCosti{i}(hv\rho_{i,v'})\\
    &= \extCosti{i}(hv\outcome{\rest{\sigma}{hv}}{v'}).
\end{align*}
This concludes the proof.
\end{proof}

\begin{exa}%
\label{ex:algo}
Let us come back to the running example of Figure~\ref{fig:fuseeExt} and illustrate Proposition~\ref{prop:fixpoint}. The different steps of Algorithm~\ref{algo:lambda} are given in Table~\ref{table:fusee}. The columns indicate the vertices according to their region, respectively $\Pi$, $\{2\}$, and $\emptyset$. Notice that for the region $\Pi$, we only write one column $v$ as for all vertices $(v,\Pi)$ the value of $\lambda$ is equal to 0 all along the algorithm.

Recall that $J_1  = \emptyset < J_2 = \{2\} < J_3 = \Pi =\{1,2\}$. The algorithm begins with the arena $\regionGraphI{J_{3}}$. A local fixpoint ($\lambda^1 = \lambda^0$) is immediately reached because all vertices belong to the target set of both players in $X^{J_3}$.
Thus the first local fixpoint is reached with $k_3^* = 0$.

The algorithm then treats the arena $\regionGraphGeqI{J_{2}}$. By Lemma~\ref{lem:fixpoint-region}, it is enough to consider the region $\regionGraphI{J_{2}}$. Let us explain how to compute $\lambda^2(v)$ from $\lambda^1(v)$ on this region.
For $v =(v_7, \{2 \})$, we have that $\lambda^2(v)= 1 + \min_{(v,v') \in E^X} \sup \{ \Cost_1(\rho) \mid \rho \in \Lambda^1(v') \}$. As the unique successor of $v$ is $(v_2,\{1,2\})$, all $\lambda^1$-consistent plays beginning in this successor have cost $0$ for player~$1$. So, we have that $\lambda^2(v) = 1$.
For the computation of $\lambda^2(v_6,\{2\})$, the same argument holds since $(v_6,\{2\})$ has the unique successor $(v_7,\{2\})$.
The vertex $(v_1,\{2\})$ has two successors: $(v_6,\{2\})$ and $(v_3,\{2\})$.\footnote{The computation of $\lambda^2(v_1,\{2\})$ was already explained in Example~\ref{ex:update}.}
Again, we know that all $\lambda^1$-consistent plays beginning in $(v_6,\{2\})$ have cost $2$ for player~$1$.
From $(v_3,\{2\})$ however, one can easily check that the play $(v_3,\{2\})(v_0,\{2\}){((v_4,\{2\}))}^\omega$ is $\lambda^1$-consistent and has cost $+\infty$ for player~1.
Thus, we obtain that $\lambda^2(v_1,\{2\}) = 3$.
For the other vertices of $\regionGraphI{J_{2}}$, one can see that $\lambda^2(v)=\lambda^1(v)$.

Finally, we can check that the local fixed point is reached in the arena $\regionGraphGeqI{J_{2}}$ (resp.\ $\regionGraphGeqI{J_{1}}$) with $\lambda^3 = \lambda^2$ (resp.\ $\lambda^6 = \lambda^5 = \lambda^*$). Therefore the respective fixpoints are reached with $k^*_2 = 2$ and $k^*_1 = 5$. The labeling function indicated in Figure~\ref{fig:fuseeExt} is the one of $\lambda^*$.
\qed\end{exa}

\begin{table}
\centering
\scalebox{0.8}{
\begin{tabular}{lllllllllllllll} \toprule
Region                & \multicolumn{1}{c}{$\{1,2\}$} & \multicolumn{7}{c}{$\{2\}$}                                                  & \multicolumn{6}{c}{$\emptyset$}                                      \\\cmidrule(lr){2-2}\cmidrule(lr){3-9}\cmidrule(lr){10-15}
                      & $v$                      & $v_0$ & $v_1$     & $v_6$     & $v_7$     & $v_3$     & $v_4$     & $v_5$     & $v_0$     & $v_1$     & $v_6$     & $v_7$     & $v_3$     & $v_4$     \\ \midrule
$\lambda^0 = \lambda^1$           & 0                              & 0     & $+\infty$ & $+\infty$ & $+\infty$ & $+\infty$ & $+\infty$ & $+\infty$ & $+\infty$ & $+\infty$ & $+\infty$ & $+\infty$ & $+\infty$ & $+\infty$ \\
$\lambda^2=\lambda^3$ & 0                              & 0     & $3$       & $2$       & 1         & $+\infty$ & $+\infty$ & $+\infty$ & $+\infty$ & $+\infty$ & $+\infty$ & $+\infty$ & $+\infty$ & $+\infty$ \\
$\lambda^4$           & 0                              & 0     & 3         & 2         & 1         & $+\infty$ & $+\infty$ & $+\infty$ & $+\infty$ & 3         & 2         & 1         & $+\infty$ & $+\infty$ \\
$\lambda^5=\lambda^*$ & 0                              & 0     & 3         & 2         & 1         & $+\infty$ & $+\infty$ & $+\infty$ & 4         & 3         & 2         & 1         & $+\infty$ & $+\infty$ \\\bottomrule
\end{tabular}
}
\caption{The different steps of the algorithm computing $\lambda^*$ for the extended game of Figure~\ref{fig:fuseeExt}}%
\label{table:fusee}
\end{table}

\section{Counter graph}%
\label{section:counterGraph}

In the previous section, we have introduced the concept of labeling function $\lambda$ and the constraints that it imposes on plays. We have also proposed an algorithm that computes a sequence of labeling functions ${(\lambda^k)}_{k \in \mathbb{N}}$ until reaching a fixpoint $\lambda^*$ such that the plays that are $\lambda^*$-consistent are exactly the SPE outcomes. In this section, given a labeling function $\lambda$, we introduce the concept of \emph{counter graph} such that its infinite paths coincide with the plays that are $\lambda$-consistent. We then show that the counter graph associated with the fixpoint function $\lambda^*$ has an exponential size, an essential step to prove PSPACE membership of the constraint problem.

For the entire section, we fix a reachability game $\mathcal{G} = (G, {(F_i)}_{i\in\Pi}, {(\Cost_i)}_{i\in \Pi})$ with an arena $G = (\Pi, V, {(V_i)}_{i\in \Pi}, E)$, and $v_0$ an initial vertex.
Let $\extGame = (\extG,  {(\extFi{i})}_{i\in\Pi}, {(\extCosti{i})}_{i\in \Pi})$ with the arena $\extG = (\Pi, \extV, {(\extVi{i})}_{i\in\Pi}, \extE)$ be its associated extended game.
Furthermore, when we speak about a play $\rho$ we always mean a play in the extended game $\extGame$.

A labeling function $\lambda$ give constraints on costs of plays from each vertex in $X$, albeit only for the player that owns this vertex. However, by the property of $\lambda$-consistence, constraints for a player carry over all the successive vertices, whether they belong to him or not. In order to check efficiently this property, we introduce the counter graph to keep track \emph{explicitly} of the accumulation of constraints for \emph{all} players at each step of a play. Let us first fix some notation.

\begin{defi}[Restriction and maximal finite range]\label{def:maxrange}
Let $\lambda: \extV \rightarrow \mathbb{N} \cup \{+\infty \}$ be a labeling function.
\begin{itemize}
    \item We consider \emph{restrictions} of $\lambda$ to sub-arenas of $\extV$ as follows. Let $n \in \{1,\ldots,N\}$, we denote by $\lambda_n: V^{J_n} \rightarrow \mathbb{N} \cup \{+\infty \}$ the restriction of $\lambda$ to $V^{J_n}$. Similarly we denote by $\lambda_{\geq n}$ (resp.\ $\lambda_{> n}$) the restriction of $\lambda$ to $V^{\geq J_n}$ (resp.\ $V^{> J_n}$).
    \item The \emph{maximal finite range} of $\lambda$, denoted by $\maxFR$, is equal to \[\maxFR = \max\{c \in \mathbb{N} \mid \lambda(v)=c \text{~for some~} v \in V^X \}\]
with the convention that $\maxFR = 0$ if $\lambda$ is the constant function $+\infty$. We also extend this notion to restrictions of $\lambda$ with the convention that $\mFR{}{> n}=0$ if $J_n$ is a bottom region.
\end{itemize}
\end{defi}

\noindent
Notice that in the definition of maximal finite range, we only consider the \emph{finite} values of $\lambda$ (and not the value $+\infty$).

\begin{defi}[Counter Graph]\label{def:counter_graph}
Let $\lambda: V^X \to \mathbb{N} \cup \lbrace +\infty \rbrace$ be a labeling function. Let $\mathcal{K}:= \lbrace 0,\dots,K \rbrace \cup \lbrace +\infty \rbrace$ with $K = \maxFR{}{}$. The \emph{counter graph} $\C{}$ for $\mathcal{G}$ and $\lambda$ is equal to $\C{} = (\Pi, V^C, {(V^C_{i})}_{i\in\Pi}, E^C)$, such that:
\begin{itemize}
    \item $V^C = V^X \times \mathcal{K}^{|\Pi|}$
    \item $(v,{(c_i)}_{i \in \Pi} ) \in V^C_j$ if and only if $v\in V^X_j$
    \item $((v,{(c_i)}_{i \in \Pi} ),(v',{(c'_i)}_{i \in \Pi})) \in E^C$ if and only if:
    \begin{itemize}
        \item $(v,v')\in \extE$, and
        \item for every $i \in \Pi$
        \[c'_i =
        \begin{cases}
        0 & \text{~if~}  i \in I(v')\\
         c_i - 1 & \text{~if~} i \notin I(v'),  v' \notin V^X_i \text{~and~} c_i > 1 \\
        \min (c_i - 1, \lambda(v')) &\text{~if~} i \notin I(v'), v' \in V^X_i \text{~and~} c_i > 1.
        \end{cases} \]

      \end{itemize}
\end{itemize}
\end{defi}

\noindent
Intuitively, the counter graph is constructed such that once a value $\lambda(v)$ is finite for a vertex $v \in \extV_i$ along a play in $\extGame$, the corresponding path in $\C{}$ keeps track of the induced constraint by \emph{(i)} decrementing the counter value $c_i$ for the concerned player~$i$ by $1$ at every step, \emph{(ii)} updating this counter if a stronger constraint for player~$i$ is encountered by visiting a vertex $v'$ with a smaller value $\lambda(v')$, and \emph{(iii)} setting the counter $c_i$ to $0$ if player~$i$ has reached his target set.

Note that in the counter graph, there may be some vertices with \emph{no outgoing edges}.
Indeed, consider a vertex $(v,{(c_i)}_{i \in \Pi} ) \in V^C$ such that $c_j = 1$ for some player~$j$.
By construction of $\C{}$, the only outgoing edges from $(v,{(c_i)}_{i \in \Pi} )$ must link to vertices $(v',{(c'_i)}_{i \in \Pi})$ such that $(v,v') \in E^X$,  $c'_j =0$ and $j \in I(v')$ (as in Definition~\ref{def:counter_graph} the two other cases require that $c_j > 1$).
However, it may be the case that no successor $v'$ of $v$ in $X$ is such that $j \in I(v')$.

Note as well that for each vertex $v \in V^X$, there exist many different vertices $(v,{(c_i)}_{i \in \Pi} )$ in $\C{}$, one for each counter values profile.
However, the intended goal of the counter graph is to monitor explicitly the constraints accumulated by each player along a play in $\extGame$ regarding the function $\lambda$.
Thus, we will only consider paths in $\C{}$ that start in vertices $(v,{(c_i)}_{i \in \Pi} )$ such that the counter values correspond indeed to the constraint at the \emph{beginning of a play} in $\extGame$ regarding $\lambda$:

\begin{defi}[Starting vertex in $\C{}$]\label{def:associated_vertex_counter_graph}
Let $v \in V^X$.
We distinguish one vertex $v^C = (v,   {(c_i)}_{i \in \Pi} )$ in $V^C$,
such that for every $i\in \Pi$:
\[c_i=
\begin{cases}
0 &\text{ if } i \in I(v) \\
\lambda(v) &\text{ if } i\notin I(v) \text{ and } v \in V^X_i \\
+\infty &\text{ otherwise.}
\end{cases}\]
We call $v^C$ the \emph{starting vertex} associated with $v$, and denote by $\mathrm{SV}(\lambda)$ the set of all starting vertices in $\C{}$.
\end{defi}

\begin{exa}%
\label{ex:compteurs}
Recall the extended game $(\extGame,x_0)$ of Figure~\ref{fig:fuseeExt}, and the labeling function $\lambda$ whose values are indicated under each vertex. In Figure~\ref{fig:counterGraph}, we illustrate a part of the counter graph $\C{}$.
In Example~\ref{ex:lambda}, we have shown that the play \[ \rho = (v_0,\emptyset)(v_1,\emptyset)(v_6,\emptyset)(v_7,\emptyset){((v_2,\Pi)(v_0,\Pi)(v_1,\Pi)(v_6,\Pi)(v_7,\Pi))}^\omega\] was $\lambda$-consistent.
Let us show that there is a corresponding infinite path $\pi$ that starts in ${(v_0,\emptyset)}^C = (v_0, \emptyset, (+\infty, 4))$ in $\C{}$.
Following Definition~\ref{def:counter_graph}, we see that in $\C{}$, there exists an edge between $(v_0, \emptyset, (+\infty, 4))$ and $(v_1, \emptyset, (3, 3))$ and that
\[\pi = {(v_0,\emptyset)}^C(v_1, \emptyset, (3, 3))(v_6, \emptyset, (2, 2))(v_7, \emptyset, (1, 1))\pi'^\omega\]
with
\[\pi' = (v_2, \Pi, (0, 0))(v_0, \Pi, (0, 0))(v_1, \Pi, (0, 0))(v_6, \Pi, (0, 0)) (v_7, \Pi, (0, 0)).\]
Come back now to the play $\rho' = {((v_0, \emptyset)(v_4,\emptyset))}^\omega$ described in Example~\ref{ex:lambda}, which is not $\lambda$-consistent.
From ${(v_0,\emptyset)}^C = (v_0, \emptyset, (+\infty, 4))$, there is an edge to $(v_4,\emptyset,(+\infty,3))$, then to $(v_0, \emptyset, (+\infty, 2))$ and to $(v_4,\emptyset,(+\infty,1))$. For the latter vertex, there is no outgoing edge back to $(v_0,\emptyset,(+\infty,0))$ because $2 \notin I(v_0,\emptyset)$. Therefore there is no infinite path starting in ${(v_0,\emptyset)}^C$ in $\C{}$ that corresponds to $\rho'$.\qed%

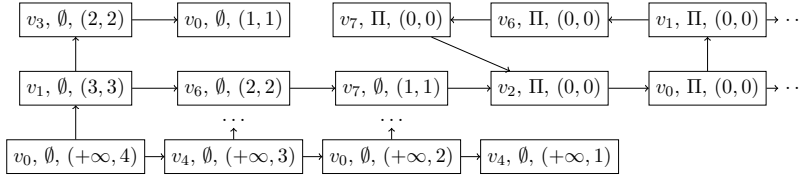
\begin{figure}
   \centering
   \scalebox{0.7}{
    \begin{tikzpicture}
       \node[draw,minimum width=25pt, minimum height=15pt] (v01) at (0,0){$v_0$, $\emptyset$, $(+\infty,4)$};

       \node[draw,minimum width=25pt, minimum height=15pt] (v41) at (3,0){$v_4$, $\emptyset$, $(+\infty,3)$};

       \node[draw, minimum width=25pt, minimum height=15pt] (v02) at (6,0){$v_0$, $\emptyset$, $(+\infty,2)$};

       \node[draw,minimum width=25pt, minimum height=15pt] (v42) at (9,0){$v_4$, $\emptyset$, $(+\infty,1)$};

       \node[draw,minimum width=25pt, minimum height=15pt] (v11) at (0,1.3){$v_1$, $\emptyset$, $(3,3)$};

       \node[draw,minimum width=25pt, minimum height=15pt] (v61) at (3,1.3){$v_6$, $\emptyset$, $(2,2)$};

       \node[draw,minimum width=25pt, minimum height=15pt] (v71) at (6,1.3){$v_7$, $\emptyset$, $(1,1)$};

       \node[draw,minimum width=25pt, minimum height=15pt] (v21) at (9,1.3){$v_2$, $\Pi$, $(0,0)$};

       \node[draw,  minimum height=15pt] (v03) at (12,1.3){$v_0$, $\Pi$, $(0,0)$};

       \node[draw,minimum width=25pt, minimum height=15pt] (v12) at (12,2.6){$v_1$, $\Pi$, $(0,0)$};

       \node[draw,minimum width=25pt, minimum height=15pt] (v62) at (9,2.6){$v_6$, $\Pi$, $(0,0)$};

       \node[draw,minimum width=25pt, minimum height=15pt] (v72) at (6,2.6){$v_7$, $\Pi$, $(0,0)$};

       \node[draw,minimum width=25pt, minimum height=15pt] (v31) at (0,2.6){$v_3$, $\emptyset$, $(2,2)$};

       \node[draw,minimum width=25pt, minimum height=15pt] (v04) at (3,2.6){$v_0$, $\emptyset$, $(1,1)$};

       \draw[->] (v01) -- (v41);

       \draw[->] (v41) -- (v02);

       \draw[->] (v02)--(v42);

       \draw[->] (v01) --(v11);

       \draw[->] (v11) -- (v61);

       \draw[->] (v61)--(v71);

       \draw[->] (v71) -- (v21);

       \draw[->] (v21) -- (v03);

       \draw[->] (v03) -- (v12);

       \draw[->] (v12) -- (v62);

       \draw[->] (v62) -- (v72);

       \draw[->] (v72) -- (v21);

       \draw[->] (v11) -- (v31);

       \draw[->] (v31)--(v04);

       \node (dot1) at (3,0.7){$\ldots$};

       \draw[->] (v41.north) -- (dot1);

       \node (dot2) at (6,0.7){$\ldots$};

       \draw[->] (v02) -- (dot2);

       \node (dot3) at (13.7,1.3){$\ldots$};

       \draw[->] (v03) -- (dot3);

       \node (dot4) at (13.7,2.6){$\ldots$};

       \draw[->] (v12) -- (dot4);

       \end{tikzpicture}
    }
    \caption{Part of the counter graph $\C{}$ associated with the game of Figure~\ref{fig:fusee}}%
    \label{fig:counterGraph}
\end{figure}
\end{exa}

There exists a correspondence between $\lambda$-consistent plays in $\extGame$ and \emph{infinite} paths from starting vertices in $\C{}$ in the following way. On one hand, every play $\rho$ in $\extGame$ that is not $\lambda$-consistent does not appear in the counter graph: the first constraint regarding $\lambda$ that is violated along $\rho$ is reflected by a vertex in $\C{}$ with a counter value getting to $1$ and no outgoing edges. On the other hand, $\lambda$-consistent plays in $\extGame$ have a corresponding infinite path in the counter graph $\C{}$. We call \emph{valid paths} the infinite paths of $\C{}$.  This correspondence is formalized in the two following lemmas:

\begin{lem}\label{lem:lambda_consistent_rho_X_to_rho_C}
Let $\rho = \rho_0\rho_1 \ldots$ be a $\lambda$-consistent play in $\Plays_X(v)$. Then there exists an associated infinite path $\pi = \pi_0\pi_1 \ldots$ in $\C{}$ such that:
\begin{itemize}
    \item $\pi_0 = v^C$,
    \item $\rho$ is the projection of $\pi$ on $V^X$, that is, $\pi_n$ is of the form $(v',{(c'_i)}_{i \in \Pi} )$ with $v'=\rho_n$, for every $\N{n}$.
\end{itemize}
\end{lem}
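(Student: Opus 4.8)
The plan is to build the path $\pi$ by walking along $\rho$ in the first component while letting the deterministic update rule of the counter graph fix the second component, starting from the starting vertex $v^C$. Concretely, I set $\pi_0 = v^C = (\rho_0, (c_i^0))$ as in Definition~\ref{def:associated_vertex_counter_graph}, and given $\pi_n = (\rho_n, (c_i^n))$ I define $\pi_{n+1} = (\rho_{n+1}, (c_i^{n+1}))$ by applying the three cases of Definition~\ref{def:counter_graph} to the edge $(\rho_n, \rho_{n+1}) \in E^X$. With this definition the two required properties, $\pi_0 = v^C$ and ``$\rho$ is the projection of $\pi$ on $V^X$'', are immediate. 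The only thing that needs proof is that $\pi$ is a genuine \emph{infinite} path, i.e.\ that at every step the prescribed update is actually defined and yields an edge of $E^C$. As noted after Definition~\ref{def:counter_graph}, the only way this can fail is if, for some player $i$, $i \notin I(\rho_{n+1})$ while $c_i^n \le 1$: then none of the three cases applies and $\pi_n$ has no successor projecting on $\rho_{n+1}$.

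I would rule this out by proving, simultaneously with well-definedness, the invariant
\begin{equation*}
c_i^n = 0 \text{ if } i \in I(\rho_n), \qquad \Cost_i(\rho_{\ge n}) \le c_i^n \text{ if } i \notin I(\rho_n),
\end{equation*}
for every $n$ and every $i \in \Pi$, by induction on $n$. The base case follows from Definition~\ref{def:associated_vertex_counter_graph} together with $\lambda$-consistency (Definition~\ref{def:l-consistant}) applied at $\rho_0$: if $\rho_0 \in V^X_i$ and $i \notin I(\rho_0)$ then $c_i^0 = \lambda(\rho_0) \ge \Cost_i(\rho_{\ge 0})$, and otherwise $c_i^0 = +\infty$. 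For the inductive step, when $i \in I(\rho_{n+1})$ the first case of the update gives $c_i^{n+1} = 0$; when $i \notin I(\rho_{n+1})$, $I$-monotonicity (\ref{eq:increasing}) gives $i \notin I(\rho_n)$ and hence $\Cost_i(\rho_{\ge n}) = 1 + \Cost_i(\rho_{\ge n+1})$, and the update (case two or three) combined with $\lambda$-consistency at $\rho_{n+1}$ (which yields $\Cost_i(\rho_{\ge n+1}) \le \lambda(\rho_{n+1})$ when $\rho_{n+1} \in V^X_i$) gives $c_i^{n+1} \ge \Cost_i(\rho_{\ge n+1})$.

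The invariant is exactly what closes the induction and finishes the proof. Indeed, whenever $i \notin I(\rho_{n+1})$ we have $i \notin I(\rho_n)$, so $\rho_n, \rho_{n+1} \notin \extFi{i}$ and therefore $\Cost_i(\rho_{\ge n}) \ge 2$; the invariant then gives $c_i^n \ge 2 > 1$, so case two or three of the update applies to player~$i$, while for $i \in I(\rho_{n+1})$ case one applies. Hence at every step all players' updates are defined, the edge $(\pi_n, \pi_{n+1}) \in E^C$ exists, and $\pi$ is an infinite path with the required properties. I expect the main obstacle to be precisely this last step --- showing the path never stalls --- which is the only place where $\lambda$-consistency is genuinely used; the delicate point is the off-by-one bookkeeping, namely deriving the sharp bound $c_i^n \ge 2$ (rather than merely $c_i^n \ge 1$), which is what keeps an outgoing edge available exactly as long as player~$i$'s target is still unreached one step ahead.
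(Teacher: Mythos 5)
Your proof is correct, and the construction of $\pi$ is exactly the paper's: walk along $\rho$ in the first component and let the update rule of Definition~\ref{def:counter_graph} drive the counters, starting from $v^C$. Where the two proofs genuinely differ is in how they show the path never stalls. The paper argues by contradiction: if some edge were missing, a player~$i$ would have counter value $1$ at $\pi_{n-1}$ while $i \notin I(\rho_n)$; it then traces this finite counter backwards to the largest index $m \leq n-1$ at which it was set to a finite value $d = \lambda(\rho_m)$ with $\rho_m \in \extVi{i}$, notes that it decreased by exactly one per step since then, and applies $\lambda$-consistency \emph{at $\rho_m$} to conclude that player~$i$ reaches his target set by step $m+d = n$, forcing $i \in I(\rho_n)$, a contradiction. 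You instead strengthen the statement to the forward invariant $c^n_i = 0$ if $i \in I(\rho_n)$ and $c^n_i \geq \extCosti{i}(\rho_{\geq n})$ otherwise, proved by induction in lockstep with well-definedness, applying $\lambda$-consistency \emph{locally} at every vertex of $\rho$ owned by player~$i$. Each route has its merits: your invariant makes the argument entirely local --- the guard $c_i > 1$ of Definition~\ref{def:counter_graph} drops out of $\extCosti{i}(\rho_{\geq n}) \geq 2$ (both $\rho_n$ and $\rho_{n+1}$ lie outside $\extFi{i}$, by $I$-monotonicity~(\ref{eq:increasing})), and it makes explicit the intended semantics of the counters, namely that they dominate the remaining time to target; the paper's trace-back is shorter in prose, but its pivotal claim (``there must exist a largest index $m$ such that \dots the counter decreases by exactly $1$ at each step'') is asserted rather than derived, and your induction is precisely the bookkeeping that justifies it.
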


\begin{proof}
Let $v \in V^X$ and $\rho$ be a play in $\extGame$ such that $\rho \in \Lambda(v)$ and $\rho_0= v$.
We build the corresponding infinite path $\pi$ in $\C{}$ iteratively as follows:
Let $\pi_0:= v^C$.
Let $\N{n}$, $n\geq 1$.
Suppose that $\pi_{< n}$ has been already constructed, we show how to choose $\pi_n$.
Suppose $\pi_{n-1} = (v',{(c'_i)}_{i \in \Pi} )$.
Then $\pi_n:= (v'', {(c''_i)}_{i \in \Pi} )$, where:
\begin{itemize}
    \item $v'' = \rho_n$,
    \item for every $i \in \Pi$:
    \[ c''_i = \begin{cases}
    0 & \text{~if~}  i \in I(v''),\\
         c'_i - 1 & \text{~if~} i \notin I(v'') \text{~and~}  v'' \notin V^X_i, \\
        \min (c'_i - 1, \lambda(v'')) &\text{~if~} i \notin I(v'') \text{~and~} v'' \in V^X_i
    \end{cases}\]
\end{itemize}

\noindent
Let us show that $((v',{(c'_i)}_{i \in \Pi} ),(v'', {(c''_i)}_{i \in \Pi} )) \in E^C$. As $\rho$ is a play in $\extGame$, we clearly have $(v',v'') \in E^X$. Assume now, towards contradiction, that $((v',{(c'_i)}_{i \in \Pi} ),(v'', {(c''_i)}_{i \in \Pi} )) \notin E^C$, that is, there is no outgoing edge from vertex $(v',{(c'_i)}_{i \in \Pi} )$ in the counter graph. By Definition~\ref{def:counter_graph}, this means that there exists a player~$i$ such that $c'_i = 1$ and $i \notin I(v'')$.

As $c'_i = 1$, by Definition~\ref{def:counter_graph} and construction of $\pi_0 \ldots \pi_{n-1}$, there must exist a largest index $m \leq n-1$ such that $\rho_m \in V^X_i$, $\lambda(\rho_m) = d$ with $d > 0$ and such that the counter value for player~$i$ decreases by exactly $1$ at each step from vertex $\pi_m$ until reaching value $c'_{i} = 1$ at vertex $\pi_{n-1} = v'$. Since $\rho$ is $\lambda$-consistent, player~$i$ visits his target set along $\rho$ in at most $d$ steps, thus at most at vertex $v''$, and thus $i \in I(v'')$, which is a contradiction.
\end{proof}

\begin{lem}\label{lem:chemin_infini_compteurs}
Let $v^C = (v,{(c_i)}_{i \in \Pi} )$ be a starting vertex in $\mathrm{SV}(\lambda)$.
Let $\pi = \pi_0\pi_1\ldots$ be an infinite path in $\C{}$ such that $\pi_0 = v^C $. Then there exists a corresponding play $\rho = \rho_0\rho_1 \ldots$ in $\extGame$ such that:
\begin{itemize}
    \item $\rho$ is $\lambda$-consistent,
    \item $\rho$ is the projection of $\pi$ on $V^X$, that is, $\rho_n = v'$ with $\pi_n=(v',{(c'_i)}_{i \in \Pi} )$, for every $\N{n}$.
\end{itemize}

\end{lem}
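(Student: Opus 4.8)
The plan is to show that the projection $\rho$ of $\pi$ onto $V^X$ is simultaneously a genuine play of $\extGame$ and $\lambda$-consistent, which is exactly the converse of Lemma~\ref{lem:lambda_consistent_rho_X_to_rho_C}. First I would define $\rho_n = v'$ whenever $\pi_n = (v', {(c'_i)}_{i \in \Pi})$. Since every edge of $\C{}$ satisfies $(v,v') \in \extE$ by Definition~\ref{def:counter_graph}, the sequence $\rho = \rho_0 \rho_1 \ldots$ is an infinite path in $\extG$ starting in $v$, hence a play in $\Plays_X(v)$ with the required projection property.

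The core task is to verify the consistency inequality $\extCosti{i}(\rho_{\geq n}) \leq \lambda(\rho_n)$ of Definition~\ref{def:l-consistant} for every $n$ and every $i$ with $\rho_n \in V^X_i$. Two cases are immediate: if $i \in I(\rho_n)$ then $\extCosti{i}(\rho_{\geq n}) = 0$, and if $\lambda(\rho_n) = +\infty$ the inequality is vacuous. So I would concentrate on the remaining case, $i \notin I(\rho_n)$ with $d := \lambda(\rho_n) \in \mathbb{N}$, where the goal becomes to prove that player~$i$ visits $\extFi{i}$ within $d$ steps along $\rho_{\geq n}$.

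The key observation, and the main obstacle, concerns the behaviour of the $i$-th counter along $\pi$ as dictated by the update rule of Definition~\ref{def:counter_graph}. I would first note that the counter value for player~$i$ at $\pi_n$ is at most $d$: this holds with equality at $n = 0$ by Definition~\ref{def:associated_vertex_counter_graph} (as $v \in V^X_i$ and $i \notin I(v)$), and for $n > 0$ it follows from the third update case $c'_i = \min(c_i - 1, \lambda(\rho_n))$, which caps the value at $\lambda(\rho_n) = d$. Next, whenever $i \notin I(\rho_{n+m})$, the only applicable cases are $c'_i = c_i - 1$ or $c'_i = \min(c_i - 1, \lambda(\rho_{n+m}))$, and both (i) require $c_i > 1$ for the edge to exist at all, and (ii) decrease the counter by at least one. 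A short induction then shows that if $i \notin I(\rho_{n+m})$ for all $1 \leq m \leq k$, the counter at $\pi_{n+m}$ is at most $d - m$ for each such $m$.

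To conclude, I would argue by contradiction: suppose $i \notin I(\rho_{n+m})$ for all $1 \leq m \leq d$. Then the counter at $\pi_{n+d-1}$ is at most $d - (d-1) = 1$, yet the edge $\pi_{n+d-1} \to \pi_{n+d}$, which exists because $\pi$ is \emph{infinite}, requires this counter to exceed $1$, a contradiction. Hence at least one of $\rho_{n+1}, \ldots, \rho_{n+d}$ lies in $\extFi{i}$, giving $\extCosti{i}(\rho_{\geq n}) \leq d = \lambda(\rho_n)$ and thus $\lambda$-consistency of $\rho$. The delicate point throughout is the bookkeeping linking the edge-existence threshold $c_i > 1$ to the quantitative bound $d$; once this is set up, the infiniteness of $\pi$ supplies the contradiction that forces each player's target to be reached in time.
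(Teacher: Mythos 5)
Your proposal is correct and follows essentially the same route as the paper's proof: the projection is a play because edges of $\C{}$ respect $\extE$, and $\lambda$-consistency follows because the counter for player~$i$ is capped by $\lambda(\rho_n)$ at any vertex he owns, strictly decreases while his target is unvisited, and the edge-existence requirement $c_i > 1$ together with the infiniteness of $\pi$ forces $i \in I(\cdot)$ to occur within $\lambda(\rho_n)$ steps. Your write-up is in fact somewhat more explicit than the paper's (which compresses the induction into ``the counter decreases by at least $1$ at each step and hits $0$ before $\lambda(\rho_n)$ steps''), but the underlying argument is identical.
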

\begin{proof}
Let $v^C = (v,{(c_i)}_{i \in \Pi} )$ be a starting vertex in $\mathrm{SV}(\lambda)$.
Let $\pi$ be an infinite path in $\C{}$ such that $\pi_0 = v^C $.
Let $\rho$ be the projection of $\pi$ on $V^X$, that is, $\rho_n = v'$ with $\pi_n=(v',{(c'_i)}_{i \in \Pi} )$, for every $\N{n}$.
\begin{itemize}
    \item Clearly, $\rho$ is a play in $\extGame$: by construction of $\C{}$, there exists an edge between two vertices $(v,{(c_i)}_{i \in \Pi} )$ and $(v',{(c'_i)}_{i \in \Pi} )$ in $\C{}$ only if $(v,v')$ is an edge in $E^X$.
    \item Furthermore, the play $\rho$ is $\lambda$-consistent:
    Assume, towards contradiction, that it is not.
    Thus, there exists $\N{n}$ and $i \in \Pi$ such that $\rho_n \in V^X_i$ and
    \begin{align}
    \Cost_i(\rho_{\geq n}) > \lambda(\rho_n). \label{eq:cost_more_than_lambda}
    \end{align}
    Consider now $\pi_n$ and the value $c_i$ of the counter for player~$i$ at this vertex.
    Since $\rho_n \in V^X_i$, we know that $c_i \leq \lambda(\rho_n)$.
    From this vertex, the counter value for player~$i$ decreases at least by $1$ at each step along $\pi_{\geq n}$.
    and hits the value $0$ before $\lambda(\rho_n)$ steps.
    However, this means that in $\rho_{\geq n}$, player~$i$ visits his target set sooner than expected, which is a contradiction with~(\ref{eq:cost_more_than_lambda}). \qedhere
\end{itemize}
\end{proof}

\noindent
Since the edge relation $E^C$ in the counter graph respects the edge relation $E^X$ in the extended game, the region decomposition of path in $\extGame$ given in Lemma~\ref{lem:region_decomposition} can also be applied to a path in $\C{}$. We will often use such path region decompositions in the proofs of this section.

In order to prove the PSPACE membership for the constraint problem, we need to show that the counter graph $\C{*}$, with $\lambda^*$ the fixpoint function computed by Algorithm~\ref{algo:lambda}, has an exponential size. To this end, as the size of $|\C{*}|$ of $\C{*}$ is equal to $|V|\cdot 2^{|\Pi|} \cdot {(K + 2)}^{|\Pi|}$ with $K = \mFR{*}{}$ defined in Definition~\ref{def:maxrange}, it is enough to show an exponential upper bound on $K$. We proceed in two steps:
First, with the next proposition, given a labeling function $\lambda$ and its restriction $\lambda_{\geq \ell}$ to $V^{\geq \ell}$, we exhibit a bound on the supremum of the cost of $\lambda$-consistent plays for each player, in terms of the maximal finite range $\mFR{}{\geq \ell}$.
Second, we consider the actual sequence of functions ${(\lambda^k)}_{k \in \mathbb{N}}$ defined in Definitions~\ref{def:init} and~\ref{def:update}, as implemented by Algorithm~\ref{algo:lambda}. With Theorem~\ref{thm:bound_on_MR}, we show that $\mFR{*}{}$ is bounded by an exponential in the size of the original game $\mathcal{G}$. The proof is by induction on $k$ and uses Proposition~\ref{prop:sup_cost_bound}.

\begin{prop}[Bound on finite supremum]{\label{prop:sup_cost_bound}}
Let $\lambda$ be a labeling function. Let $v\in V^X$ such that $I(v) = J_\ell$ with $\ell \in \{1,\ldots,N\}$. Let $c \in \mathbb{N} \cup \{+\infty\}$ be such that $\sup~ \{ \Cost_i(\rho) \mid \rho \in \Lambda^{}(v)\} = c$. If $c < + \infty$, then
\[c \leq |V| + 2 \cdot \mFR{}{\ell} + \sum^{|\Pi|}_{r=|J_\ell|+1} |V| + 2 \cdot \max_{\stackrel{J_j > J_\ell}{ |J_j|=r}} \mFR{}{j}.\]
Moreover, in both cases $c = +\infty$ and $c < +\infty$, there exists a valid path $\pi$ in $\C{}$ starting in $v^C$ that is a lasso $hg^\omega$ with the length of $hg$ bounded by $2 \cdot |\C{}|$ and such that its corresponding play $\rho$ in $\extGame$ belongs to $\Lambda^{}(v)$ and has its cost $\Cost_i(\rho)$ equal to $c$.
\end{prop}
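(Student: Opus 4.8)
The plan is to work entirely in the counter graph $\C{}$, exploiting the correspondence between $\lambda$-consistent plays from $v$ and valid infinite paths from the starting vertex $v^C$ (Lemmas~\ref{lem:lambda_consistent_rho_X_to_rho_C} and~\ref{lem:chemin_infini_compteurs}). I would first secure a play realizing the supremum: by Corollary~\ref{cor:sup_max} the value $c$ is attained as a maximum, so there is a $\lambda$-consistent play $\rho$ from $v$ with $\Cost_i(\rho)=c$, and hence a valid path $\pi$ from $v^C$ whose projection on $V^X$ is $\rho$. The case $i\in I(v)=J_\ell$ is trivial ($c=0$), so I assume $i\notin J_\ell$; then $c$ is exactly the first index at which $\pi$ enters a region containing~$i$, i.e.\ the total length of the sections traversed in regions not containing~$i$.

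The lasso statement comes from a pumping argument in the finite graph $\C{}$. The key observation is that, when $c<+\infty$, no valid path from $v^C$ can repeat a counter-graph vertex \emph{before} player~$i$ reaches his target: a repetition $\pi_a=\pi_b$ with $a<b$ occurring before $i$'s target would close a valid cycle along which $i$ never reaches his target, and pumping it would produce a valid infinite path of infinite cost for~$i$, contradicting $c<+\infty$ through Corollary~\ref{cor:sup_max}. Hence the prefix of $\pi$ up to step~$c$ is a \emph{simple} path in $\C{}$ (so $c\le|\C{}|$), and I would extend it by any valid infinite continuation from the post-target vertex, from which a simple reachable cycle is extracted; this yields a lasso $hg^\omega$ with $|hg|\le 2\cdot|\C{}|$ whose corresponding play has cost exactly~$c$. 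For $c=+\infty$, Proposition~\ref{prop:sup_max} already gives a valid path along which $i$ never reaches his target; by the region decomposition of Lemma~\ref{lem:region_decomposition} such a path stabilizes in a final region not containing~$i$, where I again extract a simple reachable cycle to build a lasso of infinite cost.

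For the explicit bound when $c<+\infty$, I would apply the region decomposition to the simple prefix~$h$. By $I$-monotonicity (\ref{eq:increasing}) the regions traversed before $i$'s target form a strictly $\subsetneq$-increasing chain $J_\ell=J_{n_0}\subsetneq J_{n_1}\subsetneq\cdots$, so their sizes are strictly increasing and at most one section lives in each size $r$ with $|J_\ell|\le r<|\Pi|$; since $c$ is the sum of these section lengths, it suffices to bound the length of the section in a region~$J$ of size~$r$ by $|V|+2\cdot\mFR{}{J}$ and sum over~$r$. To bound a single section I would analyse the counter dynamics inside it: every counter $c_j$ for $j\notin J$ strictly decreases at each step and can only be \emph{reset downwards} at $j$-owned vertices to a value $\le\mFR{}{J}$. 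I then split the section into a constraint-free part, during which the relevant counters stay $+\infty$ and which is a simple path in the region (bounded by $|V|$ because the whole prefix is simple in $\C{}$), and a part governed by active finite constraints, whose length is controlled by the strictly decreasing counters and hence by the finite range of the region. Taking, for each size~$r$, the worst region of that size absorbs the constraints that are inherited across region boundaries.

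The hard part will be exactly this last step: cleanly bounding one section by $|V|+2\cdot\mFR{}{J}$. The difficulty is that the constraint forcing the play out of~$J$ may be \emph{inherited} from an earlier, smaller region rather than triggered inside~$J$, so a naive argument phrased purely in terms of $\mFR{}{J}$ does not close. The right device is a charging argument that attributes each unit of elapsed time to the region in which the responsible constraint was \emph{created}: once a constraint is set at a $j$-owned vertex with value $d\le\mFR{}{J'}$, $\lambda$-consistency forces $j$'s target within $d$ steps, so that constraint contributes at most $\mFR{}{J'}$ to the total, and the maximum over all regions of a given size~$r$ ensures these contributions are charged at most once per size, with the factor~$2$ absorbing the interleaving of the constraint-free wandering with the countdown. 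Making this accounting precise, so that simultaneous and inherited constraints are never double-counted, is the crux of the proof.
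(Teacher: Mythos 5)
Your proposal is correct and follows essentially the same route as the paper's own proof: realize the supremum as a maximum via Corollary~\ref{cor:sup_max}, show that the prefix of the corresponding valid path up to step $c$ is cycle-free by the same pumping contradiction, extract the lasso from the first cycle occurrence, and bound the cycle-free prefix by a section-by-section analysis with inherited constraints charged to the region where they were created. The only difference is organizational: the paper isolates that last step as a standalone statement (Lemma~\ref{lem:length_of_rho_1_in_C}, bounding cycle-free prefixes of valid paths in $\C{}$), whose proof uses exactly the per-region contribution $|V|+2\cdot\mFR{}{j}$, the factor-$2$ accounting for constraints both ending their own section and covering later ones, and the max-over-sizes step via $I$-monotonicity that you describe, whereas you inline this argument.
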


Before proving Proposition~\ref{prop:sup_cost_bound}, we need the following technical lemma:

\begin{lem}\label{lem:length_of_rho_1_in_C}
Let $v^C$ be a starting vertex in $\mathrm{SV}(\lambda$) associated with $v \in V^X$ such that $I(v) = J_\ell$.
Let $\pi$ be a finite prefix of a valid path in $\C{}$ such that:
\begin{itemize}
    \item $\pi_0 = v^C$,
    \item $\pi$ does not contain any cycle.
\end{itemize}
Then,
\[|\pi| \leq |V| + 2 \cdot \mFR{}{\ell} + \sum^{|\Pi|}_{r=|J_\ell|+1} |V| + 2 \cdot \max_{\stackrel{J_j > J_\ell}{ |J_j|=r}} \mFR{}{j}\]
\end{lem}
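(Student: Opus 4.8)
The idea is to decompose $\pi$ according to the regions it traverses, bound the length of each piece separately, and sum these bounds. First I would project $\pi$ onto $V^X$: since the edge relation $E^C$ refines $E^X$ (Definition~\ref{def:counter_graph}), this projection is a path in the extended game and Lemma~\ref{lem:region_decomposition} applies. Because $\pi_0 = v^C$ lies in region $J_\ell$ and the sets $I$ only grow along a play by the $I$-monotonicity~\eqref{eq:increasing}, the regions visited by $\pi$ have strictly increasing cardinalities. Hence $\pi$ meets at most one region $J_j$ of each cardinality $r \in \{|J_\ell|,\dots,|\Pi|\}$, which is precisely the index set of the sum in the statement. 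It therefore suffices to show that the section of $\pi$ inside a region $J_j$ contributes at most $|V| + 2\cdot\mFR{}{j}$ and to add up these contributions, the first region $J_\ell$ yielding the leading $|V| + 2\cdot\mFR{}{\ell}$ term.

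Next I would analyse a single section via the counters. Inside region $J_j$ the counter $c_i$ of a player $i \in J_j$ is frozen at $0$, so only the players $i \notin J_j$ matter. For such a player the update is either $c_i \mapsto c_i - 1$ or $c_i \mapsto \min(c_i - 1, \lambda(v'))$, and in both cases the value \emph{strictly decreases} while it is finite. I would split the section into two phases. In the first phase every counter of a player $i \notin J_j$ equals $+\infty$, so the whole counter profile is constant; since $\pi$ contains no cycle, the visited vertices must then have pairwise distinct $V^X$-components, bounding this phase by $|V|$. The second phase begins as soon as some finite counter is present: a finite counter decreases by at least $1$ at every step and, by Definition~\ref{def:counter_graph}, the path cannot continue once it reaches $1$ while the owner is still outside his target set, so each finite counter forces the corresponding player to reach his target within at most its current value many steps. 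A counter that becomes finite \emph{inside} $J_j$ is reset to some $\lambda(v') \le \mFR{}{j}$, which accounts for the $\mFR{}{j}$ contribution.

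The delicate point — and the main obstacle — is that a counter may enter region $J_j$ already finite, \emph{inherited} from an earlier region, with a value possibly larger than $\mFR{}{j}$; a naive per-section bound therefore fails. To handle this I would use an amortised argument: charge the whole lifetime of each finite counter to the cardinality level of the region in which it was \emph{born}. Since a finite counter strictly decreases, its lifetime is at most its birth value, which is bounded by the maximal finite range of its birth region; and since all counters born at a given level appear within that level's section and then count down in lockstep, their active intervals overlap, so the number of steps charged to one level stays within $2\cdot\mFR{}{j}$ — the extra factor absorbing both the at-most-$|V|$-long spread of birth times and the counter that may already be finite at the starting vertex $v^C$. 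Summing the $|V|$ free-phase bound and the $2\cdot\mFR{}{j}$ charged bound over the at most one region per cardinality level then gives exactly the claimed inequality. I expect the careful bookkeeping of this charging scheme — in particular pinning down the constant $2$ and proving the lockstep/overlap property rigorously — to be the hardest part of the argument.
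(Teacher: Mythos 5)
Your proposal is correct and follows essentially the same route as the paper's proof: the region decomposition of Lemma~\ref{lem:region_decomposition}, the no-cycle/constant-counter argument bounding the counter-free phase of each section by $|V|$, the charging of each finite counter's lifetime to the region where it was created (which is exactly what the paper's induction over sections does when it lets an inherited counter ``cover'' all sections from its birth until its expiry), and the restriction to at most one region per cardinality level via $I$-monotonicity. One small correction to your accounting of the constant $2$: the spread of birth times inside a section is bounded by $\mFR{}{j}$, not by $|V|$ (the section must end within at most the first-born counter's value $\le \mFR{}{j}$ steps after that first birth, and all births at that level occur before the section ends), so the factor $2$ decomposes as one $\mFR{}{j}$ for the spread of births plus one $\mFR{}{j}$ for the lifetime after the last birth, while $|V|$ only pays for the birth-free prefix; with that fix your per-level budget $|V|+2\cdot\mFR{}{j}$ is exactly the paper's bound.
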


\begin{proof}[Proof sketch]\label{proof_sk:length} The proof of this lemma is quite technical, so we give here the main ideas and refer to the full proof hereafter for more details.

Let $\pi$ be a finite prefix of a valid path in $\C{}$ as in the statement.
Let $\pi[\ell]\dots\pi[m]$ be its region decomposition according to Lemma~\ref{lem:region_decomposition}, graphically represented in Figure~\ref{fig:decomposition}. Let $\rho$ be the corresponding path in $\extGame$ and $\rho[\ell]\dots\rho[m]$ be its region decomposition. Let us consider a fixed non-empty section $\pi[n]$.
\begin{figure}[h!]
    \begin{tikzpicture}

\draw[very thick] (0,1) -- (2,1) ;
\node[draw,circle, thick, fill=black] (pi_l_0) at (0,1) {};
\draw (0,1.5) node{$\pi[\ell]_0$};
\draw[thick, decorate,decoration={brace, mirror}] (0,0) -- (2,0) node[midway, yshift=-1em] {$J_\ell$} ;

%section $J_{\ell+1}$

%sections jusqu a n

\draw[ultra thick, dotted] (2.5,1) -- (4.5,1) ;
\draw[thick, dotted] (2.5,0) -- (4.5,0) ;

%section n

\draw[very thick] (5,1) -- (7,1) ;
\draw[thick, double] (5.75,0.5) -- (5.75,1.5) ;
\draw (5.75, 0.3) node{\tiny{$\infty \to c$}} ;
\node[draw,circle, thick, fill=black] (pi_l1_0) at (5,1) {};
\draw (5, 1.5) node{${\pi[n]}_0$} ;
\draw[thick, decorate,decoration={brace, mirror}] (5,0) -- (7,0) node[midway, yshift=-1em] {$J_{n}$} ;

%sections jusqu a m

\draw[ultra thick, dotted] (7.5,1) -- (9.5,1) ;
\draw[thick, dotted] (7.5,0) -- (9.5,0) ;
%section m

\draw[very thick] (10,1) -- (12,1) ;

\node[draw,circle, thick, fill=black] (pi_m_0) at (10,1) {};
\draw (10, 1.5) node{$\pi[m]_0$} ;
\draw[thick, decorate,decoration={brace, mirror}] (10,0) -- (12,0) node[midway, yshift=-1em] {$J_{m}$} ;

\end{tikzpicture}
    \caption{Region decomposition of $\pi$}%
    \label{fig:decomposition}
\end{figure}
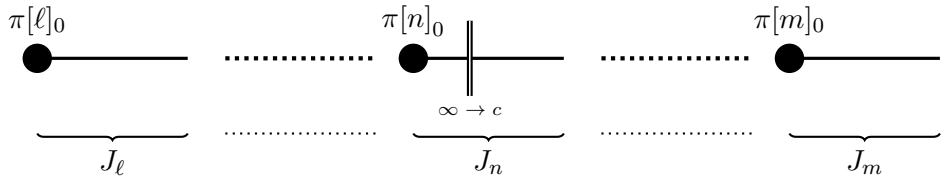

Suppose first that the counter values at ${\pi[n]}_0$ are either $0$ or $+\infty$. Let us prove that along $\pi[n]$, there can be at most $|V|$ steps before reaching a vertex with a finite positive value of $\lambda$:
\begin{itemize}
    \item assume there is a cycle in the corresponding section $\rho[n]$ in $\extGame$ such that from ${\rho[n]}_0$ and along the cycle, all the values of $\lambda$ are either $0$ or $+\infty$,
    \item by construction of $\C{}$, the counter values in the corresponding prefix of $\pi[n]$ remain fixed for each vertex of this prefix: as no value of $\lambda$ is positive and finite, no counter value can be decremented,
    \item thus, the cycle in $\rho[n]$ is also a cycle in $\pi[n]$ which is impossible by hypothesis,
    \item thus there is no such cycle in $\rho[n]$, and as there are at most $|V|$ vertices in region $X^{J_n}$, $\rho[n]$ can have a prefix of length at most $|V|$ with only values $0$ or $+\infty$ for $\lambda$, implying that this is also the case for $\pi[n]$.
\end{itemize}
Therefore, we can decompose $\pi[n]$ into a (possibly empty) prefix of length at most $|V|$, and a (possibly empty) suffix where at least one counter value $c'_i$, for some $i$, is a positive finite value in its first vertex $v'$.
This frontier between prefix and suffix of $\pi[n]$ is represented by a vertical double bar $\|$ with caption $\infty \to c$ in Figure~\ref{fig:decomposition}.
This value $c'_i$ is bounded by $\mFR{}{n}$, the maximal finite range of $\lambda_n$.
From there, as the corresponding $\rho$ is $\lambda$-consistent, player~$i$ reaches his target set in at most $c'_i$ steps, and $\rho$ enters a new region, which means that the section $\pi[n]$ is over.
So, in that case, the length of $\pi[n]$ can be bounded by $|V| + \mFR{}{n}$.

Suppose now that at vertex ${\pi[n]}_0$, there exists a counter value $c_i$ for some player~$i$ that is neither $0$ nor $+\infty$.
This means that there was a constraint for player~$i$ initialized in a previous section $\pi[n']$, with $n' < n$, that has carried over to ${\pi[n]}_0$, via decrements of at least $1$ per step.
We know that the initial finite counter value is bounded by $\mFR{}{n'}$, and appeared before the end of section $\pi[n']$. Thus the length from the end of section $\pi[n']$ to the end of section $\pi[n]$ is bounded by $\mFR{}{n'}$, as again, once the counter value attains $0$ for player~$i$, the path $\pi$ has entered the next section.

Therefore, considering the possible cases for each section, we can bound the total length of $\pi$ as follows:
\[|\pi| \leq \sum^{m}_{j=\ell} |V| + 2 \cdot \mFR{}{j}.\]

Finally, remark that by $I$-monotonicity, it is actually the case that only (and at most) $|\Pi|$ different non-empty sections can appear in the decomposition of $\pi$.
Furthermore, for each $n \in \{\ell +1, \ldots, N\}$, we have \[\mFR{}{n} \leq \max_{\stackrel{J_j > J_\ell}{ |J_j|=|J_n|}} \mFR{}{j}\] by Definition~\ref{def:maxrange}.
Thus, we have the following bound:
\[ |\pi| \leq |V| + 2\cdot \mFR{}{\ell} + \sum^{|\Pi|}_{r=|J_\ell|+1} |V| + 2 \cdot \max_{\stackrel{J_j > J_\ell}{ |J_j|=r}} \mFR{}{j}\]
which is the bound stated in Lemma~\ref{lem:length_of_rho_1_in_C}.
\end{proof}

\begin{proof}[Proof of Lemma~\ref{lem:length_of_rho_1_in_C}]
Let $v^C$ be a starting vertex in $\mathrm{SV}(\lambda$).
Let $\pi$ be a finite prefix of a valid path in $\C{}$ such that:
\begin{itemize}
    \item $\pi_0 = v^C$,
    \item $\pi$ does not contain any cycle.
\end{itemize}

\noindent
The following proof is quite technical, thus we alleviate some of the difficulties by proving a first upper bound on the length of finite paths without cycles in $\C{}$, then by showing how to obtain the desired bound.
The main idea is to bound iteratively the length of prefixes of $\pi$, adding at each step of the reasoning the section for the next region traversed by $\pi$.

Let $\pi'$ be a valid path in $\C{}$ such that $\pi$ is a prefix of $\pi'$.
By Lemma~\ref{lem:region_decomposition}, we know that there exist two natural numbers $\ell, m' \leq N$ and $m'-\ell$ (possibly empty) paths $\pi'^\ell, \dots, \pi'^{m'}$ in $\C{}$ such that:
\begin{itemize}
    \item $\pi'=\pi'[\ell] \dots \pi'[m']$,
    \item for each $ \ell \leq j \leq m'$, each vertex in $\pi'[n]$ is of the form $(w,{(c_i)}_{i \in \Pi})$ with $I(w)=J_j$.
\end{itemize}

\noindent
The finite path $\pi$ is a finite prefix of $\pi'$.
Thus there exists a natural number $m \leq m'$ and $m-\ell$ (possibly empty) paths $\pi[\ell], \dots, \pi[{m'}]$ in $\C{}$ such that:
\begin{itemize}
    \item $\pi=\pi[\ell] \dots \pi[{m}]$,
    \item for each $ \ell \leq j < m$, $\pi[j]=\pi'[j]$,
    \item $\pi[m] $ if a finite prefix of $\pi'[{m}]$.
\end{itemize}

\noindent
By Lemma~\ref{lem:chemin_infini_compteurs}, there exist a corresponding $\lambda$-consistent play $\rho'$ in $\extGame$ and a corresponding history $\rho$. Furthermore, since $\pi$ contains no cycle, the sections $\pi[n]$ do not either.

\bigskip
We first treat the case where $m < m'$.

We first bound the length of the first section $\pi[\ell]$.
Recall that since ${\pi[\ell]}_0=v^C \in \mathrm{SV}(\lambda)$, we have that $v^C= (v,{(c_i)}_{i \in \Pi} )$ with $c_i = 0$ if $ i\in J_\ell$, $c_i = \lambda(v)$ if $v \in V_i$ and $c_i = +\infty$ otherwise.

Suppose $0 < \lambda(v) < +\infty$.
In that case, we know that along $\rho'$, which is $\lambda$-consistent, player~$i$ reaches his target after at most $\lambda(v)$ steps.
Thus, there exists $n \leq m'$, $n\neq \ell$ such that $I(\rho_{\lambda(v)}) = J_n$, $J_\ell \subsetneq J_n$, and $\pi_{\lambda(v)}$ belongs to section $\pi[n]$.
This means that section $\pi[\ell]$ is shorter than $\lambda(v)$.
Since $I(v)=J_\ell$, we have that $\lambda(v) \leq \mFR{}{\ell}$, and thus $|\pi[\ell]| \leq \mFR{}{\ell}$.

\medskip

Suppose now that $\lambda(v)=0$ or $\lambda(v)=+\infty$.
The counter values at $v^C$ are thus either $0$ or $+\infty$.
Along $\pi'$, they can stay stable for at most $|V|$ steps (see Proof sketch of Lemma~\ref{lem:length_of_rho_1_in_C}).
Otherwise, the cycle induced in $\rho'$ is also a cycle in $\pi'$ as the counter values are fixed.
If $\pi'_{|V| +1}$ is in section $\pi'[n]$ with $n > \ell$, then we immediately get that $|\pi[\ell]| \leq |V|$.
If $\pi'_{|V| +1}$ is in section $\pi'[\ell]$, then it means that a counter value for some player~$i$ has become finite along the first $|V|+1$ vertices of $\pi'[\ell]$.
Let $t$ be the first index it does so along $\pi'[\ell]$.
By the same argument as in the previous case, we know that from ${\pi'[\ell]}_t$, there is at most $\mFR{}{\ell}$ vertices before entering the next section of $\pi'$.
Thus, we obtain that:
\begin{align*}
|\pi[\ell]|
& \leq |\pi'[\ell]| \\
&\leq |V| + \mFR{}{\ell} \\
& \leq |V| + 2\cdot\mFR{}{\ell}.
\end{align*}

If $\ell=m$, we can already conclude that:
\begin{align*}
|\pi|
& \leq |\pi[\ell]| \\
& \leq |V| + 2\cdot\mFR{}{\ell} \\
& \leq \sum^{m}_{j=\ell} |V| + 2 \cdot \mFR{}{j}.
\end{align*}

\bigskip

Suppose now that $\ell < m$.
We show that for each $n > \ell$, we have:
\[|\pi[\ell]\dots\pi[n]| \leq |V| + 2\cdot\mFR{}{\ell} + \left( \sum^{n-1}_{j=\ell+1} |V| + 2\cdot\mFR{}{j}  \right) + |V| + \mFR{}{n}.\]

\bigskip

Let $n=\ell+1$.
We assume $\pi[n]$ is not empty, otherwise its length is $0$.
Consider now ${\pi[n]}_0$.

\medskip

Suppose there exists a player~$i$ such that his counter value $c_i$ at ${\pi[n]}_0 = (w,J_n,{(c_i)}_{i \in \Pi} )$ is a finite non-zero value.
If $w \not \in V_i$, it means that the counter value has decreased since a vertex $(w',{(c'_i)}_{i \in \Pi} )$ in the previous section $\pi[{\ell}]$ such that $w' \in V_i$ and $c'_i = \lambda(w )$.
Thus, the length of the path from this vertex $(w',{(c'_i)}_{i \in \Pi} )$ to the next section $\pi[{n+1}]$ is smaller or equal than $\lambda(w' )$.
In particular, the whole section $\pi[n]$ is ``covered'' by this path.
Therefore, we can conclude that $|\pi[n]| \leq \lambda(w' ) \leq \mFR{}{\ell}$.
Since we already know that $|\pi[\ell]| \leq |V| + \mFR{}{\ell}$, we obtain that:

\begin{align*}
|\pi[\ell] \pi[n]|& \leq |V| + \mFR{}{\ell} + \mFR{}{\ell} \\
& \leq  |V| + 2\cdot\mFR{}{\ell}.
\end{align*}

If $w \in V_i$, it means that either the counter value $c_i$ is equal to $\lambda(w )$ or has decreased since a vertex $(w',{(c'_i)}_{i \in \Pi} )$ in the previous section $\pi[{\ell}]$ such that $w' \in V_i$ and $c'_i = \lambda(w' )$.
Thus, we have that $c_i \leq \mFR{}{n} $ or $c_i \leq \mFR{}{\ell}$, and, in turn, $|\pi[n]| \leq \mFR{}{n} $ or $|\pi[n]| \leq \mFR{}{n}$.
Therefore, we can conclude that:
\begin{align*}
 |\pi[\ell] \pi[n]|    & \leq  |V| + \mFR{}{\ell} + \mFR{}{\ell} + \mFR{}{n} \\
     & \leq  |V| + 2 \cdot \mFR{}{\ell} + \mFR{}{n}.
\end{align*}

\medskip

Suppose now that for every player~$i$, his counter value $c_i$ at ${\pi[n]}_0 = (w, {(c_i)}_{i \in \Pi} )$ is either $0$ or $+\infty$.
In that case, we are in a similar case than for section $\pi[\ell]$, thus we can conclude that $|\pi[n]| \leq |v| + \mFR{}{n}$.
Thus, we have indeed:
\[|\pi[\ell] \pi[n]| \leq  |V| + \mFR{}{\ell} + |V| + \mFR{}{n},\]
and also
\[|\pi[\ell] \pi[n]| \leq  |V| + 2\cdot\mFR{}{\ell} + |V| + \mFR{}{n}.\]

If $m=\ell+1$, we are done.
\bigskip

Suppose now that $m > \ell+1$.
Let $n$ be such that $\ell+1 <n \leq m$.
Assume that for each $n'<n$, it holds that
\[|\pi[\ell]\dots\pi[{n'}]| \leq |V| + 2\cdot\mFR{}{\ell}
+ \left( \sum^{n'-1}_{j=\ell+1} |V| + 2\cdot\mFR{}{j} \right)
+ |V| + \mFR{}{n'}.\]
We assume $\pi[n]$ is not empty, otherwise its length is $0$.
Consider now ${\pi[n]}_0$.

\medskip

Suppose there exists a player~$i$ such that his counter value $c_i$ at ${\pi[n]}_0 = (w, {(c_i)}_{i \in \Pi} )$ is a finite non-zero value.
If $w \not \in V_i$, it means that the counter value has decreased since a vertex $(w', {(c'_i)}_{i \in \Pi} )$ in a previous section $\pi[{n'}]$ such that $w' \in V_i$ and $c'_i = \lambda(w' )$.
Thus, the length of the path from this vertex $(w', {(c'_i)}_{i \in \Pi} )$ to the next section $\pi[{n+1}]$ is smaller or equal than $\lambda(w' )$.
In particular, the whole sections from $\pi[{n'+1}]$ to $\pi[n]$ are ``covered'' by this path.
Therefore, we can conclude that $|\pi[{n'+1}]\dots\pi[n]| \leq \lambda(w' ) \leq \mFR{}{n'}$.
Since we already know that:
\[|\pi[\ell] \dots \pi[{n'}]| \leq |V| + 2\cdot\mFR{}{\ell}
+ \left( \sum^{n'-1}_{j=\ell+1} |V| + 2\cdot\mFR{}{j} \right)
+ |V| + \mFR{}{n'},\]
we obtain that:

\begin{align*}
|\pi[\ell] \dots \pi[n]|
&\leq |V| + 2\cdot\mFR{}{\ell}
+ \left( \sum^{n'-1}_{j=\ell+1} |V| + 2\cdot\mFR{}{j} \right)
+ |V| + \mFR{}{n'} +\mFR{}{n'} \\
& \leq |V| + 2\cdot\mFR{}{\ell}
+ \left( \sum^{n'-1}_{j=\ell+1} |V| + 2\cdot\mFR{}{j} \right)
+ |V| + 2 \cdot \mFR{}{n'}  \\
& \leq |V| + 2\cdot\mFR{}{\ell}
+ \left( \sum^{n-1}_{j=\ell+1} |V| + 2\cdot\mFR{}{j} \right)
+ |V| + \mFR{}{n}.
\end{align*}

If $w \in V_i$, it means that either the counter value $c_i$ is equal to $\lambda(w )$ or has decreased since a vertex $(w', {(c'_i)}_{i \in \Pi} )$ in a previous section $\pi[{n'}]$ such that $w' \in V_i$ and $c'_i = \lambda(w' )$.
Thus, we have that $c_i \leq \mFR{}{n} $ or $c_i \leq \mFR{}{n'}$, and, in turn, $|\pi[n]| \leq \mFR{}{n} $ or $|\pi[n]| \leq \mFR{}{n'}$.

If $|\pi[n]| \leq \mFR{}{n} $, we can conclude that:
\begin{align*}
|\pi[\ell] \dots \pi[n]| &\leq
|V| + 2\cdot\mFR{}{\ell}
+ \left( \sum^{n-1}_{j=\ell+1} |V| + 2\cdot\mFR{}{j} \right)
+ \mFR{}{n}  \\
&\leq |V| + 2\cdot\mFR{}{\ell}
+ \left( \sum^{n-1}_{j=\ell+1} |V| + 2\cdot\mFR{}{j} \right)
+ |V| + \mFR{}{n}.
\end{align*}

If $|\pi[n]| \leq \mFR{}{n'}$, we can conclude that:
\begin{align*}
|\pi[\ell] \dots \pi[n]| &\leq
|V| + 2\cdot\mFR{}{\ell}
+ \left( \sum^{n'-1}_{j=\ell+1} |V| + 2\cdot\mFR{}{j} \right)
+ |V| + \mFR{}{n'} +\mFR{}{n'} \\
&\leq |V| + 2\cdot\mFR{}{\ell}
+ \left( \sum^{n-1}_{j=\ell+1} |V| + 2\cdot\mFR{}{j} \right)
+ |V| + \mFR{}{n}.
\end{align*}

\medskip

Suppose now that for every player~$i$, his counter value $c_i$ at ${\pi[n]}_0 = (w, {(c_i)}_{i \in \Pi} )$ is either $0$ or $+\infty$.
In that case, we are in a similar case than for section $\pi[\ell]$, thus we can conclude that $|\pi[n]| \leq |v| + \mFR{}{n}$.
Thus, we have indeed:
\[|\pi[\ell] \dots \pi[n]|
\leq |V| + 2\cdot\mFR{}{\ell}
+ \left( \sum^{n-1}_{j=\ell+1} |V| + 2\cdot\mFR{}{j} \right)
+ |V| + \mFR{}{n}\]
and thus finally:
\[
|\pi|
\leq |\pi[\ell] \dots \pi[m]|
\leq \sum^{m}_{j=\ell} |V| + 2 \cdot \mFR{}{j}.
\]

\bigskip

Assume now that $m=m'$.
In that case, we cannot rely on the section $\pi'^m$ to be finite, as the play $\rho'$ never reaches another region.
However, in this situation, it is guaranteed that the counter values in $\pi'^m$ are fixed and are equal to either $0$ or $+\infty$:
indeed, a finite counter value would imply that some player~$i$ such that $i \not \in J_m$ reaches his target in $\rho'$ exactly when his counter value becomes $0$ in $\pi'$.
But $\pi'^m$ is the last section of $\rho'$, thus no new region is reached after $J_m$ and no new player can visit his target set than the players $i \in J_m$.
Therefore, finite prefix of $\pi'^m$ that contains no cycle has its length bounded by $|V|$.
Thus, we can conclude:
\begin{align*}
    |\pi|
    & =  |\pi[\ell]\dots\pi[m]| \\
    & \leq |V| + 2\cdot\mFR{}{\ell}
+ \left( \sum^{m-1}_{j=\ell+1} |V| + 2\cdot\mFR{}{j} \right)
+ |V|  \\
& \leq \sum^{m}_{j=\ell} |V| + 2 \cdot \mFR{}{j}.
\end{align*}

\bigskip

In fact, the bound given above can be slightly changed to give the desired bound.
Indeed, the bound above relies on the fact that $m \leq N$ and covers the case where a path traverses every region from $J_\ell$ to $J_N$.
In all generality, the number $N$ of different regions can be exponential in the number $|\Pi|$ of players.
However, by the $I$-monotonicity property, we know that a path can actually traverse at most $|\Pi|$ regions.
Thus, in the region decomposition of a path, only (and at most) $|\Pi|$ sections are relevant and of length greater than $0$.
Therefore, we can define a subsequence of indices ${(n_r)}_{r \leq |\pi|}$, with $n_1=\ell$, such that in fact $\pi= \pi[{n_1}]\dots\pi[{n_{|\Pi|}}]$.
Hence, we obtain the following bound on the length $t$ of $\pi$:

\[ t \leq \sum^{|\Pi|}_{r=1} |V| + 2 \cdot \mFR{}{n_r}\]

Finally, as for every $r \leq |\pi|$, $r>1$, we have $\mFR{}{n_r} \leq \max \{\mFR{}{j} \mid J_j > J_\ell,  |J_j|=|J_{n_r}|\}$, we obtain the desired bound:

\[ t \leq |V| + 2 \cdot \mFR{}{\ell} + \sum^{|\Pi|}_{r=|J_\ell|+1} |V| + 2 \cdot \max_{\stackrel{J_j > J_\ell}{ |J_j|=r}} \mFR{}{j}\]
\end{proof}

We are now ready to prove Proposition~\ref{prop:sup_cost_bound}.
\begin{proof}[Proof of Proposition~\ref{prop:sup_cost_bound}]\label{proof:sup_cost_bound} Let $v \in V^X$ with $I(v)=J_\ell$ and $i \in \Pi$. Let $c\in \mathbb{N} \cup \{+\infty\}$ be such that $\sup~ \{ \Cost_i(\rho) \mid \rho \in \Lambda^{}(v)\} = c$. Consider $\rho \in \Lambda^{k}(v )$ such that $\Cost_i(\rho) = c$. Notice that such a play always exists by Corollary~\ref{cor:sup_max}. Consider also $\pi$ the valid path in $\C{}$ that starts in $v^C$ and corresponds to $\rho$.

\medskip
Suppose first that $c < +\infty$ and let us prove that
\begin{eqnarray}
c \leq |V| + 2 \cdot \mFR{}{\ell} + \sum^{|\Pi|}_{r=|J_\ell|+1} |V| + 2 \cdot \max_{\stackrel{J_j > J_\ell}{ |J_j|=r}} \mFR{}{j}. \label{eq:prop32}
\end{eqnarray}

If $i \in J_\ell$, then $c = 0$, as every play starting in $v$ has a cost $0$ for player~$i$. Hence (\ref{eq:prop32}) trivially holds.

Suppose now that $i \notin J_\ell$. As player~$i$ eventually reaches his target set along $\rho$, this means that $\rho$ eventually leaves region $J_\ell$ and eventually reaches another region $J_n$ such that $i \in J_n$.
Consider the prefix $\rho_{\leq c}$ of $\rho$ of length $c$.
Let $\pi_{\leq c}:= \pi_{\leq c}[\ell]\dots\pi_{\leq c}[m]$ be the prefix of $\pi$ and its region decomposition such that $\pi_{\leq c}$ is associated with $\rho_{\leq c}$.
Notice that $\pi_{\leq c}[m]$ consists only of one vertex corresponding to $\rho_c$, and that for every $n < m$, we have $i \notin J_n$.

Suppose that $\pi_{\leq c}$ contains a cycle.
By construction of $\C{}$, this cycle is included in one single section $\pi_{\leq c}[n]$, where $n<m$ (as $\pi_{\leq c}[m]$ contains only one vertex), and thus $i\notin J_n$.
Consider the infinite path $\pi'$ in $\C{}$ that follows $\pi_{\leq c}$ until the cycle and then repeats the cycle forever.
By Lemma~\ref{lem:chemin_infini_compteurs}, there exists a $\lambda$-consistent play $\rho'$ in $\extGame$ corresponding to $\pi'$.
We have $\Cost_i(\rho') = +\infty$ for player~$i$, as $\rho'$ never reaches a region where player~$i$ visits his target set.
This is a contradiction with the fact that $\sup~ \{ \Cost_i(\rho) \mid \rho \in \Lambda^{}(v )\} = c$ is finite.

Therefore $\pi_{\leq c}$ contains no cycle, and by Lemma~\ref{lem:length_of_rho_1_in_C},
\[|\pi_{\leq c}| \leq |V| + 2 \cdot \mFR{}{\ell} + \sum^{|\Pi|}_{r=|J_\ell|+1} |V| + 2 \cdot \max_{\stackrel{J_j > J_\ell}{ |J_j|=r}} \mFR{}{j}.\]
Since $|\pi_{\leq c}| = c = \Cost_i(\rho)$, we obtain Inequality (\ref{eq:prop32}).

\medskip Let us now prove the second part of Proposition~\ref{prop:sup_cost_bound} for both cases $c < +\infty$ and $c = +\infty$. Given the valid path $\pi$, we consider the first occurrence of a cycle in $\pi$. We then construct the infinite path $\pi'$ in $\C{}$ that follows $\pi$ until this cycle and then repeats it forever. Then $\pi'$ is a lasso $h g^\omega$ with the length $|h g|$ bounded by $2 \cdot |\C{}|$. Clearly if $c = +\infty$, then the corresponding play $\rho'$ in $\extGame$ belongs to $\Lambda^{}(v)$ and has a cost $\Cost_i(\rho')$ equal to $\Cost_i(\rho) = + \infty$. If $c < +\infty$, we know by the first part of the proof that $\pi_{\leq c}$ contains no cycle and thus is prefix of $h$. Therefore we also have that the corresponding play $\rho'$ has cost $\Cost_i(\rho') = \Cost_i(\rho) = c$.
\end{proof}

\begin{rem}%
\label{rem:countergraph}
In the proof of Proposition~\ref{prop:sup_cost_bound} and Lemma~\ref{lem:length_of_rho_1_in_C}, we consider paths $\pi$ in the counter graph $\C{}$ that starts in a vertex $v$ such that $I(v) = J_\ell$. This means that such paths only visit vertices of regions $V^{J_j}$ such that $j \in \{\ell, \ell +1, \ldots, N\}$. There are therefore paths in the counter graph restricted $V^{\geq J_\ell}$ that we denote by $\mathbb{C}(\lambda_{\geq \ell})$.
\end{rem}

We now come back to the labeling functions ${(\lambda^k)}_{k \in \mathbb{N}}$ computed by Algorithm~\ref{algo:lambda}. Recall that this algorithm works in a bottom-up manner (see Proposition~\ref{prop:fixpoint}): it first computes the local fixpoint $\lambda^{k_N^*}$ on region $V^{J_N}$, then the local fixpoint $\lambda^{k_{N-1}^*}$ on $V^{\geq J_{N-1}}$, \dots, until finally computing the global fixpoint $\lambda^*$ on $V^{\geq J_{1}} = V^X$. Recall also that when the algorithm computes the local fixpoint in the arena $\regionGraphGeqI{J_n}$, the values of $\lambda^k(v)$ may only change in the region $V^{J_{n}}$ (Lemma~\ref{lem:fixpoint-region}). We are now ready to show an exponential bound (in the size of $\mathcal{G}$) on the maximal finite ranges $\mFR{k^*_\ell}{\ell}$ for each region $X^{J_\ell}$ and $\mFR{k^*_\ell}{\geq \ell}$ for each arena $X^{\geq J_\ell}$. With $\ell = 1$, we get that $\mFR{*}{}$ is of exponential size, and thus also the size of the counter graph.

\begin{thm}[Bound on maximal finite range]\label{thm:bound_on_MR}
For each $\ell \in \lbrace 1,\dots,N \rbrace$, we have
\[ \mFR{k^*_\ell}{\ell} \leq \mathcal{O}(|V|^{(|V|+3) \cdot  (|\Pi \setminus J_\ell|+2)})
\]
and also:
\[ \mFR{k^*_\ell}{\geq \ell} \leq \mathcal{O}(|V|^{(|V|+3) \cdot ( |\Pi|+2)}).
\]
In particular for the global fixpoint $\lambda^*$ we have
\[\mFR{*}{} \leq \mathcal{O}(|V|^{(|V|+3) \cdot ( |\Pi|+2)}).
\]
\end{thm}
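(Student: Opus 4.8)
The plan is to establish both displayed inequalities simultaneously by induction on the regions, following the same bottom-up order $J_N, J_{N-1}, \dots, J_1$ in which Algorithm~\ref{algo:lambda} reaches its local fixpoints, and then to read off the bound on $\mFR{*}{}$ as the instance $\ell = 1$. The first step is to notice that the two quantities are tied together: by Lemma~\ref{lem:fixpoint-region} and Proposition~\ref{prop:fixpoint}, at step $k^*_\ell$ the labels on every higher region are already frozen at their own local fixpoints, so
\[ \mFR{k^*_\ell}{\geq \ell} = \max\Bigl( \mFR{k^*_\ell}{\ell},\ \max_{j>\ell} \mFR{k^*_j}{j} \Bigr). \]
Hence it suffices to bound each region-local range $\mFR{k^*_\ell}{\ell}$ by $\mathcal{O}(|V|^{(|V|+3)(|\Pi\setminus J_\ell|+2)})$: the bound on $\mFR{k^*_\ell}{\geq \ell}$ then follows by taking the maximum and using $|\Pi \setminus J_j| \leq |\Pi|$, and the bound on $\mFR{*}{} = \mFR{k^*_1}{\geq 1}$ is the special case $\ell=1$.

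For the base case $\ell = N$, the region $J_N$ is a bottom region, so $\Succ(J_N) = J_N$ and no player outside $J_N$ can ever reach his target set from within it. Every label is therefore $0$ or $+\infty$, giving $\mFR{k^*_N}{N} = 0$, which trivially respects the claimed bound. For the inductive step, fix $\ell < N$ and assume the region-local bound for all $j > \ell$. Let $v \in V^{J_\ell}$ carry a finite label and be owned by a player $i \notin J_\ell$. By Definition~\ref{def:update} together with the fixpoint property,
\[ \lambda^{k^*_\ell}(v) = 1 + \sup\{ \extCosti{i}(\rho) \mid \rho \in \Lambda^{k^*_\ell}(v') \} \]
for the minimizing successor $v'$, and this supremum is finite. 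By Corollary~\ref{cor:sup_max} it is attained by some play $\rho \in \Lambda^{k^*_\ell}(v')$, and I would then follow the associated path in the counter graph $\C{k^*_\ell}$ from $v'^C$ up to the moment player~$i$ reaches his target. Exactly as in the proof of Proposition~\ref{prop:sup_cost_bound}, this finite path contains no cycle, since any cycle could be pumped into a valid infinite path on which player~$i$ never reaches his target, contradicting the finiteness of the supremum. By $I$-monotonicity (\eqref{eq:increasing}) and Lemma~\ref{lem:region_decomposition}, its region decomposition consists of at most $|\Pi \setminus J_\ell| + 1$ nonempty sections, all lying in $V^{\geq J_\ell}$, and for the strictly higher regions the ranges $\mFR{k^*_\ell}{j} = \mFR{k^*_j}{j}$ entering any length estimate are controlled by the induction hypothesis.

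The delicate point --- and the heart of the argument --- is the very first section, which lives inside region $J_\ell$ itself. Applying Lemma~\ref{lem:length_of_rho_1_in_C} verbatim would charge this section a term $2\cdot\mFR{k^*_\ell}{\ell}$, i.e.\ the quantity we are trying to bound, and since its coefficient exceeds $1$ the resulting inequality $\mFR{k^*_\ell}{\ell} \leq |V| + 2\cdot\mFR{k^*_\ell}{\ell} + (\text{higher-region terms})$ is vacuous. The crux is therefore to estimate the region-$J_\ell$ portion on its own terms, directly on the counter graph: along a cycle-free path no arena vertex recurs with the same counter profile; a counter that has become finite strictly decreases at each step --- every update is $c_i-1$ or $\min(c_i-1,\lambda(v'))$, both smaller than $c_i$ --- and never returns to $+\infty$; and inside $J_\ell$ a counter of an unresolved player never reaches $0$. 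Bookkeeping how many distinct configurations a forced-to-exit path can meet before leaving $J_\ell$ is what should introduce the multiplicative factor $|V|^{|V|+3}$ attached to crossing one further region.

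Feeding this region-local estimate back into the decomposition turns the recursion into the shape $\mFR{k^*_\ell}{\ell} \leq \mathcal{O}(|V|^{|V|+3}) \cdot \bigl(1 + \max_{j>\ell}\mFR{k^*_j}{j}\bigr)$, and unrolling it along the at most $|\Pi \setminus J_\ell|$ descents of the region order (each adding at least one player to the resolved set) raises the exponent by $|V|+3$ per level, giving precisely $\mathcal{O}(|V|^{(|V|+3)(|\Pi\setminus J_\ell|+2)})$, with the additive constant $2$ in the exponent absorbing the base case and the ``$+1$'' terms. Taking $\ell = 1$ then yields the announced bound on $\mFR{*}{}$. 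I expect that region-local length estimate, with its careful tracking of counter decrements across sections so as to avoid the circular dependence on $\mFR{k^*_\ell}{\ell}$, to be the main obstacle of the whole proof.
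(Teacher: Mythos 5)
Your overall scaffolding matches the paper's — bottom-up induction on regions, $\mFR{k^*_N}{N}=0$ as base case, Proposition~\ref{prop:sup_cost_bound} and cycle-free counter-graph paths in the inductive step, and a final unrolling that raises the exponent by $|V|+3$ per region — and you correctly isolate the crux: applying Lemma~\ref{lem:length_of_rho_1_in_C} at the fixpoint charges the first section a term $2\cdot\mFR{k^*_\ell}{\ell}$, making the inequality vacuous. But your proposed escape is itself circular, and this is a genuine gap. You want to bound the region-$J_\ell$ portion of a cycle-free path by ``bookkeeping how many distinct configurations'' it can traverse. The configurations of $\C{k^*_\ell}$ lying over $V^{J_\ell}$ are pairs (arena vertex, counter profile), and the counters of unresolved players range over $\lbrace 0,\dots,K\rbrace\cup\lbrace+\infty\rbrace$ where $K$ is at least $\mFR{k^*_\ell}{\ell}$: every counter that becomes finite inside such a section is set to $\lambda^{k^*_\ell}(w)$ for some $w\in V^{J_\ell}$. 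So the number of configurations — and hence your length estimate — is a function of the very quantity you are trying to bound. The only counter-free facts available inside the section are that at most $|V|$ steps can elapse while all counters are $0$ or $+\infty$, and that the set of players with finite counters can only grow (at most $|\Pi|$ phases); the length of each phase after some counter turns finite is controlled only by labels of region $J_\ell$ itself. No purely static count at the fixpoint closes this loop, and your claimed recursion $\mFR{k^*_\ell}{\ell}\leq\mathcal{O}(|V|^{|V|+3})\cdot\bigl(1+\max_{j>\ell}\mFR{k^*_j}{j}\bigr)$ is exactly the statement left unproved.

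The paper breaks the circularity with a second, \emph{inner} induction — on the step $k$ of Algorithm~\ref{algo:lambda}, not on anything visible at the fixpoint. For $k^*_{\ell+1}\leq k\leq k^*_\ell$ it proves $\mFR{k}{\ell}\leq\bigl(\sum_{i=0}^{|\dom{k}{\ell}|}2^i\bigr)\cdot\bigl(1+|V|+\alpha(\lambda^{k^*_{\ell+1}},\ell)\bigr)$, where $\dom{k}{\ell}$ is the set of vertices of $V^{J_\ell}$ whose label at step $k$ is finite and nonzero, and $\alpha$ collects the already-frozen higher-region ranges. The point is temporal: when a label $\lambda^{k+1}(v)$ becomes finite for the first time, Definition~\ref{def:update} together with Proposition~\ref{prop:sup_cost_bound} bounds it by $1+|V|+2\cdot\mFR{k}{\ell}+\alpha$, i.e.\ by the range at the \emph{previous} step, which the inner hypothesis already controls; and each such event strictly enlarges $\dom{}{\ell}$ (labels only decrease, by Lemma~\ref{lemma:sequenceNonIncreasing}, so the domain only grows), which can happen at most $|V|$ times. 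Each event at most doubles the running bound (plus lower-order terms), so the recursion unrolls into the factor $2^{|V|+1}\leq|V|^{|V|+1}$ — this is where the $|V|^{|V|+3}$-type factor actually comes from. Your single induction on regions has no counterpart of this mechanism; to repair the proof you would need to reintroduce the induction on $k$ with the monotone set $\dom{k}{\ell}$, after which your unrolling over regions goes through essentially as you describe.
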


\begin{proof}
The proof is done by a \emph{double} induction:
First, we exploit the fact that Algorithm~\ref{algo:lambda} treats every region one after the other, following the total order on regions in reverse.
That is, to compute the values of the fixpoint function $\lambda^{*}$ over $V^X$, Algorithm~\ref{algo:lambda} computes first the values of $\lambda^{*}$ on region $V^{J_N}$, then on region $V^{J_{N-1}}$ etc\dots until finally on region $V^{J_1}$.
Thus, we follow this order to prove the local bounds on $\mFR{k^*_\ell}{\ell}$, starting by $\mFR{k^*_N}{N}$ and making our way up to $\mFR{k^*_1}{1}$, assuming for each region $V^{J_\ell}$ such that $\ell < N$ that the bound is true for every region already treated by Algorithm~\ref{algo:lambda}.
Second, given a non-bottom region $V^{J_\ell}$ and assuming the bound is true for every already treated region, we proceed to show the local bound for $V^{J_\ell}$ by induction on the number $k$ of steps in the computation, which corresponds to the values of function $\lambda^k$ in the sequence of functions leading to the fixpoint, up to step $k^*_\ell$, where the values stabilize on region $V^{\ell}$.

Let us now detail the proof. It is structured into several parts.

\paragraph{Part 1.}
We begin with some notations and basic properties. We introduce a useful notation: for each $\ell <N$ and each $k \in \mathbb{N}$, we define $\alpha(\lambda^k,\ell)$ as follows\footnote{This sum appears in the statement of Proposition~\ref{prop:sup_cost_bound}.}:

\[\alpha(\lambda^k,\ell) := \sum^{|\Pi|}_{r=|J_\ell|+1} |V| + 2 \cdot \max_{\stackrel{J_j > J_\ell}{ |J_j|=r}} \mFR{k}{j}\]

\noindent
Let $\ell \leq N$ and $v \in V^{J_\ell}$. By Lemma~\ref{lem:fixpoint-region}, we know the following:
\begin{enumerate}
    \item the values of $\lambda^{k+1}_\ell(v)$ and $\lambda^{k}_\ell(v)$ may differ only when $ k^*_{\ell+1} < k < k^*_\ell$;
    \item for $k \leq k^*_{\ell+1}$, we have $\lambda^{k}_\ell(v) = \lambda^{k+1}_\ell(v) = \lambda^{0}_\ell(v)$;
    \item\label{item:iii} for $k \geq k^*_{\ell}$, we have $\lambda^{k}_\ell(v) = \lambda^{k+1}_\ell(v) = \lambda^{k^*_\ell}(v)$
\end{enumerate}

\noindent
By~\ref{item:iii}., we have that for each $\ell<N$, and for $k \geq k^*_{\ell+1}$, we have $\mFR{k}{\ell+1} = \mFR{k^*_{\ell+1}}{\ell+1}$.
Thus, for $k \geq k^*_{\ell+1}$, we also have
\begin{align}
\alpha(\lambda^k,\ell)= \alpha(\lambda^{k^*_{\ell+1}}, \ell)   \label{eq:alpha_l+1}.
\end{align}

\paragraph{Part 2.}
Let $\ell < N$ and assume $J_\ell$ is not a bottom region. We start by proving, this time by induction on the number $k$ of algorithm steps, a bound on $\mFR{k}{\ell}$. Note, by Lemma~\ref{lem:fixpoint-region} recalled above, that the only relevant steps for region $J_\ell$ are the steps $k$, where $k^*_{\ell+1} \leq k \leq k^*_\ell$.
Let us show, for each such $k$, that:

\begin{align}
   \mFR{k}{\ell} \leq \left(\sum^{|\dom{k}{\ell}|}_{i=0} 2^i \right) \cdot (1 + |V| + \alpha(\lambda^{k^*_{\ell+1}},\ell) )  \label{eq:hyp_ind_k}
\end{align}
where $\dom{k}{\ell} = \lbrace v \in V^{J_\ell} ~|~ \lambda^k_\ell(v)\neq 0, \lambda^k_\ell(v) \neq +\infty \rbrace$.

\begin{itemize}
    \item \textit{Base case:} Assume $k=k^*_{\ell+1}$.
By Lemma~\ref{lem:fixpoint-region}, we know that $\lambda^k_\ell=\lambda^0_\ell$.
Thus, $\mFR{k}{\ell}=\mFR{0}{\ell}=0$.
Furthermore, $\dom{k}{\ell}=\dom{0}{\ell}=\emptyset$, thus $\sum^{|\dom{k}{\ell}|}_{i=0} 2^i =1$.
Clearly, Equation~\eqref{eq:hyp_ind_k} is satisfied.

\item \textit{General case:} Let now $k$ be such that $k^*_{\ell+1} \leq k < k^*_\ell$.
Assume that Equation~\eqref{eq:hyp_ind_k} holds for $k$.
Let us show that
\begin{align}
   \mFR{k+1}{\ell} \leq \left(\sum^{|\dom{k+1}{\ell}|}_{i=0} 2^i \right) \cdot (1 + |V| + \alpha(\lambda^{k^*_{\ell+1}},\ell) )  \label{eq:cas_k+1}
\end{align}
We distinguish the two following subcases: $(a)$ when $\dom{k+1}{\ell}=\dom{k}{\ell}$, and $(b)$ when $\dom{k+1}{\ell}\neq\dom{k}{\ell}$.

\begin{itemize}[align=left]
    \item[$(a)$] Assume $\dom{k+1}{\ell}=\dom{k}{\ell}$. We know by Lemma~\ref{lemma:sequenceNonIncreasing} that for each $v\in V^{J_\ell}$, $\lambda^{k+1}_\ell(v) \leq \lambda^k_\ell(v)$.
    Thus in the considered case we have:
    \begin{align*}
        \mFR{k+1}{\ell} &\leq \mFR{k}{\ell} & \\
                        &\leq \left(\sum^{|\dom{k}{\ell}|}_{i=0} 2^i \right) \cdot (1 + |V| + \alpha(\lambda^{k^*_{\ell+1}},\ell) )  \tag{by induction hyp.~\eqref{eq:hyp_ind_k}} \\
                        &\leq \left(\sum^{|\dom{k+1}{\ell}|}_{i=0} 2^i \right) \cdot (1 + |V| + \alpha(\lambda^{k^*_{\ell+1}},\ell) )  &\tag{by ($a$)}
    \end{align*}
    That is, Equation~\eqref{eq:cas_k+1} holds.

    \item[$(b)$] Assume $\dom{k+1}{\ell}\neq \dom{k}{\ell}$.
    Either we have $\mFR{k+1}{\ell} \leq \mFR{k}{\ell}$, and we can proceed as in subcase $(a)$, or
    we know that there exists $v \in V^{J_\ell}$ such that $\lambda^k_\ell(v)=+\infty$ and $\lambda^{k+1}_\ell(v) < +\infty$.
    Recall that by Definition~\ref{def:update} we have that
    \begin{align}
        \lambda^{k+1}_\ell(v) = 1 + \min_{\substack{(v,v')\in \extE}}  \sup \{ \extCosti{i}(\rho) \mid \rho \in \Lambda^{k}(v') \}
        \end{align}

Since $\lambda^{k+1}_\ell(v) \neq +\infty$, we know that $ \sup \{ \extCosti{i}(\rho) \mid \rho \in \Lambda^{k}(v') \} $ is finite, for at least one successor $v'$ of $v$.
Thus, by Proposition~\ref{prop:sup_cost_bound}, we obtain:

\begin{align*}
    \lambda^{k+1}_\ell(v) &\leq 1 + |V| + 2\cdot \mFR{k}{\ell} + \alpha(\lambda^k, \ell) & \\
                          &\leq 1 + |V| + 2\cdot \mFR{k}{\ell} + \alpha(\lambda^{k^*_{\ell+1}}, \ell) \tag{by~\eqref{eq:alpha_l+1}} \\
                          &\leq 1 + |V| + \alpha(\lambda^{k^*_{\ell+1}}, \ell) &\\
                          & \quad + 2\cdot \left(\sum^{|\dom{k}{\ell}|}_{i=0} 2^i \right) \cdot (1 + |V| + \alpha(\lambda^{k^*_{\ell+1}},\ell) ) &  \tag{by ind.\ hyp.~\eqref{eq:hyp_ind_k}} \\
                          &\leq \left \lbrack 2\cdot \left(\sum^{|\dom{k}{\ell}|}_{i=0} 2^i \right) +1 \right \rbrack \cdot (1 + |V| + \alpha(\lambda^{k^*_{\ell+1}},\ell) ) &
\end{align*}

Since $\dom{k+1}{\ell}\neq \dom{k}{\ell}$ by $(b)$ (thus indeed $|\dom{k+1}{\ell}| > |\dom{k}{\ell}$|), we have \[ \left \lbrack 2\cdot \left(\sum^{|\dom{k}{\ell}|}_{i=0} 2^i \right) +1 \right \rbrack \leq \left(\sum^{|\dom{k+1}{\ell}|}_{i=0} 2^i \right)\]

Hence, we have
\begin{align*}
    \lambda^{k+1}_\ell(v) &\leq \left(\sum^{|\dom{k+1}{\ell}|}_{i=0} 2^i \right) \cdot (1 + |V| + \alpha(\lambda^{k^*_{\ell+1}}, \ell) )
\end{align*}

Finally, as this holds for any such $v\in V^{J_\ell}$, we obtain
\begin{align*}
    \mFR{k+1}{\ell} &\leq \left(\sum^{|\dom{k+1}{\ell}|}_{i=0} 2^i \right) \cdot (1 + |V| + \alpha(\lambda^{k^*_{\ell+1}}, \ell) )
\end{align*}

That is, Equation~\eqref{eq:cas_k+1} holds for region $J_\ell$.
\end{itemize}
\end{itemize}

\noindent
We proved that Equation~\eqref{eq:hyp_ind_k} holds for each non-bottom region $J_\ell$ and each step $k$ such that $k^*_{\ell+1} \leq k \leq k^*_\ell$.

\paragraph{Part 3.}
We can now come back to the induction on the regions $J_\ell$, following the order provided by Algorithm~\ref{algo:lambda}. Let us show that each $\ell \leq N$, we have
\begin{align}
    \mFR{k^*_\ell}{\ell} \leq 2^{(|V|+1)} \cdot (|\Pi||V| + |V| +1) \cdot \sum^{|\Pi \setminus J_\ell|}_{i=0} 2^{(|V|+1)} \cdot 2|\Pi| \label{eq:hyp_ind_ell}
\end{align}

\begin{itemize}

\item \textit{Base case:} If $J_\ell$ is a bottom region, we have that $\mFR{k^*_\ell}{\ell} = 0$ showing that (\ref{eq:hyp_ind_ell}) holds in this case and thus in particular when $\ell =N$.

\item \textit{General case:} Let $\ell < N$ and suppose $J_\ell$ is not a bottom region.
Assume that, for each $j> \ell$, Inequality (\ref{eq:hyp_ind_ell}) holds. By~\eqref{eq:hyp_ind_k}, we know that
\[ \mFR{k}{\ell} \leq \left(\sum^{|\dom{k}{\ell}|}_{i=0} 2^i \right) \cdot (1 + |V| + \alpha(\lambda^{k^*_{\ell+1}},\ell) ). \]

Furthermore, as $|\dom{k^*_\ell}{\ell}| \leq |V|$, we have
\begin{align}
  \mFR{k}{\ell} \leq 2^{(|V|+1)} \cdot (1 + |V| + \alpha(\lambda^{k^*_{\ell+1}},\ell) )  \label{eq:ind_sans_somme_puissance}
\end{align}

We turn now our attention towards the term $\alpha(\lambda^{k^*_{\ell+1}},\ell)$:
\begin{align*}
   \alpha(\lambda^{k^*_{\ell+1}},\ell) &= \sum^{|\Pi|}_{r=|J_\ell|+1} |V| + 2 \cdot \max_{\stackrel{J_j > J_\ell}{ |J_j|=r}} \mFR{k^*_{\ell+1}}{j} \tag{by Definition}\\
                                           &= \sum^{|\Pi|}_{r=|J_\ell|+1} |V| + 2 \cdot \max_{\stackrel{J_j > J_\ell}{ |J_j|=r}} \mFR{k^*_j}{j}
    \tag{by Lemma~\ref{lem:fixpoint-region}}
\end{align*}
By the induction hypothesis~(\ref{eq:hyp_ind_ell}), we can bound each term $\mFR{k^*_j}{j}$ appearing in the above sum:
\begin{align*}
    \mFR{k^*_j}{j} &\leq 2^{(|V|+1)} \cdot (|\Pi||V| + |V| +1) \cdot \sum^{|\Pi \setminus J_j|}_{i=0} 2^{(|V|+1)} \cdot 2|\Pi| &  \\
                   &\leq 2^{(|V|+1)} \cdot (|\Pi||V| + |V| +1) \cdot \sum^{|\Pi \setminus J_\ell|-1}_{i=0} 2^{(|V|+1)} \cdot 2|\Pi| \tag{as $|J_j|>|J_\ell|$}
\end{align*}
Coming back to $\alpha(\lambda^{k^*_{\ell+1}},\ell)$ whose the number of terms in the sum can be bounded by $|\Pi|$, we get
\begin{align}
   \alpha(\lambda^{k^*_{\ell+1}},\ell) &= \sum^{|\Pi|}_{r=|J_\ell|+1} |V| + 2 \cdot \max_{\stackrel{J_j > J_\ell}{ |J_j|=r}} \mFR{k^*_{\ell+1}}{j} & \nonumber \\
   &\leq |\Pi||V| + 2 |\Pi| \left( 2^{(|V|+1)} \cdot (|\Pi||V| + |V| +1) \cdot \sum^{|\Pi \setminus J_\ell|-1}_{i=0} 2^{(|V|+1)} \cdot 2|\Pi|\right). & \label{eq:termes_somme}
\end{align}
Hence we have, by combining~\eqref{eq:termes_somme} and~\eqref{eq:ind_sans_somme_puissance}:
\begin{align*}
     \mFR{k^*_\ell}{\ell} &\leq 2^{(|V|+1)} \cdot (1 + |V| + \alpha(\lambda^{k^*_{\ell+1}},\ell) ) & \nonumber \\
                   &\leq 2^{(|V|+1)} \cdot ( 1 + |V| + |\Pi||V| )& \nonumber \\
                   & + 2^{(|V|+1)} \cdot 2 |\Pi| \left(2^{(|V|+1)} \cdot (|\Pi||V| + |V| +1) \cdot \sum^{|\Pi \setminus J_\ell|-1}_{i=0} 2^{(|V|+1)} \cdot 2|\Pi|\right)  & \nonumber \\
                   &\leq 2^{(|V|+1)} \cdot ( 1 + |V| + |\Pi||V| ) \cdot \left \lbrack 1 + 2^{(|V|+1)} \cdot 2 |\Pi| \left( \sum^{|\Pi \setminus J_\ell|-1}_{i=0} 2^{(|V|+1)} \cdot 2|\Pi|\right) \right \rbrack & \nonumber \\
                   &\leq 2^{(|V|+1)} \cdot ( 1 + |V| + |\Pi||V| ) \cdot \left( \sum^{|\Pi \setminus J_\ell|}_{i=0} 2^{(|V|+1)} \cdot 2|\Pi|\right). & \nonumber \\
\end{align*}

Hence, Equation~\eqref{eq:hyp_ind_ell} holds for region $J_\ell$.
\end{itemize}

\paragraph{Part 4.}
We can now prove the three statements of Theorem~\ref{thm:bound_on_MR}. We obtain the first one from Inequality~(\ref{eq:hyp_ind_ell}) by recalling that $|\Pi| \leq |V|$ and $|V|\geq 2$:

\begin{align}
    \mFR{k^*_\ell}{\ell} &\leq 2^{(|V|+1)} \cdot ( 1 + |V| + |\Pi||V| ) \cdot \left( \sum^{|\Pi \setminus J_\ell|}_{i=0} 2^{(|V|+1)} \cdot 2|\Pi|\right) \nonumber \\
                  &\leq |V|^{(|V|+1)} \cdot ( 1 + |V| + |V|^2 ) \cdot \left( \sum^{|\Pi \setminus J_\ell|}_{i=0} |V|^{(|V|+3)} \right) \nonumber \\
                  &\leq |V|^{(|V|+1)} \cdot ( 1 + |V| + |V|^2 ) \cdot {(|V|^{(|V|+3)})}^{(|\Pi \setminus J_\ell|+1)} \nonumber \\
                  &\leq \mathcal{O}(|V|^{(|V| + 3)}) \cdot {(|V|^{(|V|+3)})}^{(|\Pi \setminus J_\ell|+1)} \nonumber \\
   &\leq \mathcal{O}\left( |V|^{(|V|+3|)(|\Pi \setminus J_\ell| +2)}\right) \label{eq:borne_region_*}
\end{align}
This proves the first statement of Theorem~\ref{thm:bound_on_MR}.

For the second one, remark that $\mFR{k^*_\ell}{\geq \ell} = \max \lbrace \mFR{k^*_j}{j} ~|~ J_j \geq J_\ell \rbrace$ (since only regions already treated are considered).
For each such $j\geq \ell$, we have $\mFR{k^*_j}{j} \leq \mathcal{O}\left( |V|^{(|V|+3|)(|\Pi \setminus J_j| +2)}\right) $ by~\eqref{eq:borne_region_*} and thus $\mFR{k^*_j}{j} \leq \mathcal{O}\left( |V|^{(|V|+3|)(|\Pi| +2)}\right)$. It follows that
$\mFR{k^*_\ell}{\geq \ell} \leq \mathcal{O}\left( |V|^{(|V|+3|)(|\Pi| +2)}\right)$.

For the last statement, recall that $\mFR{*}{}= \mFR{k^*_1}{\geq 1}$. Hence in particular we obtain $\mFR{*}{} \leq \mathcal{O}\left( |V|^{(|V|+3|)(|\Pi| +2)}\right)$.
\end{proof}

The next corollary is a direct consequence of Theorem~\ref{thm:bound_on_MR}.

\begin{cor}%
\label{cor:exponential}
The counter graph $\C{*}$ has a size $|\C{*}| = |V|\cdot 2^{|\Pi|} \cdot {(\mFR{*}{} + 2)}^{|\Pi|}$ that is exponential in the size of the game $\mathcal G$.
\end{cor}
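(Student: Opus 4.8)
The plan is to read the size of $\C{*}$ directly off Definition~\ref{def:counter_graph} and then substitute the bound furnished by Theorem~\ref{thm:bound_on_MR}. First I would count the vertices of the counter graph. By Definition~\ref{def:counter_graph}, its vertex set is $V^C = V^X \times \mathcal{K}^{|\Pi|}$, where $\mathcal{K} = \{0, \ldots, K\} \cup \{+\infty\}$ with $K = \mFR{*}{}$; hence $|\mathcal{K}| = K + 2$. Since $V^X = V \times 2^\Pi$ by Definition~\ref{def:extGame}, we have $|V^X| = |V| \cdot 2^{|\Pi|}$. Multiplying these two factors yields exactly
\[
|\C{*}| = |V^C| = |V| \cdot 2^{|\Pi|} \cdot (\mFR{*}{} + 2)^{|\Pi|},
\]
which is the size stated in the corollary. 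If one also wishes to account for edges, note that the out-degree of any vertex of $\C{*}$ is bounded by the out-degree of the underlying vertex in $X$, so the number of edges is polynomial in $|V^C|$ and the vertex count already governs the order of growth.

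It then remains to verify that this quantity is exponential in the size of $\mathcal{G}$. By Theorem~\ref{thm:bound_on_MR} we have $\mFR{*}{} \leq \mathcal{O}(|V|^{(|V|+3)(|\Pi|+2)})$, so the base $\mFR{*}{} + 2$ is singly exponential in $|V|$ and $|\Pi|$. Raising it to the power $|\Pi|$ and recalling $|\Pi| \leq |V|$ keeps the product singly exponential, while the remaining factors $2^{|\Pi|}$ and $|V|$ are respectively exponential and linear; altogether $|\C{*}|$ is bounded by $2^{\mathrm{poly}(|\mathcal{G}|)}$, as claimed. I expect no genuine obstacle here, since the statement is labelled a direct consequence: the only care needed is the bookkeeping of exponents, which becomes routine once the bound on $\mFR{*}{}$ from Theorem~\ref{thm:bound_on_MR} is plugged in. The single point worth stating cleanly is that $(\mFR{*}{}+2)^{|\Pi|}$ remains singly (and not doubly) exponential, which holds precisely because $|\Pi|$ is polynomially bounded in $|V|$.
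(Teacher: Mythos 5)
Your proof is correct and follows exactly the paper's (implicit) argument: the size formula is read off directly from Definition~\ref{def:counter_graph} as $|V^X|\cdot|\mathcal{K}|^{|\Pi|} = |V|\cdot 2^{|\Pi|}\cdot{(\mFR{*}{}+2)}^{|\Pi|}$, and exponentiality follows by plugging in the bound on $\mFR{*}{}$ from Theorem~\ref{thm:bound_on_MR} together with $|\Pi|\leq|V|$. The paper itself presents the corollary as a direct consequence of that theorem, so no further comparison is needed.
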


We conclude this section with two other corollaries that will be useful in the next section.

\begin{cor}\label{cor:bound_every_k}
For every $\N{k}$ and region $V^{J_\ell}$, we have
\[\mFR{k}{\ell} \leq \mathcal{O}\left( |V|^{(|V|+3)(|\Pi| +2)}\right)  \]
and also
\[\mFR{k}{\geq \ell} \leq \mathcal{O}\left( |V|^{(|V|+3)(|\Pi| +2)}\right). \]
\end{cor}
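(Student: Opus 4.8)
The plan is to observe that the exponential bound established in Theorem~\ref{thm:bound_on_MR} for the \emph{fixpoint} values $\mFR{k^*_\ell}{\ell}$ is in fact already uniform over all the intermediate steps $k$, so that essentially no new work is needed beyond a careful bookkeeping of the three regimes of $k$ distinguished by Lemma~\ref{lem:fixpoint-region}.

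First I would fix a region $V^{J_\ell}$ and split the range of $k$ into three parts according to Lemma~\ref{lem:fixpoint-region}. When $k \leq k^*_{\ell+1}$, that lemma gives $\lambda^k_\ell = \lambda^0_\ell$, whose finite values are all $0$, so $\mFR{k}{\ell} = 0$ and the bound holds trivially. When $k \geq k^*_\ell$, the lemma gives $\lambda^k_\ell = \lambda^{k^*_\ell}_\ell$, hence $\mFR{k}{\ell} = \mFR{k^*_\ell}{\ell}$, and the desired bound is exactly the first statement of Theorem~\ref{thm:bound_on_MR} (weakened via $|\Pi \setminus J_\ell| \leq |\Pi|$). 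The only regime requiring attention is therefore the intermediate range $k^*_{\ell+1} \leq k \leq k^*_\ell$.

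For this intermediate range I would simply reuse the inequalities already proved inside Theorem~\ref{thm:bound_on_MR}. Part~2 of that proof establishes Equation~(\ref{eq:hyp_ind_k}) for \emph{every} such $k$, not only for $k = k^*_\ell$; combining it with $|\dom{k}{\ell}| \leq |V|$ (which holds for all $k$, since $\dom{k}{\ell} \subseteq V^{J_\ell}$) yields Inequality~(\ref{eq:ind_sans_somme_puissance}), namely $\mFR{k}{\ell} \leq 2^{(|V|+1)} \cdot (1 + |V| + \alpha(\lambda^{k^*_{\ell+1}},\ell))$. The crucial point is that the right-hand side does \emph{not} depend on $k$ throughout this range: by Equation~(\ref{eq:alpha_l+1}) we have $\alpha(\lambda^k,\ell) = \alpha(\lambda^{k^*_{\ell+1}},\ell)$ for all $k \geq k^*_{\ell+1}$, since every region $J_j > J_\ell$ has already stabilized by step $k^*_{\ell+1}$. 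Consequently the very computation carried out in Part~3 of the proof of Theorem~\ref{thm:bound_on_MR}, which bounds this same quantity by $\mathcal{O}(|V|^{(|V|+3)(|\Pi \setminus J_\ell|+2)})$ using the region-induction hypothesis~(\ref{eq:hyp_ind_ell}) for the already-treated regions, applies verbatim to every intermediate $k$. Thus $\mFR{k}{\ell} \leq \mathcal{O}(|V|^{(|V|+3)(|\Pi \setminus J_\ell|+2)}) \leq \mathcal{O}(|V|^{(|V|+3)(|\Pi|+2)})$ for all $k$, which is the first statement.

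The second statement follows immediately: since $\mFR{k}{\geq \ell} = \max\{\mFR{k}{j} \mid J_j \geq J_\ell\}$, and each $\mFR{k}{j}$ is bounded by $\mathcal{O}(|V|^{(|V|+3)(|\Pi|+2)})$ by the first statement, the maximum is bounded by the same quantity. The main (and essentially the only) obstacle is the observation that nothing in the derivation of the bound inside Theorem~\ref{thm:bound_on_MR} actually uses that $k$ equals the fixpoint index $k^*_\ell$ — the estimate holds uniformly across the whole plateau $[k^*_{\ell+1}, k^*_\ell]$ — so the corollary is obtained for free by re-reading that proof.
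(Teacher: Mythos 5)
Your proposal is correct and follows essentially the same route as the paper's own proof: both reduce to the intermediate plateau $k^*_{\ell+1} < k \leq k^*_\ell$ via Lemma~\ref{lem:fixpoint-region}, reuse Equation~\eqref{eq:hyp_ind_k} together with $|\dom{k}{\ell}| \leq |V|$ and the $k$-independence of $\alpha(\lambda^{k^*_{\ell+1}},\ell)$, then rerun the Part~3 computation of Theorem~\ref{thm:bound_on_MR}, and finally obtain the bound on $\mFR{k}{\geq \ell}$ by taking the maximum over regions $J_j \geq J_\ell$. The only cosmetic difference is that the paper first disposes of bottom regions (where $\mFR{k}{\ell}=0$) before invoking the Part~2 machinery, a case your argument covers only implicitly.
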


\begin{proof}
Assume, without loss of generality, that $J_\ell$ is not a bottom region (otherwise we immediately have $\mFR{k}{\ell}=0$ for every $\N{k}$).
Let $\N{k}$.
Again, by Lemma~\ref{lem:fixpoint-region}, if $k \leq k^*_{\ell+1}$, we already have $\mFR{k}{\ell}=0$, and if $k \geq k^*_{\ell}$, we have $\lambda^{k}_\ell(v) = \lambda^{k^*_\ell}(v)$. Thus we assume that $k^*_{\ell+1} < k \leq k^*_{\ell}$.
Using the terminology of the proof of Theorem~\ref{thm:bound_on_MR}, we know, by~\eqref{eq:hyp_ind_k} in its proof, that

\begin{align}
    \mFR{k}{\ell} \leq \left(\sum^{|\dom{k}{\ell}|}_{i=0} 2^i \right) \cdot (1 + |V| + \alpha(\lambda^{k^*_{\ell+1}},\ell) )
\end{align}

As $|\dom{k}{\ell}|\leq |V|$, we have

\begin{align}
    \mFR{k}{\ell} \leq 2^{|V|+1} \cdot (1 + |V| + \alpha(\lambda^{k^*_{\ell+1}},\ell) ) \label{eq:alpha_k_ell}
\end{align}

From there, following the same steps (from Equation~\eqref{eq:ind_sans_somme_puissance} onwards) as in the proof of Theorem~\ref{thm:bound_on_MR} leads to the desired bound: $\mFR{k}{\ell} \leq \mathcal{O}\left( |V|^{(|V|+3)(|\Pi| +2)}\right)$.

Similarly, as $\mFR{k}{\geq \ell} = \max\lbrace \mFR{k}{j} ~|~ j\geq \ell \rbrace $, and as we just showed that $\mFR{k}{j} \leq \mathcal{O}\left( |V|^{(|V|+3|)(|\Pi| +2)}\right)$ for every $j$ and $k$, we immediately get
\[
    \mFR{k}{\geq \ell}\leq \mathcal{O}\left( |V|^{(|V|+3|)(|\Pi| +2)}\right)
    \qedhere
\]
\end{proof}

\begin{cor}\label{cor:bound_sup}
Let $v\in V^X$ with $I(v) = J_\ell$ with $\ell \in \{1, \ldots N-1\}$.
Let $\N{k}$. Suppose there exists $\N{c}$ such that $\sup~ \{ \Cost_i(\rho) \mid \rho \in \Lambda^{k}(v)\} = c$.
Then, the following holds:

\[ c \leq \mathcal{O}\left( |V|^{(|V|+3)(|\Pi| +2)}\right).\]
\end{cor}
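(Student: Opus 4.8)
The plan is to obtain this bound by directly combining the explicit finite bound of Proposition~\ref{prop:sup_cost_bound} with the uniform exponential bound on maximal finite ranges provided by Corollary~\ref{cor:bound_every_k}. The crucial observation is that the hypothesis concerns an \emph{arbitrary} step $k$ of Algorithm~\ref{algo:lambda}, not necessarily a fixpoint step $k^*_\ell$; this is exactly why Corollary~\ref{cor:bound_every_k} (which holds for every $\N{k}$) is the right tool here, rather than Theorem~\ref{thm:bound_on_MR}, which only bounds the fixpoint ranges $\mFR{k^*_\ell}{\ell}$ and $\mFR{k^*_\ell}{\geq \ell}$.

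First I would apply Proposition~\ref{prop:sup_cost_bound} to the labeling function $\lambda^k$ and its associated family of $\lambda^k$-consistent plays. Since by hypothesis $\N{c}$, we are in the case $c < +\infty$ of that proposition, which yields
\[ c \leq |V| + 2 \cdot \mFR{k}{\ell} + \sum^{|\Pi|}_{r=|J_\ell|+1} |V| + 2 \cdot \max_{\stackrel{J_j > J_\ell}{ |J_j|=r}} \mFR{k}{j}. \]
Next I would bound each maximal finite range appearing on the right-hand side using Corollary~\ref{cor:bound_every_k}: both $\mFR{k}{\ell}$ and every $\mFR{k}{j}$ with $J_j > J_\ell$ are at most $\mathcal{O}(|V|^{(|V|+3)(|\Pi|+2)})$, uniformly in $k$. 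Observing that the outer sum ranges over $r$ from $|J_\ell|+1$ to $|\Pi|$ and hence contains at most $|\Pi| \leq |V|$ terms, the entire right-hand side is a sum of a polynomial (in $|V|$) number of summands, each bounded either by $|V|$ or by the above exponential quantity.

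The only point requiring a little care — and it is routine bookkeeping rather than a genuine obstacle — is the accounting of polynomial factors. Collecting the at most $|V|$ summands of the sum, together with the stray additive $|V|$ terms, multiplies the dominant exponential by factors of the form $|V|^{O(1)}$, where I use $|\Pi| \leq |V|$ and $|V| \geq 2$ exactly as in the proof of Theorem~\ref{thm:bound_on_MR}. Such polynomial factors are absorbed into the big-$\mathcal{O}$, leaving the claimed bound $c \leq \mathcal{O}(|V|^{(|V|+3)(|\Pi|+2)})$. Thus no new idea beyond assembling the two previous results is needed, and the restriction $\ell \in \{1,\ldots,N-1\}$ only serves to exclude the trivial bottom-region case.
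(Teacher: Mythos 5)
Your first step is exactly the paper's: apply Proposition~\ref{prop:sup_cost_bound} to $\lambda^k$ and reduce everything to bounding the maximal finite ranges on the right-hand side. The gap is in your final step. You bound each of the (up to $|\Pi|+1$) summands by the \emph{uniform} bound $\mathcal{O}(|V|^{(|V|+3)(|\Pi|+2)})$ of Corollary~\ref{cor:bound_every_k} and then declare that the resulting factor of order $|\Pi|\leq |V|$ is ``absorbed into the big-$\mathcal{O}$''. That absorption is invalid: the implicit constant in a $\mathcal{O}(\cdot)$ statement must be uniform over all games, and on the family of games with $|\Pi|=|V|=n$ the ratio between $|\Pi|\cdot |V|^{(|V|+3)(|\Pi|+2)}$ and $|V|^{(|V|+3)(|\Pi|+2)}$ is $n$, which is unbounded. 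A factor polynomial in $|V|$ can only be absorbed into the \emph{exponent}, never into the constant, when the exponent itself grows with $|V|$. So what your argument actually establishes is $c \leq \mathcal{O}(|V|^{(|V|+3)(|\Pi|+2)+1})$ --- good enough for the downstream PSPACE argument, but strictly weaker than the stated corollary.

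The paper avoids this precisely by \emph{not} using Corollary~\ref{cor:bound_every_k} as a black box. After applying Proposition~\ref{prop:sup_cost_bound}, it substitutes the intermediate inequality~\eqref{eq:hyp_ind_k} for $\mFR{k}{\ell}$, so that the entire right-hand side becomes at most \emph{twice} the expression $2^{(|V|+1)}\cdot(1+|V|+\alpha(\lambda^{k^*_{\ell+1}},\ell))$ that the proof of Theorem~\ref{thm:bound_on_MR} already bounds, from~\eqref{eq:ind_sans_somme_puissance} onwards, by $\mathcal{O}(|V|^{(|V|+3)(|\Pi|+2)})$; the only extra factor is the constant $2$, which big-$\mathcal{O}$ does absorb. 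If you insist on a black-box argument, it can be repaired, but you need the region-refined bound $\mFR{k}{j} \leq \mathcal{O}(|V|^{(|V|+3)(|\Pi\setminus J_j|+2)})$, which is implicit in the proofs of Theorem~\ref{thm:bound_on_MR} and Corollary~\ref{cor:bound_every_k} but not in the latter's statement: every region $J_j$ occurring in the sum satisfies $|J_j|\geq |J_\ell|+1\geq 1$, hence each such summand is $\mathcal{O}(|V|^{(|V|+3)(|\Pi|+1)})$, and the factor $|\Pi|\leq|V|$ coming from the number of summands then fits into the leftover exponent slack $|V|+3$; only the single term $2\cdot\mFR{k}{\ell}$ may need the full exponent, and it carries a constant coefficient. (A side remark: the restriction $\ell\leq N-1$ does not by itself exclude bottom regions other than $J_N$; the paper's proof treats bottom regions separately, where $c=0$.) As written, however, your proposal proves a weaker exponent than the one claimed.
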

\begin{proof}
Let $v\in V^X$ with $I(v) = J_\ell$ and $\N{k}$.
Suppose there exists $\N{c}$ such that $\sup~ \{ \Cost_i(\rho) \mid \rho \in \Lambda^{k}(v)\} = c$.
If $J_\ell$ is a bottom region, then $c=0$, thus we assume from now on that $J_\ell$ is not a bottom region.
Then, by Proposition~\ref{prop:sup_cost_bound}, we know that:

\[ c \leq |V| + 2 \cdot \mFR{k}{\ell} + \sum^{|\Pi|}_{r=|J_\ell|+1} |V| + 2 \cdot \max_{\stackrel{J_j > J_\ell}{ |J_j|=r}} \mFR{k^*_j}{j} .\]

Using the terminology of the proof of Theorem~\ref{thm:bound_on_MR}, we have:

\begin{align*}
    c &\leq |V| + 2 \cdot \mFR{k}{\ell} + \alpha(\lambda^{k^*_{\ell+1}},\ell)& \nonumber\\
      &\leq 1 + |V| + 2 \cdot \mFR{k}{\ell} + \alpha(\lambda^{k^*_{\ell+1}},\ell)&  \nonumber\\
      &\leq 1 + |V| + \alpha(\lambda^{k^*_{\ell+1}}, \ell) + 2\cdot \left(\sum^{|\dom{k}{\ell}|}_{i=0} 2^i \right) \cdot (1 + |V| + \alpha(\lambda^{k^*_{\ell+1}},\ell) ) \tag{by~\eqref{eq:hyp_ind_k}} \\
      &\leq \left(\sum^{|\dom{k}{\ell}|+1}_{i=0} 2^i \right) \cdot (1 + |V| + \alpha(\lambda^{k^*_{\ell+1}}, \ell))
      \end{align*}

  As $|\dom{k}{\ell}| \leq |V| $, we obtain:
 \begin{align*}
     c &\leq  2 \cdot \left[2^{(|V|+1)} \cdot (1 + |V| + \alpha(\lambda^{k^*_{\ell+1}},\ell)) \right].
     \end{align*}

From there, following the same steps (from Equation~\eqref{eq:ind_sans_somme_puissance} onwards) as in the proof of Theorem~\ref{thm:bound_on_MR} leads to the desired bound: \[c \leq 2 \cdot \mathcal{O}\left( |V|^{(|V|+3)(|\Pi| +2)}\right) =  \mathcal{O}\left( |V|^{(|V|+3)(|\Pi| +2)}\right). \qedhere\]
\end{proof}

\section{PSPACE completeness}%
\label{section:PSPACEc}

In this section, we prove Theorem~\ref{thm:main}. We first prove that the constraint problem is in PSPACE and then that it is PSPACE-hard.

\subsection{PSPACE easiness}

The purpose of this section is to prove that determining if, given a reachability game ($\mathcal{G}$,$v_0$) and two thresholds $x,y \in {(\mathbb{N} \cup \{+\infty \})}^{|\Pi|}$, there exists an SPE $\sigma$ in this game such that for all $i \in \Pi$, $x_i \leq \Cost_i(\outcome{\sigma}{v_0}) \leq y_i$ can be done in PSPACE\@.

 \begin{prop}%
 \label{prop:PSPACE-easiness}
    The constraint problem for initialized reachability games is in PSPACE\@.
 \end{prop}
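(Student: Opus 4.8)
The plan is to reduce the constraint problem to the search for a suitably constrained infinite path in the counter graph $\C{*}$, and then to carry out this search nondeterministically in polynomial space, relying on the equality of the complexity classes NPSPACE and PSPACE (Savitch's theorem). First, by Lemma~\ref{lem:equivSPE} it suffices to work in the extended game $(\extGame,x_0)$, and by Theorem~\ref{thm:folkThm} the SPE outcomes in $(\extGame,x_0)$ are exactly the plays of $\Lambda^*(x_0)$. By Lemmas~\ref{lem:lambda_consistent_rho_X_to_rho_C} and~\ref{lem:chemin_infini_compteurs}, these plays are in one-to-one correspondence with the infinite paths of $\C{*}$ issued from the starting vertex $x_0^C$. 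Hence the constraint problem amounts to deciding whether there is an infinite path in $\C{*}$ from $x_0^C$ whose corresponding play has cost $c$ with $x \leq c \leq y$.

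Second, I would argue that such a path, if it exists, can be taken to be a lasso $hg^\omega$ whose cycle $g$ stays in a single region (by $I$-monotonicity) and whose length $|hg|$ is bounded by $2\cdot|\C{*}|$, which is exponential in the size of $\mathcal{G}$ by Corollary~\ref{cor:exponential}; indeed $\C{*}$ is finite and every finite cost attained along a valid path is exponentially bounded by Corollary~\ref{cor:bound_sup}, in the spirit of the lasso bound of Proposition~\ref{prop:sup_cost_bound}. Along such a lasso the whole cost vector is determined: player~$i$ has finite cost equal to the first step reaching a vertex with $i \in I$, and cost $+\infty$ if this never occurs within $hg$. The search algorithm guesses $h$ and $g$ vertex by vertex, keeping in memory only the current vertex of $\C{*}$ (a tuple $(v,I,{(c_i)}_{i\in\Pi})$ of polynomial bit-size, since each finite counter is bounded by $\mFR{*}{}$, itself at most exponential by Theorem~\ref{thm:bound_on_MR}), a step counter up to $2\cdot|\C{*}|$ (polynomially many bits), the first vertex of $g$ (to check that the cycle closes), and, for each player, whether and when his target is reached. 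Comparing the resulting cost vector with $x$ and $y$ is then immediate, and all of this uses only polynomial space.

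Third --- and this is the main obstacle --- the edge relation of $\C{*}$ refers to the values $\lambda^*(v')$ (Definition~\ref{def:counter_graph}), so the search must be able to evaluate $\lambda^*$ at individual vertices in polynomial space, even though $\lambda^*$ ranges over the exponentially many vertices of $\extGame$ and cannot be tabulated globally. I would provide a recursive polynomial-space subroutine that computes $\lambda^*$ one region at a time. Exploiting the bottom-up structure of Algorithm~\ref{algo:lambda} and Lemma~\ref{lem:fixpoint-region}, the values of $\lambda^*$ on a region $V^{J_\ell}$ depend only on those on the regions above it; since a single region has at most $|V|$ vertices, its table of $\lambda^*$-values fits in polynomial space. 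To fill this table I would run the within-region fixpoint iteration of Definition~\ref{def:update}, where each update evaluates a term $\sup\{\Cost_i(\rho) \mid \rho \in \Lambda^{k}(v')\}$; this term I compute as a longest-time-to-target search (or detection of a reachable target-avoiding cycle yielding $+\infty$) in $\C{k}$ restricted to the regions $\geq J_{I(v')}$, again a polynomial-space reachability-with-counter computation, all intermediate values staying of polynomial bit-size throughout the iteration by Corollaries~\ref{cor:bound_every_k} and~\ref{cor:bound_sup}. Values required on strictly higher regions are obtained by recursion; by $I$-monotonicity any chain of regions has length at most $|\Pi| \leq |V|$, so the recursion depth is at most $|\Pi|$ and, storing one region table plus one step counter per level, the total space remains polynomial.

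Finally, combining the outer nondeterministic lasso search with this $\lambda^*$-oracle, and determinizing the whole procedure via Savitch's theorem, yields a polynomial-space decision procedure, which establishes PSPACE membership. The delicate points to check carefully will be that the nested max-cost computations compose within polynomial space (using closure of PSPACE under polynomial-space subroutine calls) and that the exponential bounds of Theorem~\ref{thm:bound_on_MR} and Corollaries~\ref{cor:bound_every_k}--\ref{cor:bound_sup} indeed keep every stored quantity of polynomial bit-length at every step of the computation.
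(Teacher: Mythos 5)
Your overall strategy coincides with the paper's: reduce the constraint problem, via Theorem~\ref{thm:folkThm} and Lemmas~\ref{lem:lambda_consistent_rho_X_to_rho_C}--\ref{lem:chemin_infini_compteurs}, to finding a constrained lasso in the counter graph $\C{*}$; guess that lasso region by region in nondeterministic polynomial space; and supply the needed values of $\lambda^*$ by a recursive, region-by-region polynomial-space computation (this is exactly the content of Proposition~\ref{prop:regionalComputation} and its proof, including the recursion depth bounded by $|\Pi|$ via $I$-monotonicity). However, there is one genuine gap, in your second step: you claim the witness lasso $hg^\omega$ can be taken with $|hg| \leq 2\cdot|\C{*}|$, justified by the assertion that ``every finite cost attained along a valid path is exponentially bounded by Corollary~\ref{cor:bound_sup}''. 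That corollary does not say this: it bounds $\sup\{\Cost_i(\rho) \mid \rho \in \Lambda^{k}(v)\}$ only \emph{when that supremum is finite}. When the supremum is $+\infty$, valid paths may have arbitrarily large \emph{finite} costs for player~$i$ (for instance by looping through a cycle on which all counters are $0$ or $+\infty$ before exiting towards a region containing $\extFi{i}$). Consequently your length bound fails exactly on the instances where the lower bounds matter: if $y_i < +\infty$ and $x_i > 2\cdot|\C{*}|$ for some player~$i$ (which is possible, since $x_i$ is encoded in binary), then any play satisfying the constraints has a finite cost $c_i \geq x_i$, so its first visit to $\extFi{i}$ occurs at position at least $x_i$, and every witnessing lasso must have $|hg| \geq x_i > 2\cdot|\C{*}|$. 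Your algorithm would therefore reject satisfiable instances.

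The underlying reason a plain cycle-removal argument cannot give your bound is that removing cycles only \emph{decreases} costs, which preserves the upper bounds $y_i$ but can break the lower bounds $x_i$. The paper's fix is to set $d := \max \{x_i \mid x_i < +\infty\}$, keep the prefix $\pi_{\leq d}$ of the witness path intact, and perform cycle elimination only in the suffix $\pi_{\geq d}$, section by section in its region decomposition. This produces a lasso of length at most $d + 2\cdot|\C{*}|$ --- still storable in polynomially many bits --- whose cost vector satisfies the constraints if and only if that of the original path does: costs at most $d$ are unchanged, finite costs greater than $d$ remain greater than $d \geq x_i$, and the set of players who reach their target set is unchanged. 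With this corrected bound (and the step counter ranging up to $d + 2\cdot|\C{*}|$ instead of $2\cdot|\C{*}|$), the rest of your procedure goes through and is essentially the paper's proof.
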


Let us provide a high level sketch of the proof of our PSPACE procedure for this constraint problem. Thanks to Theorem~\ref{thm:folkThm}, solving the constraint problem for a given game $({\mathcal G},v_0)$ reduces in finding a $\lambda^*$-consistent play $\rho$ in $(\extGame,x_0)$ satisfying the constraints. By Lemmas~\ref{lem:lambda_consistent_rho_X_to_rho_C} and~\ref{lem:chemin_infini_compteurs}, the latter problem reduces in finding a valid path $\pi$ in the counter graph $\mathbb{C}(\lambda^*)$ that satisfies the constraints. We will see that it suffices to decide the existence of such a valid path that is a lasso $hg^\omega$. As $\mathbb{C}(\lambda^*)$ is exponential in the size of the input $\mathcal G$ (by Corollary~\ref{cor:exponential}), classical arguments using Savitch's Theorem can thus be used to prove the PSPACE membership. Nevertheless, the detailed proof is more intricate for two reasons. The first reason is that the counter graph is constructed from the labeling function $\lambda^*$. We thus also have to prove that $\lambda^*$ can be computed in PSPACE\@. The second reason is that, a priori, although we know that the counter graph is of exponential size, we do not know explicitly its size. This is problematic when using classical NPSPACE algorithms that guess, vertex by vertex, some finite path in a graph of exponential size, where a counter bounded by the size of the graph is needed to guarantee the termination of the procedure. In order to overcome this, we also need a PSPACE procedure to obtain the actual size of $\mathbb{C}(\lambda^*)$. Recall that the size $|\mathbb{C}(\lambda^*)|$ is equal to $|V| \cdot 2^{|\Pi|} \cdot {(K+2)}^{|\Pi|}$ where $K = \mFR{*}{}$. Hence to compute the size of the counter graph, we have to compute the actual value of $\mFR{*}{}$.

The PSPACE procedure to compute $\lambda^*$ and $\mFR{*}{}$ works by induction on $k$, the steps in the computation of the labeling function $\lambda^*$. Moreover, it exploits the structural evolution of the local fixpoints formalized in Proposition~\ref{prop:fixpoint} and Lemma~\ref{lem:fixpoint-region}. Indeed recall that these local fixpoints are computed region by region, from $X^{J_N}$ to $X^{J_1}$, and that if $X^{J_\ell}$ is the currently treated region, then the values of $\lambda^{k+1}(v)$ are computed from $\lambda^{k}(v)$ for all $v \in V^{J_{\ell}}$ (those values remain unchanged for all $v$ outside of $V^{J_{\ell}}$).
For the moment, suppose that we have at our disposal a PSPACE procedure to compute $\{\lambda^k(v) \mid v \in V^{J_{\ell}} \}$ and the maximal finite range $\mFR{k}{\geq \ell}$:

\begin{prop}%
\label{prop:regionalComputation} Given an initialized reachability game $(\mathcal{G},v_0)$, for all $k \in \mathbb{N}$ and for all $J_{\ell}$, $\ell \in \{1, \ldots, N \}$, the set $\{\lambda^k(v) \mid v \in V^{J_{\ell}} \}$ and the maximal finite range $\mFR{k}{\geq \ell}$ can be computed in PSPACE\@.
\end{prop}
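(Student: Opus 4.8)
The plan is to simulate Algorithm~\ref{algo:lambda} step by step while keeping in memory only the labeling of the region currently being processed, using the counter graph to evaluate the suprema appearing in the update rule, and exploiting the a priori exponential bounds of Section~\ref{section:counterGraph} both to bound the bit-size of the stored values and to guarantee termination of the nondeterministic searches. The argument is a double induction mirroring Proposition~\ref{prop:fixpoint}: an outer induction on the regions, processed bottom-up from $J_N$ to $J_1$ as the algorithm does, and for each non-bottom region $J_\ell$ an inner induction on the step $k$ with $k^*_{\ell+1} \le k \le k^*_\ell$ (for the remaining steps, Lemma~\ref{lem:fixpoint-region} tells us that $\lambda^k$ on $V^{J_\ell}$ is either $\lambda^0$ or the already-stabilized fixpoint, so nothing has to be recomputed).

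First I would fix the space budget. By Corollary~\ref{cor:bound_every_k}, for every $k$ and every region the quantities $\mFR{k}{\ell}$ and $\mFR{k}{\geq \ell}$ are bounded by an explicit exponential $B = \mathcal{O}(|V|^{(|V|+3)(|\Pi|+2)})$; hence every finite value taken by $\lambda^k$ fits in polynomially many bits, and storing the list $\{\lambda^k(v)\mid v\in V^{J_\ell}\}$ for a single region costs only polynomial space. Moreover, by Corollary~\ref{cor:exponential}, the counter graph $\C{k}$ has an explicitly computable size bound $\mathcal{B} = |V|\cdot 2^{|\Pi|}\cdot (B+2)^{|\Pi|}$, again of polynomial bit-length, even though we never build $\C{k}$ explicitly. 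These two facts are what let a nondeterministic, space-bounded search over $\C{k}$ use a plain binary step-counter $\le 2\mathcal{B}$ to decide when to stop, despite not knowing the exact size of the graph.

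The core step is the evaluation of $\sup\{\extCosti{i}(\rho)\mid \rho\in\Lambda^k(v')\}$ for the update of Definition~\ref{def:update}. By Corollary~\ref{cor:sup_max} this supremum is a maximum, and by Proposition~\ref{prop:sup_cost_bound} it is realized by a valid lasso $h g^\omega$ in $\C{k}$ starting at $(v')^C$ with $|hg|\le 2\,|\C{k}|\le 2\mathcal{B}$; its value is $+\infty$ exactly when such a lasso can stay forever in regions not containing $i$, and otherwise it is finite and $\le B$. I would therefore $(i)$ decide the infinite case by a nondeterministic reachability-of-cycle test inside the sub-counter-graph where $i$ has not yet reached its target, and $(ii)$ in the finite case compute the maximum by a $\log B$-step binary search over thresholds $c$, where for each $c$ the predicate ``there is a valid play from $v'$ with $\extCosti{i}\ge c$'' is itself decided nondeterministically by guessing a length-$c$ valid prefix avoiding $i$'s target followed by a check that an infinite valid continuation exists. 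Each counter-graph vertex $(u,I,(c_j)_j)$ and the outgoing-edge test of Definition~\ref{def:counter_graph} are manipulated in polynomial space, so every such search lies in $\mathrm{NPSPACE}=\mathrm{PSPACE}$ by Savitch's theorem. Plugging the at most $|V|$ suprema (one per successor) into the $\min$ of Definition~\ref{def:update} then yields $\lambda^{k+1}(v)$.

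The one genuine obstacle is that the edge relation of $\C{k}$ refers to the values $\lambda^k(v'')$ at vertices $v''$ lying in regions strictly above $J_\ell$, which we have chosen not to store. These values are precisely the fixpoints already computed on the higher regions, so I would recompute them on demand by invoking the outer induction hypothesis. The recursion is kept within polynomial space by the $I$-monotonicity property~(\ref{eq:increasing}): every path traverses at most $|\Pi|$ distinct regions and each nested call strictly increases $|I|$, so the nesting depth is at most $|\Pi|$ and the total space is $|\Pi|$ times a polynomial, hence polynomial. Finally, $\mFR{k}{\geq \ell}$ is obtained by enumerating all $v\in V^{\geq J_\ell}$ --- which costs polynomial space even though there are exponentially many of them --- computing each $\lambda^k(v)$ with the above procedure and keeping the running maximum of the finite values; extracting $\{\lambda^k(v)\mid v\in V^{J_\ell}\}$ is the special case $I=J_\ell$.
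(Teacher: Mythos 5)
Your proposal is correct and follows essentially the same route as the paper's own proof: a double induction over regions and computation steps organized by Lemma~\ref{lem:fixpoint-region}, evaluation of each $\sup\{\extCosti{i}(\rho)\mid\rho\in\Lambda^k(v')\}$ by nondeterministically guessing witness lassos in the counter graph $\mathbb{C}(\lambda^k_{\geq \ell})$ (justified by Corollary~\ref{cor:sup_max}, Proposition~\ref{prop:sup_cost_bound} and Corollary~\ref{cor:bound_sup}), and cross-region recursion whose depth is bounded by $|\Pi|$ thanks to $I$-monotonicity. The one substantive point where you diverge is the termination counter for the guessing: the paper deliberately computes the \emph{actual} value of $\mFR{k}{\geq \ell}$ so as to know $L = 2\cdot|\mathbb{C}(\lambda^k_{\geq \ell})|$ exactly, whereas you cap the search by an a priori bound $2\mathcal{B}$ obtained from Corollary~\ref{cor:bound_every_k}; this is sound, because the witness lasso of Proposition~\ref{prop:sup_cost_bound} has length at most $2\cdot|\C{k}|\leq 2\mathcal{B}$ (so completeness survives over-approximation) and every guessed edge is verified on the fly against Definition~\ref{def:counter_graph} (so soundness does too) --- but for this to be implementable you must extract a concrete numerical bound from the proof of Theorem~\ref{thm:bound_on_MR} (e.g.\ inequality~\eqref{eq:hyp_ind_ell}), since a bound stated with $\mathcal{O}(\cdot)$ is not by itself a number one can store in a counter. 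Your remaining deviations are cosmetic and equally valid: binary rather than linear search for the finite maximum (legitimate because the predicate ``some $\lambda^k$-consistent play from $v'$ has cost at least $c$'' is monotone in $c$), and computing $\mFR{k}{\geq \ell}$ by enumerating all of $V^{\geq J_\ell}$ with a running maximum instead of the paper's recursion $\mFR{k}{\geq \ell} = \max\{\mFR{k}{\ell}, \mFR{k}{\geq \ell+1}\}$.
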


Let us prove Proposition~\ref{prop:PSPACE-easiness}. The proof of Proposition~\ref{prop:regionalComputation} will be given just after.

\begin{proof}[Proof of Proposition~\ref{prop:PSPACE-easiness}]

Let $(\mathcal{G},v_0)$ be an initialized reachability game and let $x,y \in {(\mathbb{N} \cup \{+\infty\})}^{|\Pi|}$ be two thresholds. Let $(\extGame,x_0)$ be the extended game of $(\mathcal{G},v_0)$, $\mathbb{C}(\lambda^*)$ the counter graph constructed from the labeling function $\lambda^*$ and its maximal finite range $\mFR{*}{}$.

We first prove that there exists an SPE in $(\mathcal G, v_0)$ such that its outcome $\rho$ satisfies the constraints $x_i \leq \Cost_i(\rho) \leq y_i$ for all $i \in \Pi$, if and only if, there exists a valid path in $\mathbb{C}(\lambda^*)$ starting from the starting vertex $x_0^C$ associated with $x_0$ such that it is a lasso $hg^\omega$ with the length $|hg|$ bounded by $d + 2 \cdot |\mathbb{C}(\lambda^*)|$ with
\[d := \max \{x_i \mid x_i < +\infty\}\]
and such that it also satisfies these constraints.

We already know by Theorem~\ref{thm:folkThm} and Lemmas~\ref{lem:lambda_consistent_rho_X_to_rho_C}-\ref{lem:chemin_infini_compteurs} that the existence of an SPE outcome $\rho$ in $\mathcal G$ satisfying the constraints is equivalent to the existence of a valid path $\pi$ in $\mathbb{C}(\lambda^*)$ satisfying these constraints. It remains to show that the latter path can be chosen as a lasso $hg^\omega$ with the announced length of $hg$. This lasso is constructed as follows. Consider the suffix $\pi_{\geq d}$ of $\pi$ and its region decomposition $\pi_{\geq d}[m]\pi_{\geq d}[m+1]\ldots \pi_{\geq d}[n]$. For all $\ell \in \{m,\ldots,n-1\}$, we remove all the cycles in section $\pi_{\geq d}[\ell]$ to get a simple path $\pi'_\ell$, and from the last section $\pi_{\geq d}[n]$, we derive the infinite path $\pi'_n$ formed of the first vertices of $\pi_{\geq d}[n]$ until a cycle is reached and then repeated forever. Notice that there is no other cycle in $\pi'_{m}\pi'_{m+1}\ldots\pi'_{n}$ by the $I$-monotonicity property (\ref{eq:increasing}). The required lasso $hg^\omega$ is equal to the concatenation of the prefix $\pi_{\leq d}$ with the modified suffix $\pi'_{m}\pi'_{m+1}\ldots\pi'_{n}$. By construction $\pi'_{m}\ldots\pi'_{n}$ is itself a lasso $h'g'^\omega$ with $|h'g'|$ bounded by $2 \cdot |\mathbb{C}(\lambda^*)|$ and thus $|hg|$ is bounded by $d + 2 \cdot |\mathbb{C}(\lambda^*)|$. Moreover, for all $i$, $\pi$ visits the target set of player $i$ if and only if $hg^\omega$ visits this set (maybe earlier if the visit is inside $\pi'_{m}\ldots\pi'_{n}$). By definition of $d$, the constraints imposed by $x_i, y_i$, $i \in \Pi$, are simultaneously satisfied by $\pi$ and $hg^\omega$.

Let us now show how to get a PSPACE procedure for the constraint problem. As just explained, we have to guess a lasso $\pi = hg^\omega$ in $\mathbb{C}(\lambda^*)$ that starts in $x_0^C$, satisfies the constraints, and such that $|hg|$ is bounded by
\[L = d + 2 \cdot |\mathbb{C}(\lambda^*)|.\]
We cannot guess $\pi$ entirely and we have to proceed region by region. Suppose that $I(x_0) = J_m$ for some $m \in \{ 1, \ldots, N \}$, and consider the region decomposition $\pi[m]\pi[m+1]\ldots \pi[n]$ of $\pi$, where some sections $\pi[\ell]$ may be empty.

We guess successively the sections $\pi[m]$, $\pi[m+1]$ and so on. To guess $\pi[\ell]$ with $\ell \in \{m,\ldots, n\}$, assuming it is not empty, we guess one by one its vertices that all belong to the same region $V^{J_\ell}$. To guess such a vertex $(v',{(c'_i)}_{i\in \Pi})$, we only have to keep its predecessor $(v,{(c_i)}_{i\in \Pi})$ in memory and to know the value $\lambda^*(v')$ (in a way to compute each $c'_i$ from $c_i$). So, we need to know $\{\lambda^*(v) \mid v \in V^{J_\ell} \}$. By Proposition~\ref{prop:regionalComputation}, we can compute this set in PSPACE\@. Once we move to another region in a way to guess the next section of $\pi$, we can forget this set and compute the new one. We also need to guess which vertex will be the first vertex of $g$.

Notice that any vertex of the counter graph can be encoded in polynomial size memory. Indeed it is composed of a vertex of $V$, a subset $I$ of $\Pi$, and $|\Pi|$ counter values that belong to $\{0,\ldots, \mFR{*}{}\} \cup \{+ \infty\}$  ($\mFR{*}{}$ is at most exponential in the input by Theorem~\ref{thm:bound_on_MR}). Moreover the set $\{\lambda^*(v) \mid v \in V^{J_\ell} \}$ can also be encoded in polynomial size memory since it is composed of $|V|$ values that belong to $\{0,\ldots, \mFR{*}{}\} \cup \{+ \infty\}$.

Recall that the length  $|hg|$ for the guessed lasso $\pi = hg^\omega$ cannot exceed constant $L$. We thus have to compute and store $L$. The computation in PSPACE of $L$ requires the computation of $|\mathbb{C}(\lambda^*)|$, and thus in particular the computation of $\mFR{*}{}$. This is possible thanks to Proposition~\ref{prop:regionalComputation}. And it follows by Corollary~\ref{cor:exponential} that $L$ can be stored in polynomial size memory. In addition to $L$, during the guessing of $\pi$, we also have a counter $C_L$ to count the current length of $\pi$, and for each player $i \in \Pi$ a counter $C_i$ keeping track of the current cost of player $i$ along $\pi$. As $C_L \leq L$ and $C_i \leq L$ for all $i$, all these counters can be also encoded in polynomial size memory.

Finally, we stop guessing $\pi$ when either its length exceeds $L$ or when its currently guessed vertex is equal to the first vertex of cycle $g$ for the second time. In the latter case, we check whether $\pi$ satisfies the constraints, that is, $x_i \leq C_i \leq y_i$ for all $i$. This completes the proof that the given procedure works in PSPACE\@.  \end{proof}

Let us now prove Proposition~\ref{prop:regionalComputation}. We proceed by induction on the steps in the computation of the labeling function $\lambda^*$ and, once a step $k$ is fixed, we proceed region by region, beginning with the bottom region $J_N$ and then proceeding bottom-up by following the total order $J_1 < \cdots < J_N$. Let $X^{J_\ell}$ be a region, we aim at proving that the set $\{ \lambda^{k+1}(v) \mid v \in V^{J_\ell} \}$ and the value $\mFR{k+1}{\geq \ell}$ are both computable in PSPACE (from the previous step $k$). To this aim consider Proposition~\ref{prop:fixpoint} and especially Lemma~\ref{lem:fixpoint-region}. Let $k^*_\ell$ (resp.\ $k^*_{\ell+1}$) be the step where the local fixpoint is reached for region $X^{J_\ell}$ (resp.\ $X^{J_{\ell+1}}$). Recall that $k^*_{\ell+1} < k^*_{\ell}$ and that when $k \leq k^*_{\ell+1}$ (resp.\ $k \geq k^*_{\ell}$), we have that $\lambda^{k+1}(v)=\lambda^{k}(v)$, for each $v \in V^{J_\ell}$. The tricky case in when $k^*_{\ell+1} < k < k^*_{\ell}$. In the latter case, the computation of $\lambda^{k+1}(v)$ from $\lambda^{k}(v)$, for all $v \in V^{J_{n}}$, relies on the computation of the maximal cost, for the player who owns vertex $v$, of the plays of $\Lambda^k(v')$, with $v' \in \Succ(v)$ (see Definition~\ref{def:update}). This will be possible with the same approach as in the proof of Proposition~\ref{prop:PSPACE-easiness}: to guess a lasso in the counter graph $\mathbb{C}(\lambda^k)$ that realizes this maximal cost.

\begin{proof}[Proof of Proposition~\ref{prop:regionalComputation}]
We proceed by induction on $k \in \mathbb{N}$.

\paragraph{Base case.}
 For $k = 0$ we have to prove that for all $J_{\ell}$, $\ell \in \{1, \ldots, N\}$, the set $\{ \lambda^0(v) \mid v \in V^{J_\ell}\}$ and the value $\mFR{0}{\geq \ell}$ can be both computed in PSPACE\@. Given $J_{\ell}$, thanks to Definition~\ref{def:init}, we have that either $\lambda^0(v) = 0$ if $v \in V_i^X$ and $i \in I(v) = J_\ell$ or $\lambda^0(v) = +\infty$ otherwise. Thus $\mFR{0}{\geq l} = 0$. So, we clearly have a PSPACE procedure in this case.

\paragraph{General case.}
Now, assume that for all $n$, $n \in \{0, \ldots, k\}$, and for all $J_{\ell}$, $\ell \in \{1,\ldots, N\}$, the set $\{ \lambda^n(v) \mid v \in V^{J_{\ell}}\}$ and the value $\mFR{n}{\geq \ell}$ can be computed in PSPACE\@. Let us prove that it remains true for $n = k+1$, that is:
\begin{equation} \{\lambda^{k+1}(v) \mid v \in V^{J_{\ell}} \} \text{ and } \mFR{k+1}{\geq \ell} \text{ can be computed in PSPACE } \label{eq:hypInduction} \end{equation}
We proceed by induction on the region $X^{J_\ell}$ (we thus use a double induction, one on the computation steps and the other one on the regions).

\medskip
If $J_{\ell}= J_N$, then it is a bottom region. Then for all $v \in V^{J_N}$, $\lambda^{k+1}(v) = \lambda^0(v)$ and so $\{\lambda^{k+1}(v) \mid v \in V^{J_N}\} = \{\lambda^0(v) \mid v \in V^{J_N} \}$ (local fixpoint $k_N^* = 0$ in Proposition~\ref{prop:fixpoint}). By induction hypothesis we know that this latter set can be computed in PSPACE\@. Moreover, we have that $\mFR{k+1}{\geq N} = \mFR{k+1}{N} = 0$. Therefore Assertion~(\ref{eq:hypInduction})  holds.

\medskip
Now, let $J_{\ell}$ be a region different from $J_N$. If it is a bottom region, we have $\{\lambda^{k+1}(v) \mid v \in V^{J_\ell}\} = \{\lambda^0(v) \mid v \in V^{J_\ell} \}$ as for $J_N$. Moreover $\mFR{k+1}{\ell} = \mFR{0}{\ell} = 0$ and then $\mFR{k+1}{\geq \ell} = \max\{\mFR{k+1}{\ell}, \mFR{k+1}{\geq \ell + 1} \} = \mFR{k+1}{\geq \ell + 1}$. Thus by induction hypothesis, $\{\lambda^{k+1}(v) \mid v \in V^{J_\ell}\}$ and $\mFR{k+1}{\geq \ell}$ can be computed in PSPACE and Assertion~(\ref{eq:hypInduction}) holds.

Let us now suppose that $J_{\ell}$ is not a bottom region. We first recall Proposition~\ref{prop:fixpoint} that states that Algorithm~\ref{algo:lambda} reaches a local fixpoint for each region. Let $k^*_\ell$ (resp.\ $k^*_{\ell+1}$), the step after which the region $X^{J_{ \ell}}$ (resp.\ $X^{J_{ \ell +1}}$) has reached its local fixpoint. Recall that $k^*_{\ell +1} < k^*_{\ell}$. Let us now consider the three cases of Lemma~\ref{lem:fixpoint-region}.

\begin{itemize}
\item If $k \leq k^*_{\ell+1}$, then by Lemma~\ref{lem:fixpoint-region}, the region $X^{J_{\ell+1}}$ has not reached its local fixpoint yet. So, it implies that the labeling of the vertices of $V^{J_\ell}$ has not change since initialization. More formally, for all $v \in V^{J_{\ell}}$, $\lambda^{k+1}(v) = \lambda^0(v)$. Thus $\mFR{k+1}{\ell} = 0$ and then $\mFR{k+1}{\geq \ell} = \mFR{k+1}{\geq \ell + 1}$. So by induction hypothesis, both $\{\lambda^{k+1}(v) \mid v \in V^{J_{\ell}}\}$ and $\mFR{k+1}{\geq \ell}$ can be computed in PSPACE showing (\ref{eq:hypInduction}).

\item If $k > k^*_{\ell}$, then by Lemma~\ref{lem:fixpoint-region}, the local fixpoint of region $X^{J_\ell}$ is reached, that is, $\{\lambda^{k+1}(v) \mid v \in V^{J_{\ell}} \} = \{\lambda^{k}(v) \mid v \in V^{J_{\ell}} \} $ and $\mFR{k+1}{\geq \ell} =\mFR{k}{\geq \ell}$ and~\eqref{eq:hypInduction} holds by induction hypothesis.

(Notice the little difference with the inequalities given in Lemma~\ref{lem:fixpoint-region}: we here consider case $k > k^*_{\ell}$ instead of case $k \geq k^*_{\ell}$ of Lemma~\ref{lem:fixpoint-region}. Indeed when $k = k^*_\ell$, we still need to compute $\lambda^{k+1}$ to realize that the fixpoint is effectively reached. This is thus postponed in the next case.)

\item It remains to consider the case $k^*_{\ell+1} < k \leq k^*_{\ell}$, which is the most difficult one. In this case, either the values of $\lambda^k(v)$ and $\lambda^{k+1}(v)$ differ for some $v \in V^{J_\ell}$, or $k=k^*_\ell$ and we realize that the local fixpoint is effectively reached on $X^{J_\ell}$.

Let us first show that the set $\{\lambda^{k+1}(v) \mid v \in V^{J_\ell}\}$ can be computed in PSPACE\@. Given $v \in V^{J_\ell}$, if $v \in V_i^X$, then by Definition~\ref{def:update}, $\lambda^{k+1}(v)$ is either equal to 0 (if $i\in J_\ell$) or it is computed from the values $\sup\{\Cost_i(\rho) \mid \rho \in \Lambda^k(v') \}$, $v' \in \Succ(v)$. Thus we have to show that each value $\sup\{\Cost_i(\rho) \mid \rho \in \Lambda^k(v') \}$ can be computed in PSPACE\@.

To this aim, we use Proposition~\ref{prop:sup_cost_bound} stating that if \[\sup\{\Cost_i(\rho) \mid \rho \in \Lambda^k(v') \} = c,\] then there exists a valid path $\pi=hg^\omega$ in $\C{k}$ starting in $v'^C$ that is a lasso with $|hg|$ bounded by $2 \cdot |\C{k}|$ and such that its corresponding play $\rho$ in $\extGame$ belongs to $\Lambda^{k}(v)$ and has its cost $\Cost_i(\rho)$ equal to $c$. Notice that we can restrict $\lambda^k$ to $\lambda^k_{\geq \ell}$ since any path beginning in $V^{J_\ell}$ only visits vertices of $V^{\geq J_{\ell}}$. We thus work in the counter graph $\mathbb{C}(\lambda^k_{\geq \ell})$ restricted to $V^{\geq \ell}$ (see Remark~\ref{rem:countergraph}). We will guess such a lasso $\pi$ as done in the proof of Proposition~\ref{prop:PSPACE-easiness}. More precisely, as the value of $c$ is unknown, we will first test in PSPACE whether there exists such a lasso $\pi$ with cost $c = +\infty$. If yes, we are done, otherwise by considering increasing values $d \in \mathbb{N}$, we will test in PSPACE whether there exists a lasso $\pi$ with cost $\geq d$. The last value $d$ for which the answer is yes is the required cost $c$.
\begin{itemize}
        \item Let us detail the case $c = +\infty$. Similarly to the proof of Proposition~\ref{prop:PSPACE-easiness}, we guess the lasso $\pi = hg^\omega$ by on one hand guessing the first vertex of its cycle $g$ and on the other hand guessing the sections $\pi[\ell], \pi[\ell+1], \dots$ of $\pi$, one by one, from the starting vertex $v'^C \in V^{J_\ell}$. Given $m \geq \ell$, section $\pi[m]$ is guessed vertex by vertex, where each vertex belongs to $V^{J_m}$. By induction hypothesis, we can compute the set $\{\lambda^k(u) \mid u \in V^{J_m}\}$ and the value $\mFR{k}{\geq m}$ in PSPACE, and by Corollary~\ref{cor:bound_every_k} the previous set and each vertex of the counter graph $\mathbb{C}(\lambda^k_{\geq \ell})$ can be encoded in polynomial size memory. Once $\pi[m]$ is guessed we can forget the set $\{\lambda^k(u) \mid u \in V^{J_m} \}$ and guess the next section of $\pi$.

        Additionally, to check that the length of $hg$
        %the guessed lasso $\pi$
        does not exceed the constant $L = 2 \cdot |\mathbb{C}(\lambda^k_{\geq \ell})|$, we have to compute $\mFR{k}{\geq \ell}$ that can be done in PSPACE by induction hypothesis. Thus we can also compute $L$ in PSPACE and by Corollary~\ref{cor:bound_every_k} we can encode it in polynomial size memory. We also store a counter $C_L \leq L$ (which is initialized to $0$ and incremented by 1 each time we guess a new vertex of $\pi$) and a boolean $C_i$ (which is equal to $0$ as long as player $i$ does not visit his target set and is equal to $1$ after this visit).

        We stop if either if $C_L > L$ or if we have found the lasso. In the latter case we check whether $C_i = 0$ or not.

        \item Let us now proceed to the case $c < +\infty$.  We check whether there exists a lasso $\pi=hg^\omega$ with $|hg|$ bounded by $L = 2 \cdot |\mathbb{C}(\lambda^k_{\geq \ell})|$ with cost (for player~$i$) $\geq d$, for values $d=0, d=1, \ldots$, until the answer is no. The last value $d$ for which the answer is yes is equal to $c$.

        We know by Corollary~\ref{cor:bound_sup}, that the size of $c$ cannot exceed an exponential in the size of the input. Thus each tested value $d$ can be encoded in polynomial size memory.

        To check in PSPACE the existence of such a lasso with length of $hg$ bounded by $L$ with cost $\geq d$, we proceed exactly as for the case $c = +\infty$ except that instead of the boolean $C_i$ we keep a counter $C_i \leq L$ that keeps track of the cost of player $i$ for the lasso that we are guessing.

     \end{itemize}

     Notice that the depth of the recursion of the procedure is at most $|\Pi|$ by the $I$-monotonicity property (\ref{eq:increasing}) (any path crosses at most $|\Pi|$ regions). And at each recursion level, only a polynomial size information is stored. This concludes the proof that the set $\{\lambda^{k+1}(v) \mid v \in V^{J_\ell} \}$ can be computed in PSPACE\@.

     To conclude the case $k^*_{\ell+1} < k \leq k^*_{\ell}$, it remains to prove that the value $\mFR{k+1}{\geq \ell}$ can also be computed in PSPACE\@. Clearly we can compute $\mFR{k+1}{\ell}$ in PSPACE as we now have $\{\lambda^{k+1}(v) \mid v \in V^{J_{\ell}} \}$ in memory and we can compute $\mFR{k+1}{\geq \ell+1}$ by induction hypothesis. Notice that both values can be encoded in polynomial memory size by Corollary~\ref{cor:bound_every_k}. So, as $\mFR{k+1}{\geq \ell} = \max \{ \mFR{k+1}{ \ell}, \mFR{k+1}{\geq \ell +1}\}$, we can compute $\mFR{k+1}{\geq \ell}$ in PSPACE and Assertion~\eqref{eq:hypInduction} is proved. \qedhere
\end{itemize}
\end{proof}

\subsection{PSPACE hardness}

We now prove that the constraint problem is PSPACE-hard for quantitative reachability games.

\begin{prop}%
	\label{prop:reachPSPACE-hard}
	The constraint problem for quantitative reachability games is PSPACE-hard.
\end{prop}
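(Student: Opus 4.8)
The plan is to reduce from a known PSPACE-complete problem whose structure naturally matches the sequential, multiplayer character of subgame perfect equilibria. The most promising candidate is the quantified Boolean formula problem (QBF / TQBF): given a fully quantified Boolean formula $\exists x_1 \forall x_2 \exists x_3 \cdots Q x_m\, \varphi$ with $\varphi$ in CNF, decide whether it is true. The idea is to encode the alternation of quantifiers by the alternation of turns between two (or a small constant number of) players in the arena, so that the existence of an SPE satisfying a cost constraint mirrors the truth of the formula. Existentially quantified variables will be assigned by a player whose rational (equilibrium) behavior corresponds to choosing a satisfying assignment, while universally quantified variables are controlled by an adversarial player (or encoded via a gadget) whose incentives force every choice to be considered, exactly as the threat of deviation in an SPE rules out ``cheating'' on universal branches.

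First I would build, from the QBF instance, an arena in which a play proceeds in two phases. In the \emph{assignment phase}, the vertices are arranged in $m$ blocks, one per variable $x_j$; the block for $x_j$ is owned by the ``existential player'' if $x_j$ is existentially quantified and by a distinct ``universal gadget'' otherwise, and the two outgoing edges from the block encode the choice $x_j = \mathsf{true}$ or $x_j = \mathsf{false}$. After all variables are assigned, the play enters a \emph{verification phase} in which a dedicated clause-checking player (or an environment player) may challenge the assignment by pointing to a clause of $\varphi$; target sets $F_i$ and the resulting costs are designed so that the challenger can reach his target cheaply precisely when the chosen assignment falsifies some clause. The threshold vectors $x, y$ in the constraint problem are then chosen so that an SPE with cost in $[x,y]$ exists if and only if the existential player can guarantee a satisfying assignment against all universal choices, i.e.\ if and only if the QBF is true. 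The key design principle is to exploit Proposition~\ref{prop:SPE-vwSPE} (SPE $=$ very weak SPE for reachability games): because subgame perfection requires rationality in \emph{every} subgame, a universal player cannot be ``threatened'' into a particular choice, so every branch of the universal quantifier is effectively explored, which is exactly what distinguishes $\forall$ from $\exists$.

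The main obstacle, and the step I expect to require the most care, is engineering the cost structure so that the \emph{subgame-perfect} rationality of each player faithfully enforces the intended quantifier semantics, with no unintended equilibria. In particular I must ensure: (i) that the existential player is genuinely incentivized to produce a satisfying assignment rather than settling for some other outcome (this is typically arranged by giving him a large or infinite cost on any play that reaches the clause-checker's ``falsified'' target, and a small cost otherwise); (ii) that the universal player's local incentives do not collapse his choice to a single branch — here I would make the universal gadget's vertices owned by players who are \emph{indifferent} among the branches (e.g.\ by already having visited, or never being able to reach, their own target), so that subgame perfection forces all continuations to be consistent regardless of the universal bit; and (iii) that the gadgets add only polynomially many vertices and that the thresholds $x,y$ are polynomially encodable. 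Once these incentive alignments are verified branch by branch, the correctness argument amounts to an induction on the quantifier prefix matching the recursive truth definition of QBF, and the construction is clearly polynomial, yielding PSPACE-hardness and, combined with Proposition~\ref{prop:PSPACE-easiness}, the PSPACE-completeness asserted in Theorem~\ref{thm:main}.
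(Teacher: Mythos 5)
Your high-level route is the same as the paper's: a polynomial reduction from QBF in which an assignment phase lets an existential player and a universal player alternately fix the variables, followed by a clause-verification phase, with thresholds chosen so that an SPE meeting the cost constraint exists if and only if the formula is true. However, the two concrete mechanisms you sketch for what you yourself identify as the hard step (the incentive engineering) are exactly where the proposal breaks down. First, your verification device --- ``the challenger can reach his target cheaply precisely when the chosen assignment falsifies some clause'' --- cannot be implemented as stated: falsification of a clause $C_j$ means that \emph{none} of the literal vertices of $C_j$ was visited during the assignment phase, and this is a negative (co-reachability) condition that no reachability target set can express. The paper inverts the roles: each clause $C_i$ gets its own player whose target $F_i$ is the set of literal vertices of $C_i$ together with an escape vertex $t_i$, so that \emph{satisfaction} is detected positively; the verification phase is a chain $c_1 \to \cdots \to c_n \to t_{n+1}$ in which clause player $i$ continues when his clause is satisfied (his target is already reached, so he loses nothing), while subgame perfection forces him to bail out to $t_i$ in every subgame where his clause is falsified (continuing would cost him $+\infty$). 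Falsification thus becomes visible only through the equilibrium behaviour of these auxiliary players, not through anyone's target set directly.

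Second, and more seriously, your design principle (ii) --- making the universal gadget's vertices owned by players who are \emph{indifferent} among the branches --- would destroy the reduction. The cost constraint in the constraint problem applies only to the single outcome of the SPE, not to plays in off-path subgames; if the universal player is indifferent, every choice of his is rational, so an SPE may let him ``cooperate'' with the existential player, and an SPE with the required cost would exist as soon as \emph{some} combination of existential and universal choices satisfies $\phi$. The reduction would then decide a purely existential (satisfiability-like) question rather than the QBF, collapsing $\forall$ to $\exists$. What is needed is the option you mention in passing in your first paragraph and then abandon: an \emph{adversarial} universal player. In the paper, player $n+2$ has target $F_{n+2}=\{t_1,\ldots,t_n\}$, reached exactly when some clause player bails out; hence whenever some universal response to the existential player's strategy yields a falsifying assignment, player $n+2$ has a strictly profitable deviation (his cost drops from $+\infty$ to a finite value, via the clause player's forced bail-out), so no SPE whose outcome reaches $t_{n+1}$ can exist. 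This chain --- falsifying branch $\Rightarrow$ clause player bails there by subgame perfection $\Rightarrow$ universal player deviates profitably $\Rightarrow$ the cost constraint fails --- is the mechanism that makes the universal quantifier genuine, and it is absent from your plan.
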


The proof of this proposition is based on a polynomial reduction from the QBF problem which is PSPACE-complete. It is close to the proof given in~\cite{DBLP:journals/corr/abs-1806-05544} for the PSPACE-hardness of the constraint problem for Boolean games with reachability objectives --- either a player reaches his objective or not. The main difference is here to manipulate costs instead of considering qualitative reachability.

 The QBF problem is to decide whether a fully quantified Boolean formula $\psi$ is true. The formula $\psi$ can be assumed to be in prenex normal form $Q_1x_1Q_2x_2 \ldots Q_{m}x_m \,\phi(X)$ such that the quantifiers are alternating existential and universal quantifiers ($Q_1 = \exists$, $Q_2 = \forall$, $Q_3 = \exists, \ldots$), $X = \{ x_1, x_2, \ldots, x_m\}$ is the set of quantified variables, and $\phi(X) = C_1 \wedge \ldots \wedge C_n$ is an unquantified Boolean formula over $X$ equal to the conjunction of the clauses $C_1, \ldots, C_n$.

Such a formula $\psi$ is true if there exists a value of $x_1$ such that for all values of $x_2$, there exists a value of $x_3$ $\ldots$, such that the resulting valuation $\nu$ of all variables of $X$ evaluates $\phi(X)$ to true. Formally, for each odd (resp.\ even) $k$, $1 \leq k \leq m$, let us denote by $f_k: {\{0,1\}}^{k-1} \rightarrow \{0,1\}$ (resp.\ $g_k: {\{0,1\}}^{k-1} \rightarrow \{0,1\}$) a valuation of variable $x_k$ given a valuation of previous variables $x_1, \ldots, x_{k-1}$\footnote{Notice that $f_1: \emptyset \rightarrow \{0,1\}$.}. Given theses sequences $f= f_1, f_3, \ldots$ and $g = g_2, g_4, \ldots$, let us denote by $\nu = \nu_{(f,g)}$ the valuation of all variables of $X$ such that $\nu(x_1) = f_1$, $\nu(x_2) = g_2(\nu(x_1))$, $\nu(x_3) = f_3(\nu(x_1)\nu(x_2))$, $\ldots$. Then

\begin{center}
$\psi = Q_1x_1Q_2x_2 \ldots Q_{m}x_m \,\phi(X)$ is true \\
if and only if \\
there exist $f= f_1, f_3, \ldots$ such that for all $g = g_2,g_4, \ldots$, the valuation $\nu_{f,g}$ evaluates $\phi(X)$ to true.
\end{center}

\begin{proof}[Proof of Proposition~\ref{prop:reachPSPACE-hard}]
Let us detail a polynomial reduction from the QBF problem to the constraint problem for quantitative reachability games. Let $\psi = Q_1x_1Q_2x_2 \ldots Q_{m}x_m \,\phi(X)$ with $\phi(X) = C_1 \wedge \ldots \wedge C_n$ be a fully quantified Boolean formula in prenex normal form. We build the following quantitative reachability game $\mathcal{G}_{\psi} = (\Pi,V,{(V_i)}_{i\in \Pi}, E, {(\Cost_i)}_{i\in \Pi}, {(F_i)}_{i\in \Pi})$ (see Figure~\ref{figure:reachPSPACEh}):

\begin{itemize}
\item	the set $V$ of vertices:
	\begin{itemize}
	\item for each variable $x_k \in X$ under quantifier $Q_k$, there exist vertices $x_k$, $\neg x_k$ and $q_k$;
	\item for each clause $C_k$, there exist vertices $c_k$ and $t_k$;
	\item there exists an additional vertex $t_{n+1}$;
	\end{itemize}
\item the set $E$ of edges:
	\begin{itemize}
	\item from each vertex $q_k$ there exist an edge to $x_k$ and an edge to $\neg x_k$;
	\item from each vertex $x_k$ and $\neg x_k$, there exists an edge to $q_{k+1}$, except for $k=m$ where this edge is to $c_1$;
	\item from each vertex $c_k$, there exist an edge to $t_k$ and an edge to $c_{k+1}$, except for $k=n$ where there exist an edge to $t_n$ and an edge to $t_{n+1}$;
	\item there exists a loop on each $t_k$;
	\end{itemize}
\item the set $\Pi$ of $n+2$ players:
	\begin{itemize}
		\item each player $i$, $1 \leq i \leq  n$, owns vertex $c_i$;
		\item player $n+1$ (resp.\ $n+2$) is the player who owns the vertices $q_i$ for each existential (resp.\ universal) quantifier $Q_i$;
		\item as all other vertices have only one outgoing edge, it does not matter which player owns them;
	\end{itemize}
\item each function $\Cost_i$ is associated with the target set $F_i$ defined as follows:
	\begin{itemize}
	\item for all $i$, $1 \leq i \leq n$, $F_i = \{ \ell \in V \mid \ell \text{ is a literal of clause } C_i \}  \cup \{t_i\}$;
	\item $F_{n+1} = \{t_{n+1}\}$;
	\item $F_{n+2}=\{t_1,\ldots, t_n\}$.
	\end{itemize}
\end{itemize}

\begin{figure}[h!]
\centering
\scalebox{0.8}{
    \begin{tikzpicture}
    \node[draw] (Q1) at (0,0){$q_1$};
    \node[draw] (Q2) at (3,0){$q_2$};
    \node[draw] (Q3) at (6,0){$q_3$};
    \node (empty) at (7,0){$\ldots$};
    \node[draw] (Qm) at (8,0){$q_m$};
    \node[draw] (C1) at (11,0){$c_1$};
    \node (empty2) at (12.25,0){$\ldots$};
    \node[draw](Cn) at (13.5,0){$c_n$};
    \node[draw](T0) at (15,0){$t_{n+1}$};

    \node[draw] (x1) at (1.5,1.5){$x_1$};
    \node[draw] (nx1) at (1.5,-1.5){$\neg x_1$};

    \node[draw] (x2) at (4.5,1.5){$x_2$};
    \node[draw] (nx2) at (4.5,-1.5){$\neg x_2$};

    \node[draw] (xm) at (9.5, 1.5){$x_m$};
    \node[draw] (nxm) at (9.5, -1.5){$\neg x_m$};

    \node[draw] (T1) at (11,2){$t_1$};
    \node[draw] (Tn) at (13.5,2){$t_n$};

    \draw[->] (Q1) to (x1);
    \draw[->] (Q1) to (nx1);

    \draw[->] (x1) to (Q2);
    \draw[->] (nx1) to (Q2);

    \draw[->] (Q2) to (x2);
    \draw[->] (Q2) to (nx2);

    \draw[->] (x2) to (Q3);
    \draw[->] (nx2) to (Q3);

    \draw[->](Qm) to (xm);
    \draw[->](Qm) to (nxm);

    \draw[->](xm) to (C1);
    \draw[->](nxm) to (C1);

    \draw[->](C1) to (empty2);
    \draw[->](empty2) to (Cn);

    \draw[->](Cn) to (T0);

    \draw[->](C1) to (T1);
    \draw[->](Cn) to (Tn);

    \draw[->] (T1) edge [loop above] (T1);
    \draw[->] (Tn) edge [loop above] (Tn);
    \draw[->] (T0) edge [loop right] (T0);
    \end{tikzpicture}
}
\caption{Reduction from the formula $\psi$ to the quantitative reachability game $\mathcal{G}_{\psi}$}%
\label{figure:reachPSPACEh}
\end{figure}
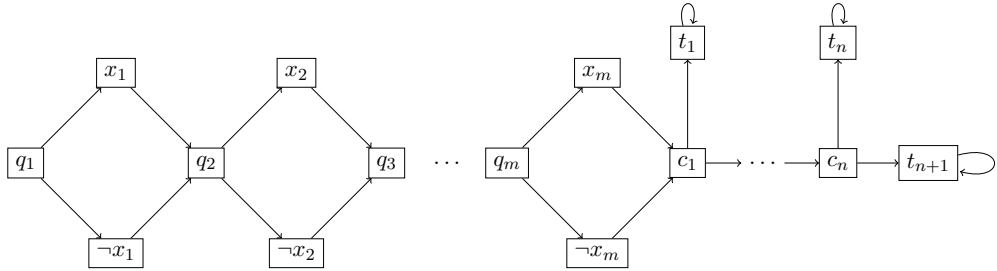

\begin{rem}%
\label{rem:winningStrat}
(1) Notice that a sequence $f$ of functions $f_k: {\{0,1\}}^{k-1} \rightarrow \{0,1\}$, with $k$ odd, $1 \leq k \leq m$, as presented above, can be translated into a strategy $\sigma_{n+1}$ of player~$n+1$ in the initialized game $(\mathcal{G}_{\psi},q_1)$, and conversely. Similarly, a sequence $g$ of functions $g_k: {\{0,1\}}^{k-1} \rightarrow \{0,1\}$, with $k$ even, $1 \leq k \leq m$ is nothing else than a strategy $\sigma_{n+2}$ of player~$n+2$. (2) Notice also that if $\rho$ is a play in $(\mathcal{G}_{\psi},q_1)$, then $\Cost_{n+1}(\rho) < +\infty$ if and only if $\Cost_{n+2}(\rho) = +\infty$. Moreover, suppose that $\rho$ visits $t_{n+1}$, then for all $i$, $1 \leq i \leq n$, {$\Cost_{i}(\rho) \leq 2\cdot m$} if and only if for all $i$, $1 \leq i \leq n$, $\rho$ visits a vertex that is a literal of $C_i$, and that is the case if and only if there is a valuation of all variables of $X$ that evaluates $\phi(X)$ to true.
\end{rem}

 Consider the game $\mathcal{G}_{\psi}$ and the bound $x = (2\cdot m, \ldots, 2\cdot m, 2\cdot m +n, +\infty)$. Both  can be constructed from $\psi$ in polynomial time. Let us now show that $\psi$ is true if and only if there exists an SPE in $(\mathcal{G}_{\psi},q_1)$ with cost $\leq x$.

\medskip

 $(\Rightarrow)$ Suppose that $\psi$ is true. Then there exists a sequence $f$ of functions $f_k: {\{0,1\}}^{k-1} \rightarrow \{0,1\}$, with $k$ odd, $1 \leq k \leq m$, such that for all sequences $g$ of functions $g_k: {\{0,1\}}^{k-1} \rightarrow \{0,1\}$, with $k$ even, $1 \leq k \leq m$, the valuation $\nu_{f,g}$ evaluates $\phi(X)$ to true. We define a strategy profile $\sigma$ as follows:
	\begin{itemize}
		\item for player~$n+1$, his strategy $\sigma_{n+1}$ is the strategy corresponding to the sequence $f$ (by Remark~\ref{rem:winningStrat});
		\item for player~$n+2$, his strategy is an arbitrary strategy $\sigma_{n+2}$; we denote by $g$ the corresponding sequence $g_k: {\{0,1\}}^{k-1} \rightarrow \{0,1\}$, with $k$ even, $1 \leq k \leq m$ 	(by Remark~\ref{rem:winningStrat});
		\item for each player~$i$, $1 \leq i \leq n$,
		\begin{itemize}
			\item if $hv \in \Hist_{i}(q_1)$ with $v = c_i$, is consistent with $\sigma_{n+1}$, then $\sigma_i(hv)= c_{i+1}$ if $i \neq n$ and $t_{n+1}$ otherwise
			\item else $\sigma_i(hv) = t_i$.
		\end{itemize}
	\end{itemize}

\noindent
Let us first prove that the play $\rho = \outcome{\sigma}{q_1}$ has a cost $\leq x = (2\cdot m,\ldots, 2 \cdot m, 2\cdot m +n,+\infty)$. By hypothesis, the valuation $\nu_{f,g}$ evaluates $\phi(X)$ to true, that is, it evaluates all clauses $C_i$ to true. Hence by Remark~\ref{rem:winningStrat}, $\rho$ visits a vertex of $F_i$ for all $i$, $1 \leq i \leq n$, and by definition of $\sigma$, $\rho$ eventually loops on $t_{n+1}$. It follows that $\Cost_i(\rho) \leq 2 \cdot m$ for all $i$, $1 \leq i \leq n$, $\Cost_{n+1}(\rho) \leq 2 \cdot m + n$, and $\Cost_{n+2}(\rho) = + \infty$. Hence $\Cost(\rho) \leq x$.

Let us now prove that $\sigma$ is an SPE, that is, for each history $hv \in \Hist(q_1)$, there is no one-shot deviating strategy in the subgame $(\mathcal G_{\psi\restriction h},v)$ that is profitable to the player who owns vertex $v$ (by Proposition~\ref{prop:SPE-vwSPE}). This is clearly true for all $v = t_i$, $1 \leq i \leq n+1$, since $t_i$ has only one outgoing edge. For the other vertices $v$, we study two cases:

\begin{itemize}

	\item $hv$ is consistent with $\sigma_{n+1}$:  Notice that $hv$ is maybe not consistent with $\sigma_{n+2}$, but with another arbitrary strategy $\sigma'_{n+2}$. Let $g'$ be the sequence corresponding to $\sigma'_{n+2}$ by Remark~\ref{rem:winningStrat}. By hypothesis, the valuation $\nu_{f,g'}$ evaluates $\phi(X)$ to true. Hence as explained previously for $\outcome{\sigma}{q_1}$, the cost of play $\rho = h \outcome{\rest{\sigma}{h}}{v}$ is such that $\Cost_i(\rho) \leq 2 \cdot m$ for all $i$, $1 \leq i \leq n$, $\Cost_{n+1}(\rho) \leq 2 \cdot m + n$, and $\Cost_{n+2}(\rho) = +\infty$. If $v$ belongs to player $i$, $1 \leq i \leq n$, this player has no incentive to deviate since he has already visited his target set along $h$ and thus cannot decrease his cost. If $v$ belongs to player $n+1$, a one-shot deviation will lead to a play eventually looping on $t_1$  by definition of $\sigma$, thus leading to a cost $+ \infty$ which is not profitable for player~$n+1$. Finally if $v$ belongs to player~$n+2$, a one-shot deviation will not decrease his cost by definition of $\sigma$ ($\sigma_{n+2}$ is arbitrary).

	\item $hv$ is not consistent with $\sigma_{n+1}$: Suppose that $v = c_k$. Then by definition of $\sigma$, the play $h \outcome{\rest{\sigma}{h}}{v}$ eventually loops on $t_k$ leading to a cost $\leq 2 \cdot m + k$ for player~$k$. In fact, if player $k$ has already seen his target set along $hv$, using a one-shot deviation in the subgame $(\mathcal G_{\psi\restriction h},v)$ leads to the same cost for him. Otherwise, it leads to a cost equal to $+\infty$:
	indeed, deviating here means going to the state $c_{k+1}$ (or $t_{n+1}$ if $k=n$, which leads to a cost of $+ \infty$ for player $n$), and since $hv$ is not consistent with $\sigma_{n+1}$, by definition of $\sigma_{k+1}$, player $k+1$ will choose to go to $t_{k+1}$.
	This player has thus no incentive to deviate.

Suppose that $v = q_k$. Then by definition of $\sigma$, the play $\rho = h \outcome{\rest{\sigma}{h}}{v}$ eventually loops on $t_1$. It follows that $\Cost_{n+1}(\rho) = +\infty$ and $\Cost_{n+2}(\rho) = 2\cdot m +1$. Due to the structure of the game graph, $2 \cdot m + 1$ is the smallest cost that player $n+2$ is able to obtain. So if $q_k \in V_{n+2}$, player~$n+2$ has no incentive to deviate. And if $q_k \in V_{n+1}$, player~$n+1$ could try to use a one-shot deviating strategy, however the resulting  play still eventually loops on $t_1$.
	\end{itemize}

\noindent
This proves that $\sigma$ is an SPE and we already showed that its cost was bounded by $x$.

\medskip

$(\Leftarrow)$ Suppose that there exists an SPE $\sigma$ in $(\mathcal{G}_{\psi},q_1)$ with outcome $\rho$ such that $\Cost(\rho) \leq x$. In particular $\Cost_{n+1}(\rho) < +\infty$. By Remark~\ref{rem:winningStrat}, it follows that $\Cost_{n+2}(\rho) = +\infty$. We have to prove that $\psi$ is true. To this end, consider the sequence $f$ of functions  $f_k: {\{0,1\}}^{k-1} \rightarrow \{0,1\}$, with $k$ odd, $1 \leq k \leq m$, that corresponds to strategy $\sigma_{n+1}$ of player~$n+1$ by Remark~\ref{rem:winningStrat}. Let us show that for all sequences $g$ of functions $g_k: {\{0,1\}}^{k-1} \rightarrow \{0,1\}$, with $k$ even, $1 \leq k \leq m$, the valuation $\nu_{f,g}$ evaluates $\phi(X)$ to true.

By contradiction assume that it is not the case for some sequence $g'$ and consider the related strategy $\sigma'_{n+2}$ of player~$n+2$ by Remark~\ref{rem:winningStrat}. Notice that $\sigma'_{n+2}$  is a finitely deviating strategy. Let us consider the outcome $\rho'$ of the strategy profile $(\sigma'_{n+2},\sigma_{-(n+2)})$ from $q_1$. As $\Cost_{n+2}(\rho) = +\infty$, we must have $\Cost_{n+2}(\rho') = +\infty$, otherwise $\sigma'_{n+2}$ is a profitable deviation for player~$n+2$ whereas $\sigma$ is an SPE\@. It follows that $\Cost_{n+1}(\rho') < +\infty$ by Remark~\ref{rem:winningStrat}, that is, $\rho'$ eventually loops on $t_{n+1}$.

Now recall that the valuation $\nu_{f,g'}$ evaluates $\phi(X)$ to false, which means that it evaluates some clause $C_k$ of $\phi(X)$ to false. Consider the history $hc_k < \rho'$. As strategy $\sigma'_{n+2}$ only acts on the left part of the underlying graph of  $\mathcal G_{\psi}$, we have $\rho' = \outcome{\sigma'_{n+2},\sigma_{-(n+2)}}{q_1} = h \outcome{\rest{\sigma}{h}}{c_k}$. In the subgame $(\mathcal{G}_{\psi\restriction h},c_k)$, the outcome of $\rest{\sigma}{h}$ gives a cost of $+\infty$ to player~$k$ because $\rho' = h \outcome{\rest{\sigma}{h}}{c_k}$ does not visit $t_k$ and $\nu_{f,g'}$ evaluates $C_k$ to false. In this subgame, player~$k$ has thus a profitable one-shot deviation that consists to move to $t_k$. It follows that $\sigma$ is not an SPE which is impossible. Then $\psi$ is true.\end{proof}

\section{Conclusion}
%-------------------

In this paper, we study multiplayer quantitative reachability games played on finite directed graphs, where the objective of each player is to reach a specified target set of vertices as quickly as possible. The players want to minimize their cost computed as the number of edges followed to reach their target set. For this class of games, we focus on the constraint existence problem for SPEs that is to decide whether there exists an SPE in which the cost of each player lies between two given bounds. It was known that there always exists an SPE in quantitative reachability games and that the constraint existence problem is decidable. We here prove that this problem is PSPACE-complete.

We summarize the main steps to obtain the PSPACE-membership:
\begin{enumerate}
\item Very weak SPEs are equivalent to SPEs for quantitative reachability games. This allows us to manipulate a simpler concept in our proofs and algorithm.
\item Instead of working with the original game, we work with its extended version such that the vertices of the game are enriched with the set of players that have already visited their target set. The original game and its extented game are equivalent with respect to SPE outcomes.
\item We provide a characterization of the set of plays that are SPE outcomes in the extended game as well as an algorithm to construct this set. This characterization makes use of a labeling function assigning an integer bound to each vertex. It imposes constraints on plays of the extended game such that the plays satisfying those constraints are exactly the SPE outcomes. Our algorithm builds this labeling as a fixpoint obtained iteratively from an initial labeling. We expect that this new concept of labeling will be helpful for studying SPEs for other classes of games.
\item Given the extended game and a labeling function, we construct the counter graph such that its infinite paths coincide with the plays that satisfy the constraints imposed by the labeling. Therefore when the labeling is the fixpoint computed by our algorithm, the infinite paths of the counter graph are exactly the SPE outcomes of the extended game. We establish the crucial property that this counter graph has an exponential size.
\item To establish the PSPACE membership of the constraint problem, a careful inspection of the computation shows that one can decide the existence of an infinite lasso of polynomial size in the counter graph that satisfies the bounds provided in the constraint problem.
\end{enumerate}

\noindent
The PSPACE-hardness of the constraint problem is easily obtained from the proof proposed for the constraint problem for qualitative reachability games.

\bigskip

\section*{Acknowledgments} We thank the anonymous reviewers for their helpful comments and feedback.

%% in general the use of bibtex is encouraged
\bibliographystyle{alpha}
\bibliography{biblio}

\end{document}